\documentclass[11pt, one column]{article}
\usepackage{lmodern}
\usepackage{microtype}

\usepackage[letterpaper,margin=1in]{geometry}
\usepackage[T1]{fontenc}
\usepackage[utf8]{inputenc}
\usepackage[english]{babel}
\usepackage{microtype}
\usepackage{setspace}
\usepackage{lineno}          
\usepackage{amsmath}
\usepackage{hyperref}
\usepackage{amssymb}
\usepackage{amsthm}
\usepackage{mathtools}          
\usepackage{bm}                 
\usepackage[linesnumbered,ruled,vlined]{algorithm2e} 
\usepackage[mathscr]{eucal}     

\usepackage{aliascnt}           
\usepackage{thmtools}           

\usepackage{xcolor}             
\usepackage{soul}               
\usepackage{tcolorbox}
  \tcbuselibrary{skins}
\usepackage{mdframed}
\usepackage{framed}

\usepackage{float}              
\usepackage{environ}            
\usepackage{xspace}             
\usepackage{cancel}             
\usepackage{csquotes}           
\usepackage[colorinlistoftodos]{todonotes}
\usepackage{marginnote}
\usepackage{tikz}
\usepackage{placeins}
\usetikzlibrary{arrows.meta}

\usepackage{graphicx}
\usepackage{tabularx}
\usepackage{subcaption}
\usepackage{multirow}
\usepackage{array}
\usepackage{enumerate}
\usepackage{xstring}
\usepackage{comment}
\usepackage{listings}
\usepackage{cleveref}

\usepackage[maxbibnames=99, minbibnames=99]{biblatex}
\theoremstyle{plain}

\newtheorem{lemma}{Lemma}[section]
\newtheorem{corollary}[lemma]{Corollary}
\newtheorem{proposition}[lemma]{Proposition}

\newtheorem{claim}[lemma]{Claim}
\newtheorem*{claim*}{Claim}

\theoremstyle{definition}
\newtheorem{definition}[lemma]{Definition}

\newtheorem{observation}[lemma]{Observation}

\theoremstyle{remark}

\newtheorem*{remark*}{Remark}

\newtheorem*{note*}{Note}

\makeatletter
  \newcommand{\@problemtitle}{}
  \newcommand{\@probleminput}{}
  \newcommand{\@problemquestion}{}

  \NewEnviron{problem}{%
    \problemtitle{}\probleminput{}\problemquestion{}%
    \BODY
    \par\addvspace{.5\baselineskip}
    \noindent
    \begin{tabularx}{\textwidth}{@{\hspace{\parindent}} l X}
      \multicolumn{2}{@{\hspace{\parindent}}l}{\textbf{\@problemtitle}} \\
      \textbf{Input:} & \@probleminput \\
      \textbf{Goal:}  & \@problemquestion
    \end{tabularx}
    \par\addvspace{.5\baselineskip}
  }
\makeatother

\makeatletter
  \providecommand*{\cupdot}{%
    \mathbin{\mathpalette\@cupdot{}}%
  }
  \newcommand*{\@cupdot}[2]{%
    \ooalign{%
      $\m@th#1\cup$\cr
      \hidewidth$\m@th#1\cdot$\hidewidth
    }%
  }
\makeatother

\newcommand{\gs}{\mathsf{GS}}
\newcommand{\ws}{\mathsf{WS}}
\newcommand{\witN}{\mu_{WS}}

\newcommand{\vis}{\mathsf{Vis}}
\newcommand{\grN}{\mu_{GS}}

\newcommand{\bd}{\partial}
\newcommand{\inte}{\mathsf{iN}}
\newcommand{\ex}{\mathsf{eX}}

\newcommand{\wvp}{\mathsf{WV-Polygon}}
\newcommand{\WVP}{\mathscr{WVP}}
\newcommand{\SGP}{\mathsf{SGP}}
\newcommand{\AGP}{\mathsf{AGP}}
\newcommand{\ATG}{\mathsf{ATG}}

\newcommand{\RA}{\mathsf{rAnchor}}
\newcommand{\LA}{\mathsf{\ell Anchor}}

\newcommand{\I}{\mathtt{Int}}
\newcommand{\po}{\mathcal{P}}
\newcommand{\mo}{\mathcal{M}}

\newcommand{\OO}{\mathcal{O}}

\newcommand{\wv}{\mathcal{WV}}
\newcommand{\eb}{{e}_{\mathtt{base}}}

\newcommand{\OPT}{\mathrm{OPT}}
\newcommand{\spr}{\mathsf{\Pi_{v}}}
\newcommand{\spl}{\mathsf{\Pi_{u}}}
\newcommand{\SI}{\I_{\mathrm{strong}}}
\newcommand{\shpt}{\mathtt{shadowPoint}}

\newcommand{\np}{{\sf NP}\xspace}

\newcommand{\nph}{{\sf NP}-hard\xspace}
\newcommand{\npc}{{\sf NP}-complete\xspace}

\newcommand{\vsg}{\mathsf{VSG}}

\newcommand{\clv}[2][]{\ensuremath{#2_{#1}^{*}{_{\mathtt{clone}}}}}

\newcommand{\clvs}[2][]{\ensuremath{#2_{#1}{_{\mathtt{clone}}}}}

\crefname{theorem}{Theorem}{Theorems}       \Crefname{theorem}{Theorem}{Theorems}
\crefname{lemma}{Lemma}{Lemmas}             \Crefname{lemma}{Lemma}{Lemmas}
\crefname{claim}{Claim}{Claims}             \Crefname{claim}{Claim}{Claims}
\crefname{observation}{Observation}{Observations}
\Crefname{observation}{Observation}{Observations}
\crefname{definition}{Definition}{Definitions}
\Crefname{definition}{Definition}{Definitions}
\crefname{corollary}{Corollary}{Corollaries}
\Crefname{corollary}{Corollary}{Corollaries}
\crefname{proposition}{Proposition}{Propositions}
\Crefname{proposition}{Proposition}{Propositions}
\crefname{example}{Example}{Examples}       \Crefname{example}{Example}{Examples}
\crefname{note}{Note}{Notes}                \Crefname{note}{Note}{Notes}
\crefname{figure}{Figure}{Figures}         \Crefname{figure}{Figure}{Figures}
\crefname{subfigure}{Figure}{Figures}      \Crefname{subfigure}{Figure}{Figures}
\crefname{section}{Section}{Sections}      \Crefname{section}{Section}{Sections}
\crefname{equation}{Equation}{Equations}   \Crefname{equation}{Equation}{Equations}
\crefname{algocf}{Algorithm}{Algorithms}   \Crefname{algocf}{Algorithm}{Algorithms}
\crefformat{theorem}{Theorem~#2#1#3}        \Crefformat{theorem}{Theorem~#2#1#3}
\crefformat{lemma}{Lemma~#2#1#3}            \Crefformat{lemma}{Lemma~#2#1#3}
\crefformat{claim}{Claim~#2#1#3}            \Crefformat{claim}{Claim~#2#1#3}
\crefformat{observation}{Observation~#2#1#3}
\Crefformat{observation}{Observation~#2#1#3}

\newif\ifauthorcomments
\authorcommentstrue   

\newcounter{shouvik}

\newcounter{sasanka}

\newcounter{udvas}

\title{Perfectly Guarding Straits: Exact Algorithms for \\ Weak Visibility Polygons}
\author{%
  Shouvik Mondal\thanks{\texttt{shouvik.math@gmail.com}}
  \and
  Udvas Das\thanks{\texttt{udvas.das@gmail.com}}
  \and
  Sasanka Roy\thanks{\texttt{sasanka.ro@gmail.com}}
}
\date{Advanced Computing and Microelectronics Unit,\\
      Indian Statistical Institute, Kolkata, India}

\begin{document}

\maketitle
\bigskip

\begin{abstract}
The {\sc Art Gallery Problem} ($\AGP$) asks for the fewest guards that see all of a simple polygon. It is $\exists\mathbb{R}$-complete \cite{DBLP:journals/jacm/AbrahamsenAM22}, hence \nph{}. We show that for a \textit{particular class of polygons}, confining guards to a single edge makes $\AGP$\ exactly and efficiently solvable. We call this the \emph{{\sc Strait Guarding Problem}} ($\SGP$). Its input is a \emph{weak visibility polygon} ($\WVP$): a simple polygon where every point is seen from some point of one fixed edge, the \emph{base}. $\SGP$\ places the fewest guards on the base that jointly see the whole polygon. First, a structural fact: guards on $\eb$ that cover the boundary already cover the entire interior, turning a two-dimensional covering problem into a one-dimensional one. Our main result is the \emph{Witness-Guard Algorithm}, which solves $\SGP$\ exactly in $\OO\bigl((n + \OPT\cdot\rho)\,(\log n + \log\OPT)\bigr)$ time, where $\rho$ is the number of reflex vertices in the $\WVP$ and $\OPT$ is the minimum number of guards. It is output-sensitive and certifies optimality by a size-$\OPT$ witness set derived from its output. We also study the guarding-the-vertex version and prove a tight $\Theta(n\log n)$ bound, with the lower bound following from Sorting. As a corollary of $\SGP$, we obtain two results for altitude terrains ($\ATG$), a special case that $\SGP$ generalizes. We give a linear-time perfect-guarding algorithm, improving the previous $\OO(n^{2}\log n)$ bound \cite{DAESCU201922}. We also resolve the problem of \cite{DAESCU201922} on the minimum guarding altitude, in $\OO(nk+k^{2}\log k)$ time, which is an improvement on $\OO(k^{2}\lambda_{k-1}(n)\log n)$ bound \cite{DBLP:conf/iwoca/KangKA25}.
\end{abstract}


\section{Introduction}\label{sec:Intro}

Picture a strait, a narrow body of water with one coastline on each side. A country wants to monitor the entire strait, but it can only place guards, cameras, or radar posts along its own shore, not out in the water or on the far coast. The same picture shows up whenever a sensor is stuck on a rail, a camera is bolted to one wall, or a border patrol agent can only walk along one fence line. In every case, the region to be watched is large and two-dimensional, but the guards live on a single edge of it. We call this the \emph{{\sc Strait Guarding Problem}} ($\SGP$). The classical {\sc Art Gallery Problem} ($\AGP$) allows guards to stand anywhere inside the region and is $\exists\mathbb{R}$-complete, so no efficient exact algorithm is expected for it under any standard assumption. We show that pinning the guards to a single edge is not merely a realistic restriction. It is exactly the restriction that turns an intractable problem into one with a clean, exact, polynomial-time algorithm.

The \emph{{\sc Art Gallery Problem}} \cite{orourke1987artgallery}, introduced by Klee in 1973, asks a simple question: how many guards are needed to watch over an art gallery? Two points are \emph{mutually visible} if the straight-line segment joining them lies entirely inside $P$. Formally, for a simple polygon $P$ representing the floor plan of a gallery, a \emph{guard} is a point $g \in P$ that sees every point within its line of sight, and a \emph{guard set} is a collection of guards that together see every point of $P$. The $\SGP$ asks for a guard set of minimum size.

Chv\'{a}tal \cite{DBLP:journals/jctb/Chvatal75} proved the foundational combinatorial bound: $\lfloor n/3 \rfloor$ guards always suffice for an $n$-vertex polygon. The computational version, finding a minimum guard set for a given polygon, was shown to be \nph{} by Lee and Lin \cite{DBLP:journals/tit/LeeL86}. In a landmark result, Abrahamsen, Adamaszek, and Miltzow \cite{DBLP:journals/jacm/AbrahamsenAM22} established $\exists\mathbb{R}$-completeness, placing the problem strictly harder than any problem in \np{} under standard complexity assumptions. In response, a long line of work has developed approximation algorithms: Ghosh gave an $\OO(\log n)$-approximation for vertex and edge guards~\cite{GHOSH2010718}, and Bonnet and Miltzow gave an $\OO(\log \OPT)$-approximation for point guards under mild assumptions on the input \cite{bonnetmiltzow2017approx}. This sharp intractability makes identifying tractable subclasses a central goal in computational geometry. Recent work keeps finding such subclasses in unexpected places: requiring each guard to cover a single contiguous stretch of the boundary was very recently shown to make full polygon guarding solvable in polynomial time \cite{biniaz2025contiguous}, even though the unconstrained boundary-guarding problem is \nph{} \cite{laurentini1999guarding}, and in fact $\exists\mathbb{R}$-hard \cite{stade2025pointboundary}. Our paper adds a new tractable subclass to this list, this time by restricting where guards may stand rather than what they must cover.

\subsection{Weak Visibility Polygons.}

\begin{figure}[t]
\centering
\begin{subfigure}[b]{0.48\linewidth}
  \centering
  \includegraphics[width=\linewidth]{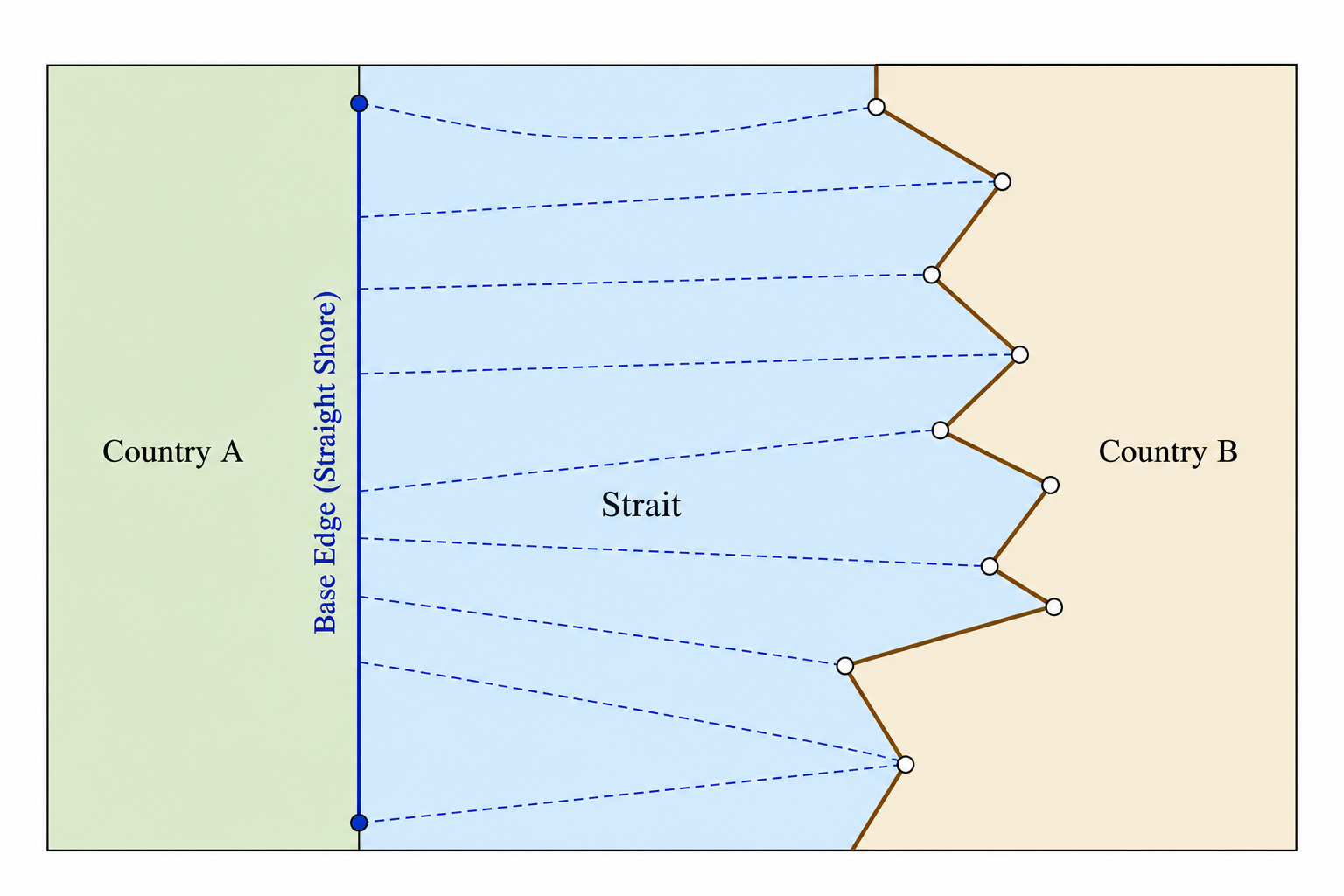}
  \subcaption{The {\sc Strait Guarding Problem}: guards placed on one straight
  shoreline, the base edge $\eb$, must monitor the entire strait.}\label{fig:strait-shore}
\end{subfigure}
\hfill
\begin{subfigure}[b]{0.48\linewidth}
  \centering
  \begin{tikzpicture}[xscale=2.9, yscale=1.8,
      vtx/.style={circle, fill=blue!70!black, inner sep=1.1pt},
      vlbl/.style={font=\footnotesize, text=blue!50!black},
      ray/.style={dashed, gray!60!black}]

    \coordinate (v1) at (-0.5, 2);
    \coordinate (v2) at ( 0.5, 2);
    \coordinate (v3) at ( 0.06, 1);
    \coordinate (v4) at ( 1, 0);
    \coordinate (v5) at (-1, 0);
    \coordinate (v6) at (-0.06, 1);

    \fill[brown!12] (v1) -- (v2) -- (v3) -- (v4) -- (v5) -- (v6) -- cycle;
    \draw[brown!55!black, thick] (v1) -- (v2) -- (v3) -- (v4) -- (v5) -- (v6) -- cycle;

    \draw[ray] (-0.25, 0) -- (0.13, 2);   
    \draw[ray] (-0.25, 0) -- (0.37, 2);   
    \node[vtx] at (-0.25, 0) {};
    \node[font=\footnotesize, below=2pt] at (-0.25, 0) {$g$};

    \draw[red!65!black, line width=1.4pt] (0.13, 2) -- (0.37, 2);
    \node[font=\scriptsize, text=red!65!black] at (0.25, 1.86) {$O(\delta)$};

    \draw[{Stealth[length=3.5pt]}-{Stealth[length=3.5pt]}, thin]
          (-0.06, 1.09) -- (0.06, 1.09)
          node[midway, above=0.5pt, font=\scriptsize] {$\delta$};

    \node[vtx] at (v1) {}; \node[vtx] at (v2) {}; \node[vtx] at (v3) {};
    \node[vtx] at (v4) {}; \node[vtx] at (v5) {}; \node[vtx] at (v6) {};
    \node[vlbl, above left]  at (v1) {$v_1$};
    \node[vlbl, above right] at (v2) {$v_2$};
    \node[vlbl] at ( 0.22, 1) {$v_3$};
    \node[vlbl, below] at (v4) {$v_4$};
    \node[vlbl, below] at (v5) {$v_5$};
    \node[vlbl] at (-0.22, 1) {$v_6$};

    \node[font=\footnotesize, below=2pt] at (0.3, 0) {$\eb$};
  \end{tikzpicture}
  \subcaption{A six-vertex $\wvp{}$ in which even the single edge $v_1 v_2$ requires arbitrarily many guards on $\eb$: every guard $g$ sees $v_1 v_2$ only through the gap of width $\delta$ between $v_6$ and $v_3$, so it sees a portion of length $O(\delta)$, and the number of guards grows as the gap narrows (\cref{prop:single-edge-unbounded}).}\label{fig:hourglass-intro}
\end{subfigure}
\caption{The two faces of the {\sc Strait Guarding Problem}: (a) the model, guards confined to one straight shoreline of a strait; (b) the difficulty, a six-vertex $\wvp{}$ whose optimum guard number is unbounded in $n$.}
\label{fig:strait-intro}
\end{figure}

One well-studied, practically motivated subclass is that of \emph{weak visibility polygons} ($\wvp${}s). The idea of visibility from a single edge goes back to Avis and Toussaint \cite{avis1981visibility}, who gave the first optimal algorithm for testing it. A simple polygon $P$ is a $\wvp{}$ with respect to an edge $e$, called the \emph{base edge} $\eb$, if every point of $P$ is visible from at least one point on $\eb$. Ghosh \cite{ghosh2007vis} gives a modern treatment of $\wvp{}$s and related visibility structures. A natural real-world instance is a \emph{coastal strait}: think of $\eb$ as one shoreline and the polygon as the navigable waterway enclosed between two coasts. Surveillance units deployed along that shore must collectively monitor the entire strait. Despite this structural restriction, $\SGP$ is non-trivial: the reflex vertices of $P$ cast shadow regions that interact intricately over $\eb$.

\Cref{fig:strait-intro} illustrates the two faces of the problem.
\Cref{fig:strait-shore} shows the strait itself: guards on one straight
shoreline, the base edge $\eb$, must monitor the entire waterway.
\Cref{fig:hourglass-intro} shows why this is substantially harder than
guarding an altitude terrain~\cite{DAESCU201922}, where one guard per terrain edge always
suffices. In the six-vertex hourglass polygon drawn there, every guard on
$\eb$ sees the top edge $v_1 v_2$ only through the gap of width $\delta$
between $v_6$ and $v_3$, so it sees a portion of that edge whose length is
proportional to $\delta$. Hence, the number of guards needed for this one
edge grows in proportion to $1/\delta$ as the gap narrows, while the polygon
keeps its six vertices, and no function of $n$ alone can bound the optimum.
We make this precise in \cref{prop:single-edge-unbounded}, Section ~\ref{sec:Algorithm}.

\subsection{Prior Work on WV-Polygon Guarding.}
Prior work on guarding $\wvp{}$s has mostly focused on approximation algorithms. Bhattacharya, Ghosh, and Roy \cite{BHATTACHARYA2017109} showed that guarding a $\wvp$ with point guards at arbitrary positions is \nph{}, and gave a $6$-approximation for vertex guarding; likewise, Ashur, Filtser, and Katz \cite{DBLP:journals/jocg/AshurFK21} improved this to a $(2+\varepsilon)$-approximation. A PTAS for vertex guarding was announced by Katz \cite{DBLP:journals/corr/abs-1803-02160} and developed in full generality by the authors of \cite{DBLP:journals/comgeo/AshurFKS22}, through a local-search technique for a class of graphs that includes the visibility graphs of both $\wvp{}$s and terrains. The authors of \cite{duraisamy2022halfguarding} studied a related \emph{half-guarding} model on $\wvp{}$s and terrains, in which each guard sees in one direction only, and the parameterized complexity of guarding simple polygons was studied in \cite{DBLP:journals/dcg/AgrawalKLSZ24}. To our knowledge, no \emph{exact polynomial-time} algorithm for guarding a $\wvp{}$ was known before this work, even for the restricted case of guards confined to $\eb$.

\subsection{Witness Sets.}

A fundamental lower-bounding tool is the \emph{witness set} \cite{DBLP:journals/ijcga/AmitMP10}: a set $W \subseteq P$ such that no single guard sees two distinct points of $W$ at once. Every witness set of size $k$ certifies that at least $k$ guards are necessary, so the largest such set, the \emph{witness number} $\mu_{WS}(P)$, lower-bounds the guard number. For general polygons, computing the witness number is hard. The authors of \cite{DBLP:journals/corr/abs-2605-01592} showed that the problem lies in $\mathsf{NP} \cap \mathsf{XP}$. They also studied its \emph{discrete} variant, in which witnesses must be chosen from a fixed finite set of points, and showed it to be \npc{}, even when the input is restricted to rectilinear polygons with holes; the authors of \cite{DBLP:journals/corr/abs-2511-10224} had already given a polynomial-time algorithm for it on simple polygons. For monotone polygons specifically, \cite{DBLP:journals/corr/abs-2511-10224} developed exact and approximation algorithms for the witness set problem itself, by exploiting the monotone structure.

\subsection{Monotone Mountain Polygons.}
A closely related class is that of \emph{monotone mountain polygons} \cite{DAESCU201922}: $x$-monotone polygons in which one boundary chain is a single horizontal segment, the base, and the other is an arbitrary monotone chain, the terrain, above it. Guarding a terrain with guards placed on the terrain itself is \nph{}~\cite{kingkrohn2011terrain}, so, here too, the complexity turns on where the guards may stand. The authors of \cite{DAESCU201922} studied \emph{altitude terrain guarding}, which places the minimum number of guards on a fixed horizontal line to see the entire terrain. They proved that every monotone mountain is \emph{perfect}, meaning its witness number equals its guard number, and gave an optimal $\OO(n)$ algorithm for the fixed-altitude case. They left open the following problem: given $k$, what is the minimum altitude $y^{*}$ at which $k$ guards on a horizontal line suffice to see the entire mountain? Very recently, Kang, Kim, and Ahn \cite{DBLP:conf/iwoca/KangKA25} solved this question, giving an $\OO(k^2 \lambda_{k-1}(n) \log n)$-time algorithm for even $k \geq 2$ and an $\OO(k^2 \lambda_{k-2}(n) \log n)$-time algorithm for odd $k \geq 3$, where $\lambda_s(n)$ is the length of the longest $(n, s)$-Davenport--Schinzel sequence \cite{DBLP:books/daglib/0080837}.

\subsection{Our Contributions.}
We study the {\sc Art Gallery Problem} for $\wvp{}$s with guards restricted to $\eb$. Our results stand in sharp contrast to the general $\exists\mathbb{R}$-hardness: for this natural case, we obtain an exact polynomial-time algorithm. The work is motivated by, and generalizes, that of \cite{DAESCU201922}.


\begin{enumerate}
  \item \textbf{Boundary--Interior Equivalence.}

  \vspace{-4pt}

  \begin{restatable}{theorem}{thmbdryequiv}\label{thm:boundary-equivalence}
    Guarding the boundary of a $\wvp$ is equivalent to guarding the entire polygon, when the guards are located at the base of the $\wvp$, i.e., $\gs(\bd(\wv),\eb) = \gs(\wv,\eb)$.
  \end{restatable}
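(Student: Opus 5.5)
The claim to prove is that any finite set $G\subset\eb$ seeing all of $\bd(\wv)$ also sees every interior point; the converse inequality is trivial. It then follows that a minimum boundary guard set is a minimum guard set for the whole polygon, so $\gs(\bd(\wv),\eb)=\gs(\wv,\eb)$. I will argue by contradiction: assume some interior point $p$ is seen by no guard of $G$, and produce a boundary point that is also unseen.

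\textbf{Step 1: a visible base point.} Since $\wv$ is weakly visible from $\eb$, $p$ sees some point $q\in\eb$. The set $V_p\cap\eb$ of base points seen by $p$ is a (possibly degenerate) segment $[a,b]\subseteq\eb$; this uses the standard fact that the visibility of a point on a segment of a simple polygon is convex. Its complement $\eb\setminus[a,b]$ may still contain guards, but those do not see $p$.

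The key observation is that every guard $g\in G$ lies outside $[a,b]$, since otherwise $g$ would see $p$.

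\textbf{Step 2: pushing $p$ to the boundary.} Shoot the ray from $q$ through $p$ and continue it beyond $p$ until it first hits $\bd(\wv)$, at a point $p'$. The whole segment $qp'$ lies in $\wv$. I then want to show that $p'$ is unseen by every $g\in G$.

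Suppose instead that some $g$ sees $p'$. The points $g$, $q$ and $p'$ span a triangle $T$ with $p$ lying on the side $qp'$. The sides $gp'$ and $qp'$ lie in $\wv$, and the side $gq$ lies on $\eb$, which is also in $\wv$. Because $\wv$ is simple, a closed curve inside it bounds a region inside it: the polygon has no holes, so any part of $\bd(\wv)$ entering the interior of $T$ would have to cross $\bd T$ or else form an island. Hence $T\subseteq\wv$. Since $T$ is convex and contains both $g$ and $p$, the segment $gp$ lies in $T\subseteq\wv$, so $g$ sees $p$. This is a contradiction, so $p'\in\bd(\wv)$ is unguarded, against the assumption that $G$ guards $\bd(\wv)$.

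In fact Step 2 alone suffices, since it only needs the existence of $q$, not the interval structure from Step 1. The argument is therefore short, and it could even be given from any guard direction.

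\textbf{Main obstacle.} The delicate step is justifying $T\subseteq\wv$, including degenerate configurations. These include $g$, $q$ and $p'$ being collinear, $p'$ being a reflex vertex, and a side of $T$ grazing boundary vertices. The clean route is that a simple polygon is simply connected, so the interior of the Jordan curve $\bd T$, which lies in $\wv$, is contained in $\wv$. I would state this as a small lemma and then handle collinearity separately: if $g$, $q$ and $p'$ are collinear, then $p$ lies on $gp'$ or on $gq\subseteq\eb$, and the conclusion is immediate.
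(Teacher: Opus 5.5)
Your proof is correct and follows the paper's argument: take a base point $q$ seeing $p$, extend the segment $qp$ past $p$ to a boundary point $p'$, take a guard $g$ seeing $p'$, and conclude that $g$ sees $p$ because the triangle $gqp'$ lies in $\wv$. Your justification that $T\subseteq\wv$ is more rigorous than the paper's informal ``any hill must cross a side'' step: you place all three sides in $\wv$, including $gq\subseteq\eb$, invoke simple connectivity, and treat the collinear case separately. Step~1 is unnecessary, as you already note.
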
 

  \vspace{-8pt}
  
  \textbf{Brief Overview:} For any $\wvp{}$ $\wv$ with base $\eb$, we prove that any guard set covering $\bd(\wv)$ automatically covers the interior \textbf{(\cref{thm:boundary-equivalence}, \cref{sec:wvguardingequivalence})}. This equivalence is specific to guards confined to $\eb$; it does not extend to the general {\sc Art Gallery Problem} on $\wvp{}$s, where guards may lie anywhere inside the polygon \cite{DBLP:journals/jocg/AshurFK21}. Reducing the problem to a purely boundary task is the foundation of our approach and is key to our main result.
  
  \item \textbf{Finite Guardability and the Witness-Guard Algorithm.}

  Our \textit{main result} is the following.

  \vspace{-4pt}

  \begin{restatable}{theorem}{MainResult}\label{thm:main}
    The \textit{{\sc Strait Guarding Problem}}, $ \SGP $, for a weak visibility polygon $\wv$ can be solved in $\OO\bigl((n + \OPT\cdot\rho)\,(\log n + \log\OPT)\bigr)$ time, where $n = |V(\wv)|$, $\rho$ is the number of reflex vertices and $\OPT$ is the cardinality of the minimum-sized guard set.    
  \end{restatable}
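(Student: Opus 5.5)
By \cref{thm:boundary-equivalence} it suffices to guard $\bd(\wv)$ with points of $\eb$, so the plan is a one-dimensional greedy sweep along the boundary. I will parametrize $\eb$ as an interval $[0,1]$ from its endpoint $u$ to $v$. For each boundary point $p$, the set of base points that see $p$ is one interval $I(p)\subseteq\eb$. This holds because the visible part of a line segment from a point in a simple polygon is connected. Walking the chain $\bd(\wv)\setminus\eb$ from $v$ to $u$, the left and right endpoints $\ell(p),r(p)$ of $I(p)$ are piecewise-continuous functions of $p$. Their breakpoints occur only where the segments from $p$ to the base endpoints, or the shortest paths $\spl,\spr$ from $p$ to $u$ and $v$, change combinatorially, i.e.\ at reflex vertices. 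I will first compute the shortest-path trees from $u$ and $v$ (linear time after triangulation, Guibas--Hershberger style). This gives, for every boundary edge, a subdivision into $O(n)$ pieces in total on which $\ell$ and $r$ are each determined by a single anchor reflex vertex (the $\LA/\RA$ of the piece). On each piece $I(p)$ is then the projection of $p$ through its anchor onto $\eb$, a monotone rational function of the position along the edge.

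The covering problem becomes: choose the fewest points $g_1<\dots<g_k$ on $\eb$ such that every boundary point $p$ has some $g_i\in I(p)$. This is the classical stabbing of a continuous family of intervals, and greedy is optimal. Place $g_1$ at the minimum of $r(p)$ over all $p$, discard every $p$ with $\ell(p)\le g_1$, and repeat on the surviving points. The proof of optimality is the usual exchange argument. For the certificate, at step $i$ the point $w_i$ attaining the minimum $r$ among the remaining points has $\ell(w_i)>g_{i-1}$ and $r(w_i)=g_i$. Hence the intervals $I(w_1),\dots,I(w_k)$ are pairwise disjoint, so no single guard sees two of the $w_i$, and $\{w_i\}$ is a witness set of size $\OPT$.

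The work is in implementing each greedy step efficiently. I will keep the pieces in a priority structure keyed by the minimum of $r$ over each piece's surviving sub-part. After placing $g_i$, each piece $\sigma$ is cut to $\{p\in\sigma:\ell(p)>g_i\}$. Since $\ell$ is monotone on a piece (projection through a fixed reflex vertex), this cut is a single parameter obtained in $O(1)$ by back-projecting $g_i$ through the anchor. Pieces that become empty are deleted once, for a total of $O(n\log n)$. A piece whose cut changes must have its key recomputed. The delicate point is bounding how many pieces change per step. I will argue that only pieces whose $\LA$ is a reflex vertex with a partially-cut piece can be affected, giving at most $\rho$ updates per guard, each costing $O(\log n+\log\OPT)$ in the heap. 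Summing yields $\OO\bigl((n+\OPT\cdot\rho)(\log n+\log\OPT)\bigr)$.

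I expect this amortization to be the main obstacle: showing that after each guard only $O(\rho)$ pieces need re-keying, and that all other pieces are either entirely killed or untouched. First I will group pieces by their left anchor and show that for a fixed anchor the surviving pieces form a contiguous run. This needs monotonicity of $\ell$ along the boundary between consecutive anchor changes, which I hope to extract from the weak-visibility structure used in \cref{thm:boundary-equivalence}. Termination with finite $\OPT$ follows because every $I(p)$ has positive length except at finitely many degenerate points, which are handled by closure.
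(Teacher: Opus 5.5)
Your route is correct in outline, but it differs genuinely from the paper's. The paper discretizes. It keeps a heap of up to $n+\OPT\cdot\rho$ candidate intervals: the vertices, plus the points $b_j$ hit by shooting from each new guard through every visible reflex vertex. It admits an interval if it is unstabbed or meets $G_{i-1}$ only in its left endpoint $g_{i-1}$. It proves the lower bound with a clone-based witness set and perfectness of interval graphs (\cref{lemma:opt}), and it needs a separate argument (\cref{lemma:complete}) that no non-candidate boundary point is missed.

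You instead run the greedy on the whole continuous family, encoded by the $O(n)$ pieces of the two shortest path maps restricted to $\bd(\wv)$. Note that the breakpoints are where extensions of shortest-path-tree edges meet the boundary, not the reflex vertices themselves. Coverage is then automatic from the invariant that every $p$ with $\ell(p)\le g_i$ is seen by $g_1,\dots,g_i$. Optimality is greedy-stays-ahead: for any guard set $o_1<\dots<o_m$ one gets $o_j\le g_j$ inductively, hence $m\ge k$.

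The two algorithms are closely linked. Your new cut point on the piece anchored at reflex vertex $a$ is exactly the paper's $b_j$ on $\overrightarrow{g_i a}$. The paper's ``touches $g_{i-1}$ only at its left end'' rule is a discrete proxy for your open surviving set $\{p:\ell(p)>g_{i-1}\}$. Your version buys a much shorter correctness proof and a heap of size $O(n)$, giving even $\OO((n+\OPT\cdot\rho)\log n)$. The paper's version buys an explicit certificate.

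Three things need repair.

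First, the minimum of $r$ over the surviving points is generally not attained. On a piece, $\ell$ and $r$ are central projections through anchors lying between the edge and $\eb$, so they move in the same direction. On a partially cut piece the infimum of $r$ is therefore approached at the cut point, where $\ell=g_{i-1}$. On the edge $AB$ of the hourglass of \cref{prop:single-edge-unbounded}, every round after the first is of this kind. So define $g_i$ as an infimum; stays-ahead is unaffected. Your certificate sentence, however, is false as written: the point with $r=g_i$ has $\ell=g_{i-1}$, so its interval meets $\I(w_{i-1})$. A clean fix, replacing the paper's clone readjustments, is to choose witnesses backwards. Set $S_j=\{p:\ell(p)>g_{j-1}\}$, take any $w_k\in S_k$, then choose $w_j\in S_j$ with $r(w_j)<\ell(w_{j+1})$. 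Such $w_j$ exists because $\inf_{S_j}r=g_j<\ell(w_{j+1})$, and the resulting intervals are pairwise disjoint.

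Second, the $O(\rho)$ re-key bound you flag has a short proof. A piece whose surviving part shrinks without vanishing contains a point with $\ell(p)=g_i$. If its left anchor is the reflex vertex $a$, that point lies on the ray from $g_i$ through $a$ and is (generically) the first boundary point the ray meets beyond $a$. So each reflex anchor contributes at most one such piece, and pieces with $\ell\equiv u$ all die at $g_1$. You can locate these pieces in any of three ways:
\begin{itemize}
\item by ray shooting;
\item by per-anchor lists in angular order around $a$ with a monotone pointer (the monotonicity of $\ell$ comes from this angular order, not from \cref{thm:boundary-equivalence});
\item lazily, since cuts only shrink surviving sets and stored keys therefore only underestimate.
\end{itemize}

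Third, termination needs $\OPT<\infty$, i.e.\ \cref{lem:finite-guardable}. Positive length of the $\I(p)$ alone does not exclude infinitely many pairwise disjoint intervals accumulating. Once $\OPT$ is finite, stays-ahead bounds the number of rounds by $\OPT$.
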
   

  \vspace{-8pt} 
  
  \textbf{Brief Overview:} The optimum guard number $\gs(\wv,\eb)$ is not bounded by any function of $n$ alone. We prove this using a six-edge $\wvp$ in \textbf{\cref{prop:single-edge-unbounded}}. Even so, we prove that every $\wvp$ is guardable from $\eb$ by finitely many guards; the argument rests on the compactness of a closed, bounded segment as a subspace of the line, through the Heine--Borel theorem \textbf{(\cref{lem:finite-guardable})}. Our algorithm rests on a witness set built from intervals on $\eb$: a collection of intervals of which no single point is pierced twice, so that no single guard sees two witnesses. This collection defines an interval graph, and by the perfectness of interval graphs \cite{GOLUMBIC20041}, the maximum independent set and the minimum clique cover have equal size. This equality allows us to derive a maximum witness set and output a minimum guard set together, thereby establishing optimality via a clean graph-theoretic argument rather than a separate proof. Concretely, we give the \emph{Witness-Guard Algorithm}: a greedy procedure that places each guard at the right endpoint of the uncovered interval with the leftmost right endpoint, then expands the candidate set by shooting rays through every reflex vertex the new guard sees. The algorithm outputs a minimum guard set of size exactly $\OPT$, together with a certifying witness set of equal size, in $\OO\bigl((n + \OPT\cdot\rho)\,(\log n + \log\OPT)\bigr)$ time \textbf{(\cref{lemma:opt}, \cref{lemma:complete}, \cref{thm:main}, Sections ~\ref{sec:Algorithm}, \ref{sec:algocorrectness}, \ref{sec:algoruntime})}.

  \item \textbf{A Tight $\Theta(n\log n)$ Bound for Vertex Guarding.}

  \vspace{-4pt}

  \begin{restatable}{theorem}{vsg}\label{thm:gs-lower-bound}
    For a $\wvp ~ \wv$, guarding the vertices of $\wv$ can be solved in $\Theta(n\log n)$ time.
  \end{restatable}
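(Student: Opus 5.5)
The proof splits into an $\OO(n\log n)$ algorithm and a matching $\Omega(n\log n)$ lower bound by reduction from Sorting in the algebraic decision tree model.

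\textbf{Upper bound.} Since $\wv$ is a $\wvp$ with respect to $\eb$, each vertex $v$ is seen from a set $I(v)\subseteq \eb$. I will argue this set is a single closed interval. Parametrize $\eb$ by $x\in[0,1]$. For guards $g_1<g<g_2$ on $\eb$ that both see $v$, the triangle $g_1 g_2 v$ lies in $\wv$, because the boundary cannot enter it without crossing $\eb$ or the two sight segments. Hence $g$ also sees $v$, and $I(v)=[\ell(v),r(v)]$ is convex.

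All the intervals $I(v)$ can be computed in $\OO(n)$ total time from the shortest path trees from the two endpoints of $\eb$, as in the visibility-from-an-edge framework of Ghosh \cite{ghosh2007vis}. The endpoint $\ell(v)$ is obtained by extending the last edge of the shortest path from the right endpoint of $\eb$ to $v$ until it hits $\eb$, and symmetrically for $r(v)$. Each extension is a constant-time ray shot against the line supporting $\eb$, because the shortest path turns only at reflex vertices that are the extremal occluders.

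Guarding the vertices then becomes the minimum piercing of $n$ intervals on a line. Sort the intervals by right endpoint in $\OO(n\log n)$ time. Then scan greedily: place a guard at the smallest right endpoint among the unpierced intervals, and discard every interval containing it. This greedy is optimal by the standard exchange argument, and the chosen intervals are pairwise disjoint, so they form a witness set of equal size. This matches the interval-graph perfectness reasoning already used for \cref{thm:main}. The sort dominates the running time, giving $\OO(n\log n)$.

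\textbf{Lower bound.} I plan to reduce from Element Uniqueness, or equivalently from sorting, in the algebraic decision tree model. Given reals $x_1,\dots,x_m$ in $(0,1)$, build in $\OO(m)$ time a $\wvp$ over base $\eb=[0,1]\times\{0\}$. Far above the base, place a row of $m$ narrow vertical ``pockets''. Pocket $i$ consists of two reflex corners forming a thin slit, aimed so that the pocket's bottom vertex is visible only from an interval of width $\varepsilon$ centered at $x_i$ on $\eb$.

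The guarding structure then encodes the input. If all $x_i$ are pairwise more than $\varepsilon$ apart, the intervals are disjoint and $m$ guards are needed. If two values coincide, one guard serves both pockets. Hence the optimum answers Element Uniqueness, which has an $\Omega(m\log m)$ lower bound.

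To extract sorting itself, I would also require the output guards in order along $\eb$; this ordered sequence directly yields the sorted order of the $x_i$. This stronger form may be the cleaner statement to use.

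\textbf{Main obstacle.} The hard step is making the gadget geometry valid and simultaneously ensuring two things: that every point of the polygon, including the pocket walls, remains weakly visible from $\eb$, and that the remaining vertices add only guards independent of the input. Slits aimed at arbitrary $x_i$ must not occlude each other. My first attempt would be to stack the pockets at strictly increasing heights along a convex upper chain, each tilted toward its $x_i$. The vertices inside pocket $i$ would be made visible only through its own slit, while all other vertices of the construction would be seen from a fixed interval near one endpoint of $\eb$, costing just one extra guard. To avoid the issue of choosing a suitable $\varepsilon$, which depends on the input, the slits could instead be given zero width, so that each pocket vertex is seen from exactly the point $x_i$.
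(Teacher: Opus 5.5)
Your upper bound matches the paper's argument: the triangle argument gives convexity of $\I(v)$, the two shortest-path trees give all intervals in $\OO(n)$, and greedy piercing (equivalently, a minimum clique cover) costs $\OO(n\log n)$. One label is swapped. Extending the last edge of the shortest path from the \emph{right} endpoint of $\eb$ yields $r(v)$, which is the paper's $\RA$ convention, not $\ell(v)$.

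The genuine gap is in the lower bound. With windows of width $\varepsilon>0$ centred at the $x_i$, two pocket intervals meet exactly when $|x_i-x_j|\le\varepsilon$. So the optimum drops whenever two inputs are merely $\varepsilon$-close, not only when they coincide. The guard number therefore decides $\varepsilon$-closeness, and your claim that it answers Element Uniqueness is false for inputs that are distinct but close. Choosing $\varepsilon$ below the minimum gap is circular, because computing that gap already decides Element Uniqueness.

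Your zero-width fallback is not a valid construction. A slit of width zero means its two reflex corners coincide, which pinches the polygon so that it is no longer simple. A single-point interval can only arise from a degenerate collinear configuration, which you would have to build and then re-check for weak visibility. The sorting variant does not help either: the problem does not ask for the guards in order, and $\varepsilon$-windows locate the $x_i$ only up to $\varepsilon$.

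The repair keeps your gadget but changes the source problem. Apply Ben-Or's theorem directly to $W_\varepsilon=\{x\in(0,1)^m : |x_i-x_j|>\varepsilon \text{ for all } i\neq j\}$, with $\varepsilon$ fixed in advance, say $\varepsilon=1/(2m)$. This set has $m!$ connected components, one per ordering, so it gives $\Omega(m\log m)$ with no input-dependent choice.

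For your ``main obstacle'', make each pocket a thin triangle (a spike) glued outward onto a convex core that contains $\eb$, with the spikes pairwise disjoint. Then:
\begin{itemize}
\item the mouth corners and the core vertices see all of $\eb$, so any nonempty guard set covers them;
\item every point of spike $i$ sees every base point in its apex's interval, because a sight line through the mouth splits into a piece in the convex spike and a piece in the convex core;
\item no spike can block another.
\end{itemize}
The optimum is then exactly the piercing number of the apex intervals, which equals $m$ if and only if $x\in W_\varepsilon$.

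This is essentially the paper's spike construction. The difference is that the paper realizes arbitrary prescribed intervals $[l_i,r_i]$ and reduces from the clique-cover number of an interval graph (\cref{lemma:mcc-lower-bound}), so its reduction itself needs no threshold. Note, however, that a spike cannot realize a degenerate interval $[x_i,x_i]$ either. So that lemma must also be instantiated with positive-length intervals, that is, through $\varepsilon$-closeness, exactly as above.
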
 
    
  \vspace{-8pt}
  
  \textbf{Brief Overview:} The core of our technique is a geometric characterization of visibility on the base: using the two shortest-path trees rooted at the endpoints of $\eb$, each boundary point $p$ is assigned a closed interval $\I(p) \subseteq \eb$ such that a guard $g$ sees $p$ if and only if $g \in \I(p)$. This yields a $\Theta(n\log n)$ algorithm for guarding just the vertices of $\wv$: an $\OO(n \log n)$ upper bound from a minimum clique cover of the resulting interval graph, matched by an $\Omega(n \log n)$ lower bound that we prove by a reduction from Sorting \textbf{(\cref{thm:gs-lower-bound}, \cref{sec:vertexguardingWV})}.

  \item \textbf{Corollary: Optimal Perfect Guarding of an Altitude Terrain.}

  \vspace{-4pt}

  \begin{restatable}[Fixed-height perfect guarding]{theorem}{fhpg}\label{thm:fixed-height}
    Let $\mo$ be an $x$-monotone mountain, let $L$ be a fixed altitude line above it, and let $k$ be the minimum number of guards on $L$ that see all of $\mo$. Then a guard set $\{g_1^+,\dots,g_k^+\} \subseteq L$ of size $k$ guarding $\mo$, together with a certifying witness set $\{w_1,\dots,w_k\} \subseteq \partial\mo$ of the same size, can be computed in $\OO(n)$ time.
  \end{restatable}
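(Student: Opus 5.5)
We obtain \cref{thm:fixed-height} by specializing the Witness-Guard Algorithm of \cref{thm:main} to monotone mountains and showing that the specialization runs in linear time.

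\textbf{Reduction to a $\wvp$.} Let $L$ be the altitude line and let $x_{\min}, x_{\max}$ be the $x$-extent of $\mo$. Form the polygon $\wv$ bounded by the terrain chain, the two vertical segments from the terrain's endpoints up to $L$, and the segment of $L$ between $x_{\min}$ and $x_{\max}$. This segment is the base $\eb$. Every point of $\wv$ sees the point of $\eb$ directly above it along a vertical segment, which lies inside $\wv$ by $x$-monotonicity. Hence $\wv$ is a $\wvp$ with base $\eb$. We must also handle guards on $L$ outside $[x_{\min},x_{\max}]$. Any such guard can be moved horizontally to the nearest endpoint of $\eb$ without losing visibility of any terrain point, since the extreme vertical sides only block more from outside. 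So guarding $\mo$ from $L$ is the same as guarding $\partial\wv$ restricted to the terrain from $\eb$. By \cref{thm:boundary-equivalence}, this equals $\gs(\wv,\eb)$ up to the trivially seen top and side edges.

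\textbf{Linear-size interval structure.} For each terrain point $p$, the visibility interval $\I(p)\subseteq\eb$ is bounded by rays from $p$ through the two tangent reflex vertices. These are the last edges of the shortest paths from $p$ to the two endpoints of $\eb$. In a terrain, those tangents are given by the left and right upper convex hulls of the terrain prefix and suffix. They can be computed for all vertices in $\OO(n)$ by a stack-based scan, as in \cite{DAESCU201922}, instead of general shortest-path trees with $\log n$ overhead.

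The key structural fact to establish is that within a single terrain edge, the left and right endpoints of $\I(p)$ vary monotonically as $p$ moves along the edge. Moreover, a single guard suffices per edge: the intersection of the intervals of all points of an edge is nonempty, since the point of $L$ above the edge's supporting line meets visibility constraints monotonically, as the hourglass phenomenon of \cref{prop:single-edge-unbounded} cannot occur in a mountain. Consequently $k\le n$, and each terrain edge contributes only $\OO(1)$ critical intervals. These are the intervals of the edge's critical points, where the binding tangent vertex changes, and there are amortized $\OO(1)$ of them per edge, for $\OO(n)$ overall.

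\textbf{Greedy with linear bookkeeping.} Run the greedy of the Witness-Guard Algorithm: repeatedly place $g_i^+$ at the leftmost right endpoint among uncovered intervals, and record the terrain point realizing it as witness $w_i$. The sorting step is the source of the $\log$ factors in \cref{thm:main}. We avoid it by observing that right endpoints of critical intervals appear in the same $x$-order as their terrain points, up to a merge along the convex-hull stack. This lets us process them in a single left-to-right sweep, with a pointer marking the coverage frontier. The witnesses $w_1,\dots,w_k$ have pairwise disjoint intervals by construction, since each was uncovered when chosen, and they certify optimality through the interval-graph duality used in \cref{lemma:opt}.

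\textbf{Main obstacle.} The main obstacle is proving this monotone ordering, which replaces sorting. It is also what keeps the $\OPT\cdot\rho$ ray-shooting term linear. To handle it, we first show that for terrain points $p$ left of $q$, the left tangent of $q$ is never left of that of $p$. The argument is a standard nesting of upper hulls. Then we charge each new ray shot to a hull vertex that is popped once and never revisited.
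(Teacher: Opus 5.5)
Your proposal has two genuine gaps. Both fall exactly where the paper's proof does its work.

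\textbf{The ordering lemma is false.} Your linear running time rests on the lemma you flag as the main obstacle, and it fails. Take $L$ to be the line $y=10$ and the terrain with vertices $(0,9),(8,5),(9,0),(12,7)$. The valley point $p=(9,0)$ has left tangent $(8,5)$, so $\ell(p)=7$. The point $q=(12,7)$ lies to the right of $p$, sees over $(8,5)$, and has left tangent $(0,9)$, so $\ell(q)\le 0<\ell(p)$.

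Reflecting by $x\mapsto 12-x$ gives $q'=(0,7)$ to the left of $p'=(3,0)$ with $r(q')\ge 12>5=r(p')$. So right endpoints are not in $x$-order either: a high point just before a deep valley sees much farther right than the valley floor. Upper hulls of prefixes are nested, but tangent vertices from different external points are not monotone in the point.

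Consequently, a single left-to-right sweep with a frontier pointer does not implement the leftmost-right-endpoint greedy. Your charging of rays to popped hull vertices also has no basis. Note too that a guard can see $\Theta(\rho)$ reflex vertices, so the expansion step of Algorithm~\ref{alg:base-guard} produces $\Theta(k\rho)$ candidates unless you prove that only one ray per guard matters.

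The paper does not derive linear time from the $\wvp$ algorithm at all. It takes the optimal guards $g_1^+,\dots,g_k^+$ directly from the $\OO(n)$ algorithm of Daescu et al.\ (Pass~1), used as a black box that already copes with this non-monotone interval structure.

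\textbf{The witnesses are not disjoint ``by construction''.} After $g_{i-1}$ is placed, the infimum of right endpoints over uncovered points can be approached at the shadow point of $g_{i-1}$. This is where the ray from $g_{i-1}$ over a reflex vertex meets the terrain. That point is still seen by $g_{i-1}$, with left endpoint exactly $g_{i-1}$, and strictly uncovered points only approach it. So the recorded consecutive witnesses can share the endpoint $g_{i-1}$ and be co-visible. A discrete candidate set with a strict ``uncovered'' test instead skips that point, and then the region just beyond it is mishandled.

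This is exactly what the paper spends \cref{subsec:refine} on, and it is the actual new content of the theorem; the guard set itself is already known.
\begin{itemize}
\item A mirrored Pass~2 supplies candidates $w_i^-$ interleaved with the $w_i^+$ (\cref{lem:interleave}).
\item \textsc{RefineWitnesses} moves each witness off the previous guard's view.
\item A two-pointer amortization bounds the total cost by $\OO(n)$ (\cref{lem:refine-correct}).
\item In the general $\wvp$ setting, the same touching-endpoint issue is why \cref{lemma:opt} needs clone points.
\end{itemize}
Your argument needs such a perturbation step, together with a proof that it can be carried out in linear total time. The reduction to a $\wvp$ and the bound $k\le n-1$ are fine but do not touch either issue.
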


  \vspace{-8pt}
  
  \textbf{Brief Overview:} As a first consequence for altitude terrains, we obtain an asymptotically optimal, linear-time algorithm that guards a terrain from a fixed horizontal line and certifies its own optimality: it returns a minimum guard set together with a matching witness set of equal size, in $\OO(n)$ time. This improves the previously best $\OO(n^{2}\log n)$ bound for producing a certifying witness set \textbf{(\cref{thm:fixed-height}, \cref{sec:fixed-height})} \cite{DAESCU201922}.

  \item \textbf{Corollary: Minimum-Altitude Terrain Guarding.}

  \vspace{-4pt}
  
  \begin{restatable}[Minimum-height guarding]{theorem}{mhg}\label{thm:main-result}
    Let $\mo$ be a $k$-guardable $x$-monotone mountain. For any integer $1 \le m \le k$, a height $y^\ast$, a guard set $G^\ast = \{g_1^+, \dots, g_m^+\} \subseteq L_{y^\ast}$ of size $m$ guarding all of $\mo$, and $m$ witnesses $W^\ast \subseteq \partial\mo$ certifying this guard number can be computed in $\OO(nk + k^2\log k)$ time, where $y^\ast$ is the least height at which $m$ guards suffice.
  \end{restatable}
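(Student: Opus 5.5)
The plan is to reduce minimum-altitude guarding to a monotone parametric search over the altitude $y$, driven by the fixed-height algorithm of \cref{thm:fixed-height}. For a line $L_y$ above $\mo$, let $f(y)$ be the minimum number of guards on $L_y$ that see all of $\mo$.

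\textbf{Monotonicity of $f$.} First I show that $f$ is non-increasing in $y$. Take a guard $g$ on $L_y$ and a terrain point $p$ it sees. Since $\mo$ is $x$-monotone and lies below both lines, the segment $gp$ stays above the terrain. Extend $gp$ past $g$ until it meets $L_{y'}$ for $y'>y$. The extended segment stays above the terrain, because the terrain lies below $L_y$. So for every $p$ seen by $g$, some guard on $L_{y'}$ sees $p$.

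This is not yet a per-guard statement, so I will use the interval structure instead. For a terrain point $p$, the set of guards on $L_y$ seeing $p$ is an interval $\I_y(p)$. Its endpoints are intersections of $L_y$ with the two shortest-path extremal rays from $p$ through its left and right blocking reflex vertices. So $\I_y(p)=[\ell_p(y),r_p(y)]$, where each endpoint is a linear-fractional, hence monotone, function of $y$, and the interval only widens as $y$ grows. A clique cover at height $y$ then lifts to height $y'$: I take the projective (central) map from the common region. Alternatively, and more simply, I argue through witnesses. A set of pairwise non-co-visible witnesses at $y'$ is also non-co-visible at $y$, because the intervals at $y$ are sub-intervals of the shrinking family. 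Perfectness (\cref{thm:fixed-height}) then gives $f(y)\ge f(y')$. Hence $y^\ast=\inf\{y: f(y)\le m\}$ is well defined, and it is attained by compactness, since the intervals are closed.

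\textbf{Characterizing $y^\ast$ combinatorially.} At $y^\ast$, the greedy of \cref{thm:fixed-height} is tight. Start from the leftmost-right-endpoint witness $w_1$ and place $g_1=r_{w_1}(y)$. Let $w_2$ be the first point not seen, and so on. The greedy uses exactly $m$ guards, and lowering $y$ slightly forces $m+1$. So $y^\ast$ is a critical height at which the greedy chain $w_1,\dots,w_{m+1}$ degenerates: the last guard's interval just reaches the terrain's right end, equivalently some coincidence $r_{w_i}(y)=\ell_{w_{i+1}}(y)$ occurs along the chain. Each greedy step is a composition of monotone rational maps of $y$ defined by a pair of reflex vertices. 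The candidate critical heights are therefore roots of at most $k$ chained compositions, and the relevant reflex pairs change only at $\OO(n)$ breakpoints per step.

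\textbf{Algorithm and running time.} I run the chained greedy symbolically in $y$. For each of the $\le k$ guard steps, I maintain the current interval of $y$ and sweep the terrain in $\OO(n)$ time to find, as a piecewise function, which reflex vertex determines the next witness and guard. This gives $\OO(nk)$ total. Then I evaluate $f$ at the collected candidate heights. I binary-search the $\OO(k^2)$ breakpoint candidates, where interactions between consecutive steps contribute $k$ per step. Sorting and searching costs $\OO(k^2\log k)$, and each test reuses the precomputed chains. The witnesses $W^\ast$ come from the greedy at $y^\ast$, as in \cref{thm:fixed-height}.

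\textbf{Main obstacle.} The hardest step is bounding the candidate set to $\OO(k^2)$ and evaluating it without paying $\OO(n)$ per candidate. I would first try to show that, within a fixed step, the determining reflex vertex changes monotonically in $y$, so a two-pointer sweep suffices.
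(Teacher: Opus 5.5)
\textbf{Your correctness skeleton is sound; the gap is the running time.} Three parts of your proposal hold up. First, $f$ is non-increasing. The cleanest reason is that the point of $L_{y'}$ directly above a guard $g\in L_y$ sees every terrain point that $g$ sees, so the $x$-ranges of the intervals nest. Second, $y^\ast$ is attained by compactness. Third, the fixed-height routine of \cref{thm:fixed-height} at $y^\ast$ supplies $G^\ast$ and $W^\ast$.

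What is missing is the $\OO(nk+k^2\log k)$ bound, which is the actual content of the theorem. Your ``main obstacle'' paragraph leaves it open. Your algorithm needs the greedy chain $g_1(y),\dots,g_m(y)$ as explicit piecewise functions of $y$. You assert that each step costs one $\OO(n)$ sweep and has $\OO(n)$ breakpoints, but only the first step is straightforward: the right endpoint of a fixed terrain point moves linearly in $y$, so $g_1(y)$ is a lower envelope of linear functions.

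The guard $g_i(y)$ is different. It is a lower envelope of right-endpoint maps evaluated at shadow points of rays from $g_{i-1}(y)$ through reflex vertices. These are compositions of $i$ rational pieces whose degree can grow with $i$. Two such pieces can cross a number of times that grows with $i$, so the natural bound on the piece count is of Davenport--Schinzel type. That is exactly where the $\lambda_{k-1}(n)$ factor of Kang, Kim, and Ahn comes from. Asserting $\OO(n)$ per step therefore assumes away the difficulty the theorem claims to overcome. A monotone change of the determining reflex vertex, even if proved, would only give a two-pointer sweep within one step, not a bound on the composed chain.

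Your candidate count is also inconsistent with your own setup. With $\OO(n)$ pieces per step, the candidate roots number up to $\OO(nk)$, not $\OO(k^2)$. Nothing in the proposal explains how $f$ is tested at a candidate for less than the $\OO(n)$ of a fresh fixed-height run.

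\textbf{The missing idea is to compute only the next drop height, locally.} The paper never builds the global parametric greedy. At the current height $y_0$ it does the following:
\begin{enumerate}
  \item It runs the two fixed-height passes to get $2i$ candidates $w_j^\pm$.
  \item It turns each candidate into two fixed lines through its left and right anchors, giving $4i$ lines in total.
  \item It jumps to $y_1=\min_{1\le j\le i-1}H_2(j,j+1)$, the lowest height at which $r_j^+$ crosses $\ell_{j+1}^-$.
  \item It recomputes guards, candidates and witnesses exactly on $L_{y_1}$ in $\OO(n)$ time.
\end{enumerate}
\Cref{lem:height-param} argues that for every $y\in[y_0,y_1)$, refining the same candidates still yields $i$ pairwise disjoint intervals, so at least $i$ guards are needed there. \Cref{lem:single-step} argues that the optimum on $L_{y_1}$ is exactly $i-1$. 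Each of the $k-m$ drops then costs $\OO(n+i\log i)$, which gives $\OO(nk+k^2\log k)$.

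To reach the claimed bound, your route needs an analogous statement: the next critical height is determined by $\OO(i)$ fixed lines computed from a single fixed-height run. Without it, your parametric search does not achieve the bound.
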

  
  \vspace{-8pt}
  
  \textbf{Brief Overview:} We resolve the open problem of \cite{DAESCU201922}: given a $k$-guardable monotone mountain, one guardable by $k$ guards on a horizontal line, and a target $m \le k$, we compute the lowest altitude at which $m$ guards on a horizontal line suffice, together with their positions and matching witnesses certifying perfectness, in $\OO(nk + k^{2}\log k)$ time \textbf{(\cref{lem:single-step,thm:main-result}, \cref{sec:corollarykguard})}. This improves the $\OO(k^{2}\lambda_{k-1}(n)\log n)$ bound of Kang, Kim, and Ahn \cite{DBLP:conf/iwoca/KangKA25}, where $\lambda_{k-1}(n)$ is a near-linear Davenport--Schinzel length, through an elementary upward sweep that uses no such sequences.
\end{enumerate}


We close with two open directions that are revisited in the concluding section. First, can point guarding a $\wvp$ be solved in $\OO(n\log n)$ time? We guard every vertex of the polygon from the base in $\Theta(n\log n)$ time, but guarding the entire interior takes $\OO\bigl((n+\OPT\cdot\rho)(\log n+\log\OPT)\bigr)$ time; whether this can be reduced to $\OO(n\log n)$ remains open. Second, what changes if guards may be placed anywhere on $\bd(\wv)$ rather than only on $\eb$? 


\section{Preliminaries}\label{sec:prelims}
In this section, we set up the basic definitions and notation used throughout the paper. We start with simple polygons, monotone polygons, and terrains. We then define weak-visibility polygons and their base edges. Next, we define witness sets and guard sets, which are the two main tools we use later. Finally, we define anchors and the interval representation of a point. These last two ideas form the basis of our algorithm in Section~\ref{sec:Algorithm}.


\begin{definition}[\textbf{Simple Polygon}]\label{def:simplepolygon}
    A {\em simple polygon} $\po$ is the closed region of the plane bounded by a simple closed polygonal curve, that is, a curve formed by $n$ line segments in which consecutive segments share an endpoint and no two non-consecutive segments intersect~\cite{deberg2008computational}. These segments are the {\em edges} of $\po$, and their shared endpoints are its {\em vertices}. We write $V(\po)$ and $E(\po)$ for the sets of vertices and edges of $\po$, respectively.
\end{definition}
    A simple polygon $\po$ encloses a region, called its {\em interior}, that has a measurable area. We use $ \bd(\po), ~\inte(\po), ~\ex(\po) $ to denote the boundary, interior, and exterior region of $\po$, respectively.  A vertex $v \in V(\po)$ is a {\em reflex}  vertex if the interior angle of $\po$ at $v$ is larger than 180 degrees, else it is called a {\em convex} vertex.

\begin{definition}[\textbf{Visibility Polygon}]\label{def:visibility}
    In a simple polygon $\po$, two points $x$ and $y$ see each other (i.e., mutually visible) if the line joining them lies entirely inside of $\po$. For a point $x \in \po$, the visibility polygon of $x$, is defined as:
    \[\vis(x) \coloneqq \{y \in \po~ | ~y~\text{is visible from}~x \}\]
\end{definition}
    
\begin{definition}[\textbf{Monotone Polygon}]\label{def:monotonepolygon}
    A polygon $\mo$ in the plane is called {\em monotone} with respect to a straight line $L$, if every line orthogonal to $L$ intersects the boundary of $\mo$ at most twice.
\end{definition}
    Throughout our article, we refer to a polygon that is monotone with respect to the line $y=0$ (or the $x$-axis) as an \textit{$x$-monotone polygon}. For any point $p$ in the plane,  $x(p)$ denotes the $x$-coordinate of $p$.

\begin{definition}[\textbf{Terrain}]\cite{DAESCU201922}\label{def:Terrain}
    A terrain is an $x-$monotone polygonal chain.
\end{definition}    

\begin{definition}[\textbf{Altitude Line}]\cite{DAESCU201922}\label{def:AltitudeLine}
    Given a terrain $\mathcal{T}$, an altitude line is a horizontal line segment lying entirely above it, that is, the $y-$coordinate of the line is greater than that of all the points in the terrain $\mathcal{T}$.
\end{definition}


\begin{definition}[\textbf{Monotone Mountain Polygon}]\cite{DBLP:conf/cccg/ORourke97}\label{def-monotone_mountain}
   A monotone polygon is called a \textit{Monotone Mountain} if one of the chains is a single edge. 
\end{definition}

\begin{definition}[\textbf{Weak Visibility Polygon or $\wvp$}] \label{def-wvpolygon}
    A simple polygon $P$ is a \emph{weak visibility polygon} (WV polygon) with respect to an edge $e = (u, v)$ (the \emph{base}) if every point $p \in P$ is visible from at least one point on $e$. \cite{ghosh2007vis}. 
\end{definition}

Throughout our paper, we will use the notation $\eb$ to denote the base edge.

\begin{definition}[\textbf{Witness Set}]\label{def:WitnessSet}
    A set of points $W \subseteq \po$ is said to be a {\em witness set} \cite{DBLP:journals/ijcga/AmitMP10} in  $\po$ if for every pair of points $w_1,w_2 \in W$, there exists no guard $g$ such that both $w_1$ and $w_2$ are visible from~$g$. In other words, every point $p \in \po$ is visible from at most one witness $w \in W$. We denote a {\em witness set} by $\ws(\po, Q)$ where the polygon considered is $\po$ and the witnesses are from the set $Q$. The cardinality of a {\em witness set} is denoted by $\witN(\po, Q)$ (see \cref{fig:WVPWitness}).
\end{definition}

\begin{figure}[ht!]
    \centering
    \includegraphics[scale=0.9]{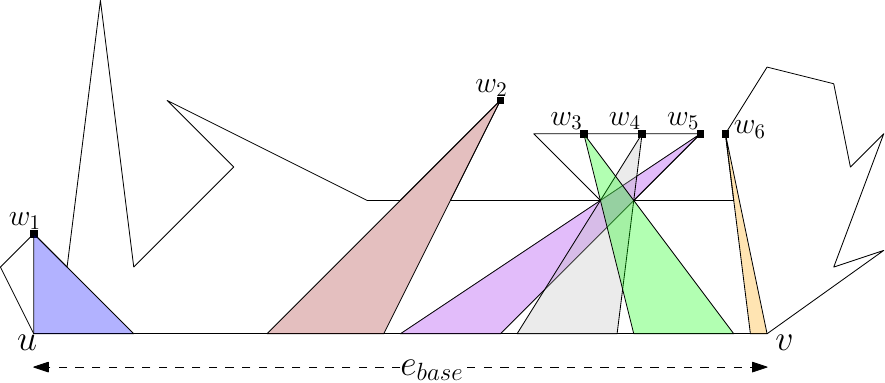}
    \caption{Witness Set Illustration. A witness set $W = \{w_1,\dots,w_6\}$ in $\po$ with pairwise disjoint visibility regions, giving $\witN(\po,Q) = 6$.}
    \label{fig:WVPWitness}
\end{figure}

\begin{definition}[\textbf{Guard Set}]\label{def:GuardSet}
    A set of points $G \subseteq \po$ is a {\em guard set}~\cite{DBLP:journals/jacm/AbrahamsenAM22} if every point $p \in \po$ is visible from at least one point $g \in G$. If the guards may lie anywhere in the polygon, we call them {\em point guards}; if they are restricted to the vertices, we call them {\em vertex guards}. We write $\gs(\po, A)$ for a guard set of $\po$ whose guards are drawn from a set $A \subseteq \po$, and $\grN(\po, A)$ for its cardinality. Thus a point-guard set is $\gs(\po, \po)$ and a vertex-guard set is $\gs(\po, V(\po))$ (see \cref{fig:WVPGuards}).
\end{definition}

\begin{figure}[H]
    \centering
    \includegraphics[scale=0.9]{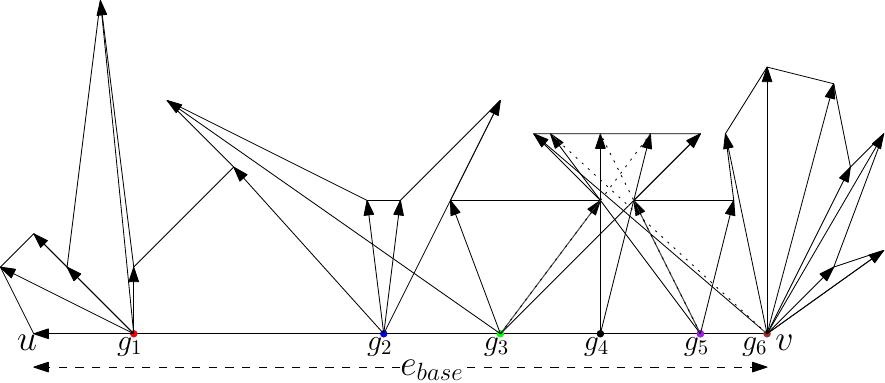}
    \caption{Guard Set Illustration. A guard set $G = \{g_1,\dots,g_6\}$ in $\po$ whose visibility regions jointly cover $\po$, giving $\grN(\po, A) = 6$.}
    \label{fig:WVPGuards}
\end{figure}

If $\po$ is a $\wvp$, with $\eb = (u,v)$ as its base edge, then for any point $p \in \po$, we can compute its {\em shortest path} from $p$ to the vertices $u$ and $v$. We denote the shortest paths from $u$ and $v$ by $\spl(p)$ and $\spr(p)$, respectively.  

We then arrive at a crucial definition motivated by \cite{DBLP:journals/corr/abs-2511-10224}, which will be useful throughout our paper. 

\begin{definition}[Anchors]
For any point $p \in \po$:
\begin{itemize}
  \item The \emph{right anchor} $\RA(p)$ is the first reflex vertex in the shortest path $\spr(p)$.
  \item The \emph{left anchor} $\LA(p)$ is the first reflex vertex in the shortest path $\spl(p)$.
\end{itemize}
In other words, $\RA(p)$ (resp.\ $\LA(p)$) is the last reflex vertex on the shortest path from $v$ (resp.\ $u$) to $p$ inside $\po$, or the base endpoint itself if no reflex vertex intervenes, i.e., the shortest path in that case is a straight line (see \cref{fig:RA_LA_Int}).
\end{definition}

\begin{definition}[Interval Representation]\label{def-interval}
For a point $p$ in a $\wvp$ $\po$, join the line from $p$ to $\RA(p)$ and extend it until it hits the base $\eb$, yielding a point $r(p) \in \eb$ (\cref{fig:RA_LA_Int}). Similarly, draw the line from $p$ to $\LA(p)$ and extend it until it hits $\eb$, yielding $\ell(p) \in \eb$. If $\LA(p)$ (resp. $\RA(p)$) does not exist then we take $\ell(p) = u$ (resp. $r(p) = v$). The \emph{interval of $p$} is: $\I(p) = [\ell(p),\, r(p)] \subseteq \eb$.

\end{definition}

\begin{figure}[H]
    \centering
    \includegraphics[scale=1.05]{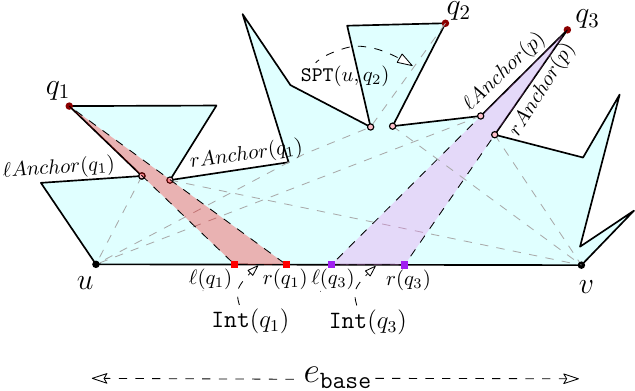}
    \caption{This figure depicts an example of $\RA, ~\LA, ~\I()$ construction for points in a weak visibility polygon.} \label{fig:RA_LA_Int}
\end{figure}


\section{Weak Visibility Polygons and Interval Graph} \label{sec:intervalconstruction}

In this section, we show that in a $\wvp$, the visibility regions, when restricted to the base edge, form intervals, and their intersection graph is an \textit{interval graph} \cite{GOLUMBIC20041}. We begin by providing a few preliminary definitions and observations.

\vspace{2mm}

Consider a pair of points $x,y$ in a polygon $\po$, which cannot see each other, i.e., $y \notin \vis(x)$, and vice-versa. Then, the line joining $x$ and $y$, must intersect the exterior $\ex(\po)$ of $\po$. We look at any point where the line $\overline{xy}$ \textit{exits} $\po$, while traversing from $x$ to $y$. We name such a point on $\bd(\po)$ as $h_{\mathtt{out}}$, and we look at the immediate next point where it \textit{enters} the polygon again, and name the point on $\bd(\po)$ as $h_{\mathtt{in}}$. Now, we arrive at the following definition.


    \begin{definition}[hill]\label{hill} \cite{DBLP:journals/corr/abs-2511-10224}
        {\em Consider a pair of points $x,y$ in a polygon $\po$, which cannot see each other, i.e., $y \notin \vis(x)$, and vice-versa. Let $h_{\mathtt{out}}$ be a point where $\overline{xy}$ exits $\po$, and let $h_{\mathtt{in}}$ be the immediately following point where $\overline{xy}$ re-enters $\po$. Both $h_{\mathtt{out}}$ and $h_{\mathtt{in}}$ lie on $\overline{xy}$, and the open segment between them lies in $\ex(\po)$. The points $h_{\mathtt{out}}$ and $h_{\mathtt{in}}$ split $\bd(\po)$ into two chains; together with the segment $\overline{h_{\mathtt{out}}h_{\mathtt{in}}}$, each chain forms a simple closed curve enclosing a bounded region. For exactly one of the two chains, the interior of this enclosed region is contained in $\ex(\po)$; we call that chain the \textit{hill} corresponding to the mutually invisible pair $x$ and $y$ (the red chain in \cref{reflex}).}
    \end{definition}

    The immediate observation concerns when two points in $\po$ cannot see each other. Refer to \cref{reflex} for an illustration.
    
\begin{observation}\label{obs-hill}
     Consider a pair of {distinct} points $x,y$ in a polygon $\po$. If they are not visible to each other, then there must exist a hill, corresponding to the pair of points $x,y$ (note that there might exist multiple hills corresponding to $x,y$). Any such hill (which is a chain in $\bd(\po)$) contains at least one reflex vertex.
\end{observation}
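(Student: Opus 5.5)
We prove the two assertions of \cref{obs-hill} in turn: first that a hill exists, then that every hill carries a reflex vertex.

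\textbf{Existence of a hill.} Since $y \notin \vis(x)$, the closed segment $\overline{xy}$ is not contained in $\po$. Because $\po$ is closed, $\overline{xy}\setminus\po$ is a relatively open subset of $\overline{xy}$, and it is nonempty. Neither endpoint lies in it, since $x,y \in \po$.

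Take any connected component of $\overline{xy}\setminus\po$. It is an open subsegment $(h_{\mathtt{out}},h_{\mathtt{in}})$ whose endpoints lie in $\po$. They lie in fact on $\bd(\po)$, because they are limits of exterior points. This gives exactly the exit and re-entry pair of \cref{hill}.

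The two boundary points split the closed curve $\bd(\po)$ into two chains $C_1, C_2$. Each chain, closed up with the chord $\sigma=\overline{h_{\mathtt{out}}h_{\mathtt{in}}}$, is a simple closed curve. This holds because $\sigma$ meets $\bd(\po)$ only at its endpoints, its interior lying in $\ex(\po)$.

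By the Jordan curve theorem, the open chord $\sigma^\circ$ lies in the unbounded complementary face of $\bd(\po)$. It therefore cuts $\ex(\po)$ and produces a bounded pocket. One checks that $\inte(\po)$ lies inside exactly one of the curves $C_i\cup\sigma$. The other curve, say $C_1\cup\sigma$, bounds a region whose interior avoids $\inte(\po)$. Since that interior is disjoint from $C_1$ and meets $\bd(\po)$ nowhere else, it lies in $\ex(\po)$. Hence $C_1$ is a hill, and different components give the possibly multiple hills.

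\textbf{A hill contains a reflex vertex.} Let $Q$ be the simple polygon bounded by $C_1\cup\sigma$; its interior lies in $\ex(\po)$. Suppose, for contradiction, that every vertex of $\po$ strictly inside $C_1$ is convex with respect to $\po$. Because $\inte(Q)\subseteq\ex(\po)$, the interior angle of $Q$ at such a vertex is $360^\circ$ minus the interior angle of $\po$ there. It is therefore strictly greater than $180^\circ$, so each such vertex is reflex in $Q$.

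If $C_1$ has at least one vertex, then $Q$ has $k+2$ vertices with $k\ge 1$: the $k$ vertices of $C_1$ together with $h_{\mathtt{out}}$ and $h_{\mathtt{in}}$. Its angle sum is $k\cdot 180^\circ$. The angles at $h_{\mathtt{out}}$ and $h_{\mathtt{in}}$ are positive, so
\[
\text{(angles at the $k$ vertices of $C_1$)} \;<\; 180^\circ k ,
\]
which contradicts each of these $k$ angles exceeding $180^\circ$.

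The chain $C_1$ cannot have zero vertices. If it did, $C_1$ would be a straight piece of an edge joining $h_{\mathtt{out}}$ to $h_{\mathtt{in}}$. That piece would coincide with $\sigma$, and then $\sigma^\circ\subseteq\bd(\po)\subseteq\po$, a contradiction. Hence $C_1$ contains a vertex of $\po$ that is reflex in $\po$.

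\textbf{Main obstacle.} The delicate part is the Jordan-curve step: showing that exactly one of the two chains encloses an exterior pocket, and that the angle complementarity holds there. Degenerate cases also need care, namely $h_{\mathtt{out}}$ or $h_{\mathtt{in}}$ being a vertex, or $\overline{xy}$ grazing the boundary. I would handle these by always choosing a maximal exterior component as above, which forces the endpoints of the chord to be genuine boundary contacts.
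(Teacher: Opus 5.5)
Your proof is correct, and it takes a more careful route than the paper's. The paper's proof is a single sentence. It takes the existence of the exit/re-entry pair and the choice of chain as given by \cref{hill}. It then asserts that if every vertex on the hill were convex, the chord $\overline{h_{\mathtt{out}}h_{\mathtt{in}}}$ would lie inside $\po$, but gives no mechanism for why.

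You do two things the paper does not.
\begin{enumerate}
\item You derive $h_{\mathtt{out}},h_{\mathtt{in}}$ from a connected component of $\overline{xy}\setminus\po$. You then use the theta-curve structure of $\bd(\po)\cup\sigma$ to identify which chain bounds an exterior pocket.
\item You replace the paper's unexplained claim with an angle count on the pocket polygon $Q$. A vertex convex in $\po$ has angle at least $180^\circ$ in $Q$. The angle sum of $Q$ is $k\cdot 180^\circ$, and the angles at $h_{\mathtt{out}}$ and $h_{\mathtt{in}}$ are positive, so the chain vertices get strictly less than $k\cdot 180^\circ$ in total.
\end{enumerate}

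Your version is rigorous and handles the degenerate contacts explicitly. It also gives a slightly stronger conclusion: the reflex vertex lies strictly inside the chain, not at a chord endpoint. The paper's version only buys brevity.

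Two small remarks. First, your ``one checks'' step is exactly the theta-curve theorem, which the paper already builds into \cref{hill}, so you may simply cite it. Second, the paper calls a vertex convex when its angle is at most $180^\circ$, so the angle of $Q$ at such a vertex is only $\ge 180^\circ$, not strictly greater. Your contradiction is unaffected, because the strict inequality comes from the positive angles at $h_{\mathtt{out}}$ and $h_{\mathtt{in}}$.
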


\begin{proof}
    From the definition of a hill, it cannot contain all convex vertices, for then it would mean that the line $\overline{h_{\mathtt{out}}h_{\mathtt{in}}} \subseteq \overline{xy}$ lies inside $\po$, which violates the definition of the points $h_\mathtt{out}$ and $h_\mathtt{in}$.
\end{proof}

\begin{figure}[ht!]
    \centering
    \includegraphics[scale=0.9]{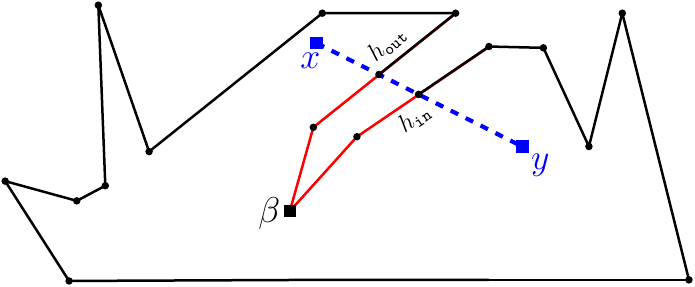}
    \caption{The points $x$ and $y$ cannot see each other. The chain (part of $\bd(\po)$) from $h_\mathtt{out}$ to $h_\mathtt{in}$ (marked in red) is called a \textit{hill} corresponding to $x,y$. $\beta$ is a reflex vertex on the hill.} \label{reflex}
\end{figure}

\begin{proposition}\label{prop:intervalvisibility}
    If $\wv$ is a $\wvp$, and $p$ is any point in $\wv$, then its interval (as in \cref{def-interval}), $\I(p) = \vis(p) \cap \eb.$
\end{proposition}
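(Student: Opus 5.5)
We prove the two inclusions separately, after first showing that $\vis(p)\cap\eb$ is a single segment. Orient $\eb=(u,v)$ with $u$ on the left, so that $\spl(p)$ and $\spr(p)$ are the shortest paths from $p$ to $u$ and to $v$.

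\textbf{Step 1: $\vis(p)\cap\eb$ is a segment.} Let $g_1,g_2\in\eb$ both see $p$. The triangle $p g_1 g_2$ has its three sides inside $\wv$: two are sight lines and the third lies on $\eb$. A simple polygon contains any triangle whose boundary it contains, since a boundary piece of $\wv$ entering the triangle would have to cross a side. Hence every $g\in[g_1,g_2]$ sees $p$. Since $\vis(p)$ is closed, $\vis(p)\cap\eb$ is a closed segment $[a,b]$, and it is nonempty because $\wv$ is a $\wvp$.

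\textbf{Step 2: $b=r(p)$.} By symmetry, $a=\ell(p)$ follows in the same way.

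\begin{itemize}
\item If $\RA(p)$ does not exist, $\spr(p)$ is the straight segment $\overline{pv}$. Then $v$ sees $p$ and $b=v=r(p)$.
\item Otherwise let $\beta=\RA(p)$. By definition the subpath of $\spr(p)$ from $p$ to $\beta$ is the single segment $\overline{p\beta}$.
\begin{itemize}
\item \emph{$r(p)$ sees $p$.} The path continues from $\beta$ toward $v$ and bends at the reflex vertex $\beta$. Extending $\overline{p\beta}$ past $\beta$ therefore moves into the free side, inside the wedge at $\beta$. Use the taut-string property of shortest paths together with the fact that $p$ is weakly visible from $\eb$. From these I would argue that the extension reaches $\eb$ at $r(p)$ without leaving $\wv$, so $\overline{p\,r(p)}\subseteq\wv$.
\item \emph{No point right of $r(p)$ sees $p$.} Take $g\in\eb$ strictly between $r(p)$ and $v$. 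Then $\overline{pg}$ lies on the far side of the line $p\beta$. The chain of $\bd(\wv)$ incident to $\beta$ blocks that side, since the path to $v$ had to wrap around it. By \cref{obs-hill}, a hill containing a reflex vertex separates $p$ from $g$, so $g$ does not see $p$.
\end{itemize}
\end{itemize}
Together these give $b=r(p)$, and therefore $\I(p)=\vis(p)\cap\eb$.

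\textbf{Main obstacle.} The hard part is the first claim of Step 2: that the extension of $\overline{p\beta}$ past the anchor stays in $\wv$ and lands on $\eb$ rather than on another edge. I would attack this using $g_0\in\vis(p)\cap\eb$, which exists by the $\wvp$ property. Consider the region bounded by $\overline{pg_0}$, the segment of $\eb$ from $g_0$ to $v$, and $\spr(p)$. This region is a pseudo-triangle whose convex chain is $\spr(p)$. The ray from $p$ through its first vertex $\beta$ sweeps inside it, exits through the opposite side, and that side is contained in $\eb$. This only works if the ray does not exit through $\overline{pg_0}$, which I would rule out using convexity of the chain.
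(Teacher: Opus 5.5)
Your Step~1 is the same argument as the paper's: a triangle whose boundary lies in $\wv$ lies in $\wv$. The paper essentially stops there. It takes the identification of the endpoints of $\vis(p)\cap\eb$ with $\ell(p)$ and $r(p)$ as immediate from Definition~\ref{def-interval}. Your Step~2 tries to actually prove that identification, which is worth doing, but its second bullet has a genuine gap.

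\cref{obs-hill} goes the wrong way. It says that \emph{if} $p$ and $g$ are mutually invisible, then a hill containing a reflex vertex exists. It therefore cannot be used to \emph{conclude} that $g$ does not see $p$. What remains is the claim that the boundary at $\beta$ ``blocks that side'', and that is only local. The exterior wedge at $\beta$ obstructs segments from $p$ that pass close to $\beta$. But for $g$ well to the right of $r(p)$, the segment $\overline{pg}$ need not come anywhere near $\beta$, and nothing you wrote shows that it meets $\ex(\wv)$.

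The fix is already inside your pseudo-triangle argument, once that argument is made precise. Run it with $g_0$ replaced by an arbitrary $g\in\vis(p)\cap\eb$; note $g\neq v$, since $p$ does not see $v$ when $\beta$ exists.
\begin{itemize}
\item The region $R_g$ bounded by $\overline{pg}$, $\overline{gv}$ and $\spr(p)$ lies in $\wv$, by the Step~1 argument.
\item Every bend of $\spr(p)$ is a reflex corner of $R_g$. A shortest path bends around a reflex vertex of $\wv$ with the exterior wedge inside the turn, and $R_g\subseteq\wv$. Hence the extension of $\overline{p\beta}$ past $\beta$ enters $R_g$.
\item The chain turns in one direction by a total of less than $\pi$, by the angle sum of $R_g$. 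So beyond $\beta$ it stays strictly on one side of the line $p\beta$, and the ray never meets $\spr(p)$ again.
\item The ray cannot meet $\overline{pg}$ except at $p$, simply because two distinct lines through $p$ meet only at $p$. So the worry in your last paragraph is vacuous; convexity of the chain is needed for the previous two points instead.
\end{itemize}
Therefore the ray leaves $R_g$ through $\overline{gv}$. This gives both $\overline{p\,r(p)}\subseteq\wv$ and $r(p)\in[g,v]$. Taking $g=b$ yields $r(p)\ge b$, while $r(p)\in\vis(p)\cap\eb$ yields $r(p)\le b$. Hence $b=r(p)$, with no appeal to hills, and this replaces your flawed second bullet.
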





\begin{proof}
    Since $\wv$ is a weak visibility polygon, the point $p$ is visible from some point of $\eb$, so $\vis(p) \cap \eb$ is nonempty. Being the intersection of the two closed sets $\vis(p)$ and $\eb$, it is closed.

    It remains to show that $\vis(p) \cap \eb$ is an interval, i.e., convex as a subset of the line supporting $\eb$. We argue geometrically. Let $a, c \in \vis(p) \cap \eb$ with $a \neq c$, and let $b$ be any point of $\eb$ strictly between $a$ and $c$. We claim that $p$ sees $b$. The segments $\overline{pa}$ and $\overline{pc}$ lie in $\wv$, because $p$ sees $a$ and $c$, and $\overline{ac} \subseteq \eb \subseteq \wv$, since $\eb$ is an edge of $\wv$. Hence, the boundary of the triangle $\triangle pac$ is contained in $\wv$.

    Suppose, for contradiction, that some point of $\triangle pac$ lies in the exterior $\ex(\wv)$. Such a point lies in the open interior of the triangle, since the triangle's boundary is contained in $\wv$. By the Jordan curve theorem, $\ex(\wv)$ is connected, so it contains a path from this point to a point lying outside $\triangle pac$. This path leaves the triangle and therefore meets its boundary; but the boundary lies in $\wv$ and is thus disjoint from $\ex(\wv)$, a contradiction. Therefore $\triangle pac \subseteq \wv$, and in particular $\overline{pb} \subseteq \triangle pac \subseteq \wv$, so $p$ sees $b$.

    Thus $\vis(p) \cap \eb$ is a nonempty closed interval. Its endpoints are the extreme points of $\eb$ visible from $p$, namely $\ell(p)$ and $r(p)$ by \cref{def-interval}: the point $p$ sees no point of $\eb$ to the left of $\ell(p)$ or to the right of $r(p)$, and it sees both $\ell(p)$ and $r(p)$. Consequently $\vis(p) \cap \eb = [\ell(p), r(p)] = \I(p)$.
\end{proof}

\begin{proposition}[Visibility Equivalence]\label{prop:vis-equiv}
A guard $g \in \eb$ sees $p \in \wv$ if and only if $g \in \I(p)$.
\end{proposition}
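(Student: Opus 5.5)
This statement is essentially a restatement of \cref{prop:intervalvisibility} in the language of guards, so the plan is to derive it directly from that proposition together with \cref{def:visibility}. Since visibility is symmetric (the segment $\overline{gp}$ lies in $\wv$ exactly when $\overline{pg}$ does), the statement ``$g$ sees $p$'' is equivalent to $g \in \vis(p)$. For a guard restricted to the base, $g \in \eb$, this is in turn equivalent to $g \in \vis(p) \cap \eb$. Invoking \cref{prop:intervalvisibility}, which gives $\vis(p) \cap \eb = \I(p)$, we conclude that $g$ sees $p$ if and only if $g \in \I(p)$.

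I will write out both directions separately for clarity. For the forward direction, suppose $g \in \eb$ sees $p$. Then $g \in \vis(p)$ and $g \in \eb$, hence $g \in \I(p)$. For the converse, suppose $g \in \I(p)$. Then $g \in \vis(p)$, and by symmetry of the segment $g$ sees $p$. Note that $\I(p) \subseteq \eb$ holds by \cref{def-interval}, so any $g \in \I(p)$ is automatically a legitimate base guard.

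The only step that needs care is what \cref{prop:intervalvisibility} actually establishes. Its proof shows that $\vis(p)\cap\eb$ is a nonempty closed interval and identifies its endpoints with $\ell(p)$ and $r(p)$. That endpoint identification is the substantive geometric content:
\begin{itemize}
\item the ray from $p$ through its anchor $\RA(p)$ grazes a reflex vertex, so $r(p)$ is visible from $p$;
\item any base point beyond $r(p)$ is separated from $p$ by a hill containing that anchor, in the sense of \cref{obs-hill}.
\end{itemize}
Since we are allowed to assume the earlier result, we take it as given. If it had to be reproved, this endpoint step would be the main obstacle, and I would handle it via the shortest-path tree from $v$, using the fact that the last edge of $\spr(p)$ is tangent at $\RA(p)$.
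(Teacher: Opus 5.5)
Your proof is correct and is essentially the paper's argument: both directions follow immediately from the identity $\I(p) = \vis(p) \cap \eb$ of \cref{prop:intervalvisibility}, with $g \in \eb$ used to pass between $g \in \vis(p)$ and $g \in \vis(p) \cap \eb$. Your remarks on the symmetry of visibility, and on the endpoint identification in the earlier proposition being where the real geometric content lies, are accurate but are not needed for this statement.
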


\begin{proof}
If $g \in \I(p)$, then by \cref{prop:intervalvisibility}, which tells us that $\I(p) = \vis(p) \cap \eb$, $g \in \vis(p)$. So, $g$ sees $p$. Conversely, if $g \in \eb$ sees $p$, then $g \in \vis(p)$. This implies that $g \in \vis(p) \cap \eb = \I(p)$, as required.
\end{proof}


\section{Guarding a Weak Visibility Polygon}\label{sec:wvguardingequivalence}

In this section, we show that guarding a $\wvp ~\wv$ is equivalent to guarding \textit{only} its boundary, $\bd(\wv)$. This turns a two-dimensional covering problem into a one-dimensional one, and this fact will be used in the subsequent sections. The idea is motivated by the paper of Ashur, Filtser, and Katz \cite{DBLP:journals/jocg/AshurFK21}.

Recall that by a $\wvp$, $\wv$ (as in \cref{def-wvpolygon}), we mean that every point in $\wv$ is visible from some edge of the polygon, called $\eb$. We try to provide some positive answers to the question of guarding $\wv$ where the guard locations are restricted to the edge $\eb$. In other words, can we provide a solution to the problem of computing $\gs(\wv,\eb)$? 



\begin{definition}[Interval Graph $\mathcal{G}_Q$]
Given a discrete point set $Q \subseteq \wv$, form the interval graph $\mathcal{G}_Q = (Q, E)$ where
\[
  \{w_1, w_2\} \in E \iff \I(w_1) \cap \I(w_2) \neq \emptyset.
\]
\end{definition}

Since interval graphs are perfect:
\[
  \alpha(\mathcal{G}_Q) = \bar{\chi}(\mathcal{G}_Q),
\]
where $\alpha$ is the maximum independent set size (= maximum witness set size) and $\bar{\chi}$ is the minimum clique cover number.

We make the following claims, which are presented in the subsequent subsections.  

\subsection{Equivalence of Guarding Models}

\thmbdryequiv*

\begin{proof}
Let $\wv$ be a weak visibility polygon with respect to a base edge $\eb$, and let $G$ be a set of guards placed on $\eb$ that collectively guard the entire boundary $\bd(\wv)$. We prove that $G$ also guards the entire interior of $\wv$. 

A maximal connected subset $H$ of $\wv$ is called a {\em hole} in $\wv$ if no point in $H$ is visible from any guard in $G$. It is known that any region of $\wv$ not directly visible from the guard set (i.e., a hole) forms a convex pocket in the interior of $\wv$ (Observation 4, Section 2.3 in \cite{DBLP:journals/jocg/AshurFK21}).

Consider any point $x \in \inte(\wv)$. If $x$ lies on the boundary, it is already guarded by assumption. Otherwise, $x$ lies either in a region directly visible from the base or inside such a convex pocket.

Since $\wv$ is weakly visible from $\eb$, there exists a point $g \in \eb$ such that the line segment $\overline{gx}$ lies entirely within $\wv$. Extend the segment $\overline{gx}$ beyond $x$ until it first intersects the boundary $\bd(\wv)$ at a point $b$ (see \cref{fig:interior-hole}).

By assumption, the boundary is fully guarded, so there exists a guard $g_b \in G$ such that $b$ is visible from $g_b$, i.e., the segment $\overline{g_b b}$ lies entirely within $\wv$. So:
\begin{itemize}
    \item $\overline{b g} \subseteq \wv$ (since $g$ sees $x$ and $b$ lies on the extension),
    \item $\overline{b g_b} \subseteq \wv$ (since $g_b$ sees $b$).
\end{itemize}

\begin{figure}
    \centering
    \includegraphics[width=.5\linewidth]{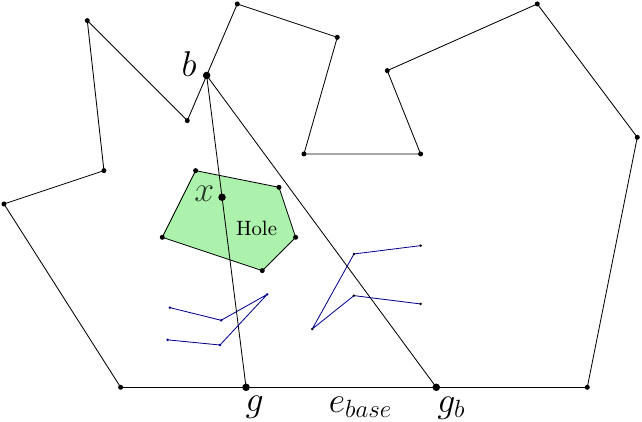}
    \caption{Illustration of point $x$ inside a convex pocket, with $g$ on the base and $g_b$ guarding boundary point $b$.}
    \label{fig:interior-hole}
\end{figure}

We now claim that $x$ is visible from $g_b$. To establish this, consider the triangle $\triangle b g g_b$.

We already know:
\begin{itemize}
    \item $\overline{b g} \subseteq \wv$,
    \item $\overline{b g_b} \subseteq \wv$.
\end{itemize}

Suppose, for contradiction, that $x$ is not visible from $g_b$. Then the segment $\overline{g_b x}$ must intersect the boundary of $\wv$, implying the existence of an obstruction (a hill) within $\wv$ that blocks visibility.

However, since $\wv$ is a simple polygon, any such hill must intersect at least one of the segments $\overline{b g}$ or $\overline{b g_b}$, contradicting the fact that both these segments lie entirely within $\wv$.

Thus, no such obstruction exists, and $g_b$ sees every point on the segment $\overline{b g}$, including $x$.

Therefore, every interior point $x \in \wv$ is visible from some guard in $G$, implying that $G$ guards the entire interior of $\wv$.

Hence, boundary guarding from the base implies guarding the entire polygon from the base. The converse direction is trivial, completing the proof.
\end{proof}  


\section{The Algorithm}\label{sec:Algorithm}

In this section, we present an algorithm that addresses a special case of the {\sc Art Gallery Problem}.

Our Problem Statement ($\SGP$) is defined as follows.

\begin{tcolorbox}[enhanced, title={\color{black} {\sc $k-$Guard Set in $\wvp$}: \textbf{Decision Variant}},
    colback=white, boxrule=0.4pt,
    attach boxed title to top left={xshift=6pt, yshift*=-3.5mm},
    boxed title style={size=small, frame hidden, colback=white},
    before skip=8pt, after skip=8pt]
    \textbf{Input:} A {\em weak visibility} polygon $\wv$, with its base edge $\eb$, and an integer $k$.\\
    \textbf{Task:}\hspace*{1mm} Is it possible to guard $\wv$ from the base using at most $k$ guards? If yes, give a minimum-sized guard set $\gs(\wv,\eb)$.
\end{tcolorbox}

We also address the following optimization variant, in which no bound $k$ is prescribed.

\begin{tcolorbox}[enhanced, title={\color{black} {\sc Minimum Guard Set in $\wvp$: \textbf{Optimization Variant}}},
    colback=white, boxrule=0.4pt,
    attach boxed title to top left={xshift=6pt, yshift*=-3.5mm},
    boxed title style={size=small, frame hidden, colback=white},
    before skip=8pt, after skip=8pt]
    \textbf{Input:} A {\em weak visibility} polygon $\wv$ with its base edge $\eb$.\\
    \textbf{Task:}\hspace*{1mm} Find a minimum-sized guard set $\gs(\wv,\eb)$ on $\eb$ that guards all of $\wv$.
\end{tcolorbox}

The optimization variant is well-posed: as we establish in Lemma~\ref{lem:finite-guardable}, every $\wvp$ is finitely guardable from its base. Although finitely guardable, the number of guards needed can be arbitrarily large, and we cannot give any specific bound on the maximum number of guards needed based only on the input size $n$. Our algorithm, however, is output-sensitive and yields the optimal guard set in poly$(n, k)$ time if $k$ is the size of $G_{\OPT}$. Consequently, Algorithm~\ref{alg:base-guard}, when run without the $k$-step stopping condition, is guaranteed to terminate in finitely many iterations and outputs a minimum-sized guard set.

We give a solution to the above-stated problems. Before getting into the details of the algorithm, we define a particular notion regarding visibility.

\begin{definition}[Strong Visibility Interval]\label{def:strongvisibility}
    If $\wv$ is a {\em weak visibility} polygon with the base edge $\eb$, then, for any edge $e$ of $\wv$, the strong visibility interval of $e$, denoted by $\SI(e)$, is a subset of $\eb$ such that for any point $x \in \SI(e)$, $x$ sees the entire edge $e$.
\end{definition}

\begin{observation}
    If $e = (u_1,u_2)$ is an edge of a $\wvp, ~\wv$, then its strong visibility interval is the intersection of the intervals of the two vertices of the edge, i.e., $\SI(e) = \I(u_1) \cap \I(u_2)$. 
\end{observation}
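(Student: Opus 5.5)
The claim is $\SI(e) = \I(u_1) \cap \I(u_2)$. I will prove the two inclusions separately, using \cref{prop:vis-equiv} to turn visibility of a point from a guard on $\eb$ into membership in that point's interval.

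\textbf{Forward inclusion.} Let $x \in \SI(e)$. By \cref{def:strongvisibility}, $x$ sees every point of $e$, and in particular its endpoints $u_1$ and $u_2$. By \cref{prop:vis-equiv}, $x \in \I(u_1)$ and $x \in \I(u_2)$. Hence $\SI(e) \subseteq \I(u_1) \cap \I(u_2)$.

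\textbf{Reverse inclusion.} Let $x \in \eb$ lie in $\I(u_1) \cap \I(u_2)$. By \cref{prop:vis-equiv}, the segments $\overline{x u_1}$ and $\overline{x u_2}$ lie in $\wv$. Since $e$ is an edge, $\overline{u_1 u_2} = e$ lies in $\bd(\wv) \subseteq \wv$. So the whole boundary of the triangle $\triangle x u_1 u_2$ is contained in $\wv$.

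I then apply the Jordan-curve argument from the proof of \cref{prop:intervalvisibility}. Suppose some point of the triangle lies in $\ex(\wv)$. It must be in the open interior, since the triangle's boundary is in $\wv$. Because $\ex(\wv)$ is connected and unbounded, it contains a path from that point to a point outside the triangle. This path must cross the triangle's boundary, which lies in $\wv$ and hence misses $\ex(\wv)$, a contradiction. Therefore $\triangle x u_1 u_2 \subseteq \wv$. For every $y \in e$, the segment $\overline{xy}$ lies in this triangle, so $x$ sees $y$. Thus $x \in \SI(e)$.

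\textbf{Main obstacle.} The only delicate step is the reverse inclusion, namely concluding that the closed triangle lies in $\wv$ from the fact that its boundary does. This needs the simplicity of $\wv$, so that it has no holes and its exterior is connected and unbounded. The degenerate case where $x$, $u_1$, $u_2$ are collinear is trivial: the triangle collapses to a segment, which is contained in $\overline{xu_1} \cup \overline{xu_2}$ (if $x$ lies on the line through $e$ but outside $e$) or in $e$ itself (if $x \in e$, for instance when $e$ is adjacent to $\eb$ or $e = \eb$). Either way it already lies in $\wv$.
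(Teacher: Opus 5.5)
Your proof is correct and follows the paper's route: both inclusions, with \cref{prop:vis-equiv} translating visibility from a base point into interval membership. The one difference is that the paper simply asserts that a base point seeing both endpoints of $e$ sees the whole edge, whereas you justify this with the triangle/Jordan-curve argument from \cref{prop:intervalvisibility}; this fills exactly the step the paper leaves implicit and also makes the paper's separate empty-intersection case unnecessary.
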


\begin{proof}
    If $\I(u_1) \cap \I(u_2) = \emptyset$, then there exists no point on $\eb$ which simultaneously sees both the vertices of the edge $e$. Thus, there exists no point on $\eb$ that sees the entire edge $e$ (since such a point must see both the vertices $u_1$ and $u_2$), which implies that $\SI(e) = \emptyset$. Otherwise, if $\I(u_1) \cap \I(u_2) \neq \emptyset$, then every point on $\I(u_1) \cap \I(u_2)$ sees both $u_1$ and $u_2$, and hence, sees the entire edge $e = (u,v)$. Thus, $\I(u_1) \cap \I(u_2) \subseteq \SI(e)$. Conversely, any point $x \in \SI(e)$ sees the entire edge $e$, so it sees both the vertices $u_1$ and $u_2$. Thus, by \cref{prop:vis-equiv}, $x \in \I(u_1)$ and $x \in \I(u_2)$. Therefore, $x \in \I(u_1) \cap \I(u_2)$. So, we get that $\I(u_1) \cap \I(u_2) = \SI(e)$, as required.
\end{proof}

Observe that {\em Monotone Mountain}s (\cref{def-monotone_mountain}) are a proper subclass of $\wvp$s with the property that every edge has a non-empty strong visibility interval.

The authors in \cite{DAESCU201922} gave an $\OO(n)$ algorithm for guarding altitude terrains ($\ATG$), where every edge is guaranteed to have a strong visibility interval on the base, a property inherent to altitude terrains. $\wvp$s, however, do not enjoy this luxury. An edge may have no strong visibility interval on the base at all, making guarding even one edge of a $\wvp{}$ to require arbitrarily many guards on the base edge, in sharp contrast to altitude terrains, where a single guard per edge always suffices.

Before stating our algorithm, we first prove the following proposition (\ref{prop:single-edge-unbounded}) using a six-vertex $\wvp$, which shows why the $\WVP$ case is significantly harder than $\ATG$ and supports our unbounded claim about the number of guards needed.


\begin{proposition}\label{prop:single-edge-unbounded}
For every real $\delta$ with $0 < \delta < 1/2$ there is a simple polygon $P_\delta$ with six vertices, weakly visible from an edge $\eb$ of its boundary, that contains an edge $AB$ of length $1$ with the following property: every set $G$ of points on $\eb$ such that each point of $AB$ is seen from some point of $G$ satisfies $|G| \ge \lceil 1/(2\delta) \rceil$. In particular, the number of guards on the base edge needed to see even a single edge of an $n$-vertex $\wvp{}$ cannot be bounded by any function of $n$.
\end{proposition}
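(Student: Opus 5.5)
We realize the hourglass of \cref{fig:hourglass-intro} with explicit coordinates and bound the visible portion of $AB$ for each individual guard. Take vertices $A=(-1/2,2)$, $B=(1/2,2)$, $v_3=(\delta/2,1)$, $v_4=(1,0)$, $v_5=(-1,0)$, $v_6=(-\delta/2,1)$, listed in boundary order, with $\eb=v_4v_5$ on the line $y=0$. The first step is to check that $P_\delta$ is simple and weakly visible from $\eb$. The lower part, the trapezoid $v_5v_4v_3v_6$, is convex and contains $\eb$, so $\eb$ sees all of it. For a point $q$ in the upper trapezoid $v_6v_3BA$, we drop a segment through the neck $v_6v_3$ to some point of $\eb$. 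Concretely, the line from $q$ through the midpoint $(0,1)$ of the neck meets $y=0$ at a point with $|x|\le 1$ whenever $|x(q)|\le 1$, so it lands in $\eb$. The segment lies inside $P_\delta$ because the upper and lower trapezoids are each convex and share the neck as their common edge.

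The main step is to bound what a single guard sees. Let $g=(t,0)\in\eb$ with $|t|\le 1$. A point $p=(s,2)\in AB$ is visible from $g$ only if $\overline{gp}$ passes through the neck. The segment crosses $y=1$ at the midpoint $((t+s)/2,1)$, so we need $|(t+s)/2|\le \delta/2$, i.e.\ $s\in[-t-\delta,-t+\delta]$. Hence $\vis(g)\cap AB$ is contained in a subsegment of $AB$ of length at most $2\delta$. This uses only that a segment between the two horizontal lines $y=0$ and $y=2$ must cross $y=1$ inside the neck to stay in $P_\delta$: the only part of $P_\delta$ on the line $y=1$ is the closed neck segment, because the boundary there consists of the reflex vertices $v_3, v_6$. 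The same fact can be read off from \cref{prop:intervalvisibility} and anchors, since $\RA,\LA$ of points of $AB$ are $v_6,v_3$.

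To conclude, if $G$ covers $AB$, the sets $\vis(g)\cap AB$ for $g\in G$ are segments of length at most $2\delta$ whose union has length $1$. Therefore $2\delta|G|\ge 1$, and since $|G|$ is an integer, $|G|\ge\lceil 1/(2\delta)\rceil$. Since $n=6$ is fixed while $\delta\to 0$, no function of $n$ bounds the required number of guards.

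The step most likely to need care is verifying that only the neck on $y=1$ lies in $P_\delta$, together with the weak-visibility check. The slanted edges $v_3v_4$, $v_5v_6$, $Av_6$ and $v_3B$ must not intersect each other, and $y=1$ must meet $P_\delta$ exactly in $v_6v_3$. Both follow from the explicit coordinates, because the polygon is the union of two convex trapezoids glued along the neck, one lying in $y\le 1$ and the other in $y\ge1$.
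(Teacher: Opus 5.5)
Your hexagon is exactly the paper's $P_\delta$ (your $v_3,v_4,v_5,v_6$ are its $C,D,E,F$). Your single-guard bound is the paper's argument and is correct: the segment from $(t,0)$ to $(s,2)$ crosses $y=1$ at its midpoint, which must lie in the neck, forcing $s\in[-t-\delta,\,-t+\delta]$. The length count that follows is also the paper's and is correct.

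The step that fails is the weak-visibility check for the upper trapezoid $v_6v_3BA$. For $q=(u,v)$ with $v>1$, the line through $q$ and the neck midpoint $(0,1)$ meets $y=0$ at $x=-u/(v-1)$. This lies in $[-1,1]$ only when $|u|\le v-1$, and points just above a corner of the neck violate that. For example, $q=(\delta/2,\,1+\varepsilon)$ lies in $v_6v_3BA$ for every small $\varepsilon>0$. Its line through $(0,1)$ has foot $-\delta/(2\varepsilon)$, and it actually exits $P_\delta$ through the edge $v_5v_6$ just below the neck. Your proviso $|x(q)|\le 1$ does not help, since every point of the upper trapezoid already has $|x|\le 1/2$.

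The fix is to let the neck point depend on $q$:
\begin{itemize}
\item If $|u|\le\delta/2$, go straight down through $(u,1)$ to $(u,0)\in\eb$.
\item If $u>\delta/2$ (hence $v>1$), aim through the corner $v_3=(\delta/2,1)$. The foot is $\delta/2-(u-\delta/2)/(v-1)$. Since $q$ lies on the inner side of the edge $v_3B$, we have $u-\delta/2\le (v-1)(1-\delta)/2$, so the foot lies in $[\delta-\tfrac12,\,\tfrac{\delta}{2}]\subseteq[-1,1]$.
\item The case $u<-\delta/2$ is symmetric, through $v_6$.
\end{itemize}
Each such segment splits at the neck into a piece of the upper convex trapezoid and a piece of the lower one, so it lies in $P_\delta$. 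The paper does the equivalent thing: it observes that the feet of all rays from $q$ through the neck form an interval $[\varphi(-\delta/2),\varphi(\delta/2)]$, and shows this interval meets $[-1,1]$.

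A minor point that does not affect your proof: your parenthetical remark has the anchors swapped. For $p=(s,2)\in AB$ the shortest path from $(1,0)$ bends at $v_3$, so $\RA(p)=v_3$ and $\LA(p)=v_6$, consistent with $r(p)=\delta-s$.
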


\begin{proof}
Define $P_\delta$ to be the hexagon with vertices
$A = (-\tfrac12, 2)$, $B = (\tfrac12, 2)$, $C = (\tfrac{\delta}{2}, 1)$, $D = (1, 0)$, $E = (-1, 0)$, and $F = (-\tfrac{\delta}{2}, 1)$, in this cyclic order, and let $\eb = DE$ (see \cref{fig:hourglass}). The vertices $C$ and $F$ are its only reflex vertices, and $P_\delta$ is the union of the two convex trapezoids $Q^- = EDCF$ and $Q^+ = FCBA$, which meet exactly in the segment $FC$; we call $FC$ the \emph{window}, and its length is $\delta$. The polygon is symmetric under the reflection $(x, y) \mapsto (-x, y)$. Moreover,
\begin{equation}\label{eq:cross-section}
P_\delta \cap \{\, y = 1 \,\} \;=\; FC ,
\end{equation}
because the line $y = 1$ meets $Q^-$ only in its top side $FC$ and meets $Q^+$ only in its bottom side $FC$.

\emph{The polygon $P_\delta$ is weakly visible from $\eb$.}
Every point of $Q^-$ sees every point of $\eb$, because $Q^-$ is convex and contains $\eb$. Every point $(u, 1)$ of the window is seen from the point $(u, 0) \in \eb$ directly below it. Now let $q = (u, v)$ be a point of $Q^+$ with $v > 1$; by symmetry, we may assume that $u \le 0$. As $\omega$ ranges over $[-\delta/2, \delta/2]$, the ray from $q$ through the window point $(\omega, 1)$ meets the line $y = 0$ at the point $(\varphi(\omega), 0)$, where
\[
\varphi(\omega) \;=\; \frac{\omega v - u}{v - 1}
\]
is an increasing function of $\omega$, so the feet of these rays form the interval $[\varphi(-\delta/2), \varphi(\delta/2)]$. Since $u \le 0$, we get $\varphi(\delta/2) > 0 > -1$. Since $q$ lies on or to the right of the line through $F$ and $A$, we have $-u \le \delta/2 + (v-1)(1-\delta)/2$, and substituting this bound gives $\varphi(-\delta/2) \le (1 - 2\delta)/2 < 1$. The interval $[\varphi(-\delta/2), \varphi(\delta/2)]$ therefore intersects $[-1, 1]$, so some point $p \in \eb$ lies on a ray from $q$ through a window point $w$. By convexity the segment $pw$ lies in $Q^-$ and the segment $wq$ lies in $Q^+$, so $pq \subseteq P_\delta$; hence $p$ sees $q$.

\emph{A single guard sees a portion of $AB$ of length at most $2\delta$.}
Fix a point $p = (x, 0)$ on $\eb$ and let $q$ be a point of $AB$. The segment $pq$ rises from height $0$ to height $2$, so it crosses the line $y = 1$ in exactly one point; if $pq \subseteq P_\delta$, then this crossing point lies in the window by \cref{eq:cross-section}. Conversely, if $pq$ crosses the window at a point $w$, then $pw \subseteq Q^-$ and $wq \subseteq Q^+$ by convexity, so $pq \subseteq P_\delta$. Hence $p$ sees exactly those points of $AB$ that lie on rays from $p$ through window points. The ray from $p$ through $(\omega, 1)$ meets the line $y = 2$ at the point $(2\omega - x,\, 2)$, so the portion of $AB$ seen from $p$ is the segment
\begin{equation}\label{eq:visible-part}
V(p) \;=\; \bigl( [\, -\delta - x,\; \delta - x \,] \times \{2\} \bigr)
\,\cap\, AB .
\end{equation}

\begin{figure}[H]
\centering
\begin{tikzpicture}[xscale=5, yscale=2,
    vtx/.style={circle, fill=blue!70!black, inner sep=1.3pt},
    vlbl/.style={font=\footnotesize, text=blue!50!black},
    ray/.style={dashed, gray!60!black}]

  \pgfmathsetmacro{\del}{0.15}
  \pgfmathsetmacro{\hw}{\del/2}        
  \pgfmathsetmacro{\gx}{-0.2}          
  \pgfmathsetmacro{\vl}{-\del-\gx}     
  \pgfmathsetmacro{\vr}{\del-\gx}      

  \coordinate (A) at (-0.5, 2);
  \coordinate (B) at ( 0.5, 2);
  \coordinate (C) at ( \hw, 1);
  \coordinate (D) at ( 1, 0);
  \coordinate (E) at (-1, 0);
  \coordinate (F) at (-\hw, 1);

  \fill[brown!12] (A) -- (B) -- (C) -- (D) -- (E) -- (F) -- cycle;
  \draw[brown!55!black, thick] (A) -- (B) -- (C) -- (D) -- (E) -- (F) -- cycle;
  \draw[brown!40] (F) -- (C);   

  \node[font=\small, text=brown!45!black] at (-0.16, 1.62) {$Q^{+}$};
  \node[font=\small, text=brown!45!black] at ( 0.05, 0.42) {$Q^{-}$};

  \draw[ray] (\gx, 0) -- (\vl, 2);   
  \draw[ray] (\gx, 0) -- (\vr, 2);   
  \node[vtx] at (\gx, 0) {};
  \node[font=\footnotesize, below=3pt] at (\gx, 0) {$p = (x, 0)$};

  \draw[red!65!black, line width=1.5pt] (\vl, 2) -- (\vr, 2);
  \draw[red!65!black] (\vl, 1.985) -- (\vl, 2.015)
                      (\vr, 1.985) -- (\vr, 2.015);
  \node[font=\footnotesize, above=3pt] at (\vl, 2) {$(-\delta - x,\, 2)$};
  \node[font=\footnotesize, above=3pt] at (\vr, 2) {$(\delta - x,\, 2)$};
  \node[font=\small, text=red!65!black] at ({(\vl+\vr)/2}, 1.92) {$V(p)$};

  \draw[{Stealth[length=4pt]}-{Stealth[length=4pt]}, thin]
        (-\hw, 1.06) -- (\hw, 1.06)
        node[midway, above=1pt, font=\footnotesize] {$\delta$};

  \node[vtx] at (A) {}; \node[vtx] at (B) {}; \node[vtx] at (C) {};
  \node[vtx] at (D) {}; \node[vtx] at (E) {}; \node[vtx] at (F) {};
  \node[vlbl, above left]  at (A) {$A = \bigl(-\tfrac12,\, 2\bigr)$};
  \node[vlbl, above right] at (B) {$B = \bigl(\tfrac12,\, 2\bigr)$};
  \node[vlbl, right=3pt]   at (C) {$C = \bigl(\tfrac{\delta}{2},\, 1\bigr)$};
  \node[vlbl, below right] at (D) {$D = (1,\, 0)$};
  \node[vlbl, below left]  at (E) {$E = (-1,\, 0)$};
  \node[vlbl, left=3pt]    at (F) {$F = \bigl(-\tfrac{\delta}{2},\, 1\bigr)$};

  \node[font=\small, below=10pt] at (0.25, 0) {$\eb$};
\end{tikzpicture}
\caption{The hourglass polygon $P_\delta$ of
  \cref{prop:single-edge-unbounded}, drawn with $\delta = 0.15$. A guard $p = (x, 0)$ on $\eb$ sees the edge $AB$ only through the window $FC$; the portion $V(p)$ that it sees is the image of the window under the central projection from $p$ and has length at most $2\delta$.}
\label{fig:hourglass}
\end{figure}

The window image $[-\delta - x, \delta - x]$ has length exactly $2\delta$, independently of the position of $p$; only its placement along the line $y = 2$ depends on $x$. Consequently, the length of $V(p)$ is at most $2\delta$.

\emph{Counting.}
Let $G$ be a set of $k$ points on $\eb$ that collectively see every point of $AB$. Then $AB = \bigcup_{p \in G} V(p)$, and comparing lengths in \cref{eq:visible-part} yields
\[
1 \;=\; |AB| \;\le\; \sum_{p \in G} |V(p)| \;\le\; 2 k \delta ,
\]
so $k \ge 1/(2\delta)$, and hence $k \ge \lceil 1/(2\delta) \rceil$ because $k$ is an integer. Finally, $P_\delta$ has six vertices for every $\delta$, while $\lceil 1/(2\delta) \rceil \to \infty$ as $\delta \to 0$; therefore, no function of the number of vertices alone bounds the number of guards needed for the edge $AB$.
\end{proof}

Our goal is to solve the guarding problem for WV-polygons \emph{without} assuming the property mentioned in \cref{prop:single-edge-unbounded}. We propose an algorithm in the next subsection to address this problem.

\begin{algorithm}[H]
\caption{Optimal Base-Edge Guarding of a Weak Visibility Polygon}
\label{alg:base-guard}
\DontPrintSemicolon
\SetKwInOut{Input}{Input}
\SetKwInOut{Output}{Output}
\SetKw{KwAnd}{and}
\SetKw{KwTo}{to}
\SetKwFunction{ComputeInterval}{ComputeInterval}
\SetKwFunction{ShootRay}{ShootRay}
\Input{A weak visibility polygon $\wv$ with base edge $\eb$.}
\Output{A minimum guard set $G \subseteq \eb$.}
\BlankLine
\tcp{--- Initialization ---}
$Q_0 \leftarrow V(\wv)$\tcp*{\textcolor{blue}{all vertices of $\wv$}}
$G_0 \leftarrow \emptyset$\;
\ForEach{$q \in Q_0$}{
    compute $\I(q) \subseteq \eb$\;
}
\BlankLine
\tcp{--- Main loop ---}
$i \leftarrow 1$\;
\Repeat{$\mathcal{U}_i = \emptyset$}{
    \BlankLine
    \tcp{Collect uncovered candidate intervals}
    $\mathcal{U}_i \leftarrow \bigl\{\, \I(q) : q \in Q_{i-1},\;\; \I(q) \cap G_{i-1} = \emptyset ~or ~\I(q) \cap G_{i-1} = \{g_{i-1}\} \, \text{with} ~\ell(q) = g_{i-1} < r(q) \bigr\}$\;
    \If{$\mathcal{U}_i \neq \emptyset$}{
        \BlankLine
        \tcp{Select the candidate with the leftmost right endpoint}
        $q^{*} \leftarrow \displaystyle\arg\min_{q\,:\;\I(q)\,\in\,\mathcal{U}_i}\; r\!\left(\I(q)\right)$\;
        $g_i \leftarrow r\!\left(\I(q^{*})\right)$\;
        $G_i \leftarrow G_{i-1} \cup \{g_i\}$\;
        \BlankLine
        \tcp{Witness expansion: join reflex vertices visible from $g_i$ and extend it so that it hits $\bd(\wv)$}
        $Q_i \leftarrow Q_{i-1}$\;
        \ForEach{reflex vertex $v_j$ of $\wv$ visible from $g_i$}{
            $b_j \leftarrow$ intersection of line $\overrightarrow{g_i v_j}$ with $\bd(\wv)$
                             beyond $v_j$\;
            $Q_i \leftarrow Q_i \cup \{b_j\}$\;
            compute $\I(b_j) \subseteq \eb$\;
        }
        $i \leftarrow i + 1$\;
    }
}
\KwRet{$G_{i-1}$}\tcp*{\textcolor{blue}{$G_{i-1}$ is the optimal guard set}}
\end{algorithm}

\subsection{Description of the Algorithm}
    We give a brief description of our algorithm (see \cref{fig:algo-run} for a step-by-step illustration of \cref{alg:base-guard}).
\subsubsection{Invariant notation} 
We maintain:
\begin{itemize}
  \item $Q_i$: a discrete set of \emph{witness candidates} in $\wv$ (points whose intervals are drawn on $\eb$).
  \item $G_i$: a set of guards placed on $\eb$.
\end{itemize}


\tikzset{
  sgpvtx/.style={circle, fill=blue!70!black, inner sep=1.1pt},
  sgpvlbl/.style={font=\tiny, text=blue!50!black, inner sep=1.5pt},
  sgpwit/.style={circle, fill=red!70!black, inner sep=1.2pt},
  sgpwlbl/.style={font=\tiny, text=red!65!black, inner sep=1.5pt},
  sgpgrd/.style={rectangle, fill=green!40!black, inner sep=1.7pt},
  sgpglbl/.style={font=\tiny, text=green!35!black, inner sep=1.5pt},
  sgpqsel/.style={draw=black, circle, inner sep=2.5pt},
  sgpray/.style={dashed, gray!55!black, line width=0.5pt,
                 -{Stealth[length=3.5pt]}},
  sgpguide/.style={densely dotted, green!40!black, line width=0.4pt},
}

\newcommand{\SGPpoly}{%
  \coordinate (E)  at (-1,0);    \coordinate (D)  at (4,0);
  \coordinate (Cp) at (3.1,1);   \coordinate (Bp) at (3,2);
  \coordinate (Ap) at (1.6,2);   \coordinate (Fp) at (1.4,1);
  \coordinate (C)  at (0.15,1);  \coordinate (B)  at (0.5,2);
  \coordinate (A)  at (-0.5,2);  \coordinate (F)  at (-0.15,1);
  \coordinate (g1) at (-0.2,0);  \coordinate (g2) at (0.4,0);
  \coordinate (g3) at (3.8,0);
  \coordinate (w1) at (-0.1,2);  \coordinate (w2) at (2.4,2);
  \coordinate (w3) at (-0.1818,1.0909);
  \fill[brown!10]
    (E)--(D)--(Cp)--(Bp)--(Ap)--(Fp)--(C)--(B)--(A)--(F)--cycle;
  \draw[brown!55!black, thick]
    (E)--(D)--(Cp)--(Bp)--(Ap)--(Fp)--(C)--(B)--(A)--(F)--cycle;
  \draw[brown!35] (F)--(C) (Fp)--(Cp);
  \foreach \p in {E, D, Cp, Bp, Ap, Fp, C, B, A, F}{\node[sgpvtx] at (\p) {};}
  \node[sgpvlbl, above]       at (A)  {$A$};
  \node[sgpvlbl, above]       at (B)  {$B$};
  \node[sgpvlbl, above]       at (Ap) {$A'$};
  \node[sgpvlbl, above]       at (Bp) {$B'$};
  \node[sgpvlbl, below left]  at (F)  {$F$};
  \node[sgpvlbl, below right=1pt] at (C)  {$C$};
  \node[sgpvlbl, below right] at (Fp) {$F'$};
  \node[sgpvlbl, below right] at (Cp) {$C'$};
  \node[sgpvlbl, below]       at (E)  {$E$};
  \node[sgpvlbl, below]       at (D)  {$D$};
  \node[font=\tiny, below] at (2.05,-0.02) {$\eb$};
}

\newcommand{\ivU}[5]{%
  \draw[#4, line width=1.1pt] (#2,#1) -- (#3,#1);
  \draw[#4, line width=0.7pt]
    (#2,#1-0.05)--(#2,#1+0.05) (#3,#1-0.05)--(#3,#1+0.05);
  \node[font=\tiny, text=#4, left=1pt] at (#2,#1) {#5};}
\newcommand{\ivC}[5]{%
  \draw[#4!28, line width=1.1pt] (#2,#1) -- (#3,#1);
  \draw[#4!28, line width=0.7pt]
    (#2,#1-0.05)--(#2,#1+0.05) (#3,#1-0.05)--(#3,#1+0.05);
  \node[font=\tiny, text=#4!45, left=1pt] at (#2,#1) {#5};}

\newcommand{\grdmark}[3]{\node[sgpgrd] at (#1,0) {};
  \node[sgpglbl, #3] at (#1,0) {#2};}
\newcommand{\selmark}[2]{%
  \node[rectangle, rotate=45, fill=black, inner sep=1.2pt]
    at (#1,#2) {};}

\begin{figure}[t]
\centering
\begin{subfigure}[t]{0.49\textwidth}
\centering
\begin{tikzpicture}[xscale=1.32, yscale=1.02]
  \SGPpoly
  \ivU{-0.45}{-0.8}{-0.2}{violet}{$\I(B)$}
  \ivU{-0.72}{0.2}{0.8}{teal!65!black}{$\I(A)$}
  \ivU{-0.99}{-0.2}{3.2}{orange!85!black}{$\I(B')$}
  \ivU{-1.26}{1.2}{4}{olive!85!black}{$\I(A')$}
\end{tikzpicture}
\subcaption{Initialization: $Q_0 = V(\wv)$ and $G_0 = \emptyset$. Only the four nontrivial visibility intervals are shown; every remaining vertex lies on the boundary of the convex lower chamber $EDC'F$, so its interval is all of $\eb$.}
\label{fig:algo-run-a}
\end{subfigure}
\hfill
\begin{subfigure}[t]{0.49\textwidth}
\centering
\begin{tikzpicture}[xscale=1.32, yscale=1.02]
  \SGPpoly
  \node[sgpqsel] at (B) {};
  \draw[sgpray] (g1) -- (w1);   
  \draw[sgpray] (g1) -- (B);    
  \draw[sgpray] (g1) -- (Bp);   
  \node[sgpwit] at (w1) {};
  \node[sgpwlbl, above] at (w1) {$w_1$};
  \grdmark{-0.2}{$g_1$}{below left}
  \draw[sgpguide] (-0.2,-0.06) -- (-0.2,-1.62);
  \ivC{-0.45}{-0.8}{-0.2}{violet}{$\I(B)$}
  \selmark{-0.2}{-0.45}
  \ivU{-0.72}{0.2}{0.8}{teal!65!black}{$\I(A)$}
  \ivU{-0.99}{-0.2}{3.2}{orange!85!black}{$\I(B')$}
  \ivU{-1.26}{1.2}{4}{olive!85!black}{$\I(A')$}
  \ivU{-1.53}{-0.2}{0.4}{red!70!black}{$\I(w_1)$}
\end{tikzpicture}
\subcaption{Iteration $1$: the interval $\I(B)$ has the leftmost right endpoint, so $q^{*} = B$ and $g_1 = r(\I(B))$. Extending $\protect\overrightarrow{g_1 F}$, $\protect\overrightarrow{g_1 C}$, and $\protect\overrightarrow{g_1 F'}$ beyond the reflex vertices creates the witness $w_1 \in AB$ and rehits $B$ and $B'$. Both $\I(w_1)$ and $\I(B')$ meet $G_1$ only in their left endpoint $g_1$, so they remain uncovered.}
\label{fig:algo-run-b}
\end{subfigure}

\medskip

\begin{subfigure}[t]{0.49\textwidth}
\centering
\begin{tikzpicture}[xscale=1.32, yscale=1.02]
  \SGPpoly
  \node[sgpwit] at (w1) {};
  \node[sgpwlbl, above] at (w1) {$w_1$};
  \node[sgpqsel] at (w1) {};
  \draw[sgpray] (g2) -- (w1);   
  \draw[sgpray] (g2) -- (w2);   
  \node[sgpwit] at (w2) {};
  \node[sgpwlbl, above] at (w2) {$w_2$};
  \grdmark{-0.2}{$g_1$}{below left}
  \grdmark{0.4}{$g_2$}{below right}
  \draw[sgpguide] (-0.2,-0.06) -- (-0.2,-1.89);
  \draw[sgpguide] (0.4,-0.06) -- (0.4,-1.89);
  \ivC{-0.45}{-0.8}{-0.2}{violet}{$\I(B)$}
  \ivC{-0.72}{0.2}{0.8}{teal!65!black}{$\I(A)$}
  \ivC{-0.99}{-0.2}{3.2}{orange!85!black}{$\I(B')$}
  \ivU{-1.26}{1.2}{4}{olive!85!black}{$\I(A')$}
  \ivC{-1.53}{-0.2}{0.4}{red!70!black}{$\I(w_1)$}
  \selmark{0.4}{-1.53}
  \ivU{-1.80}{0.4}{3.8}{magenta!80!black}{$\I(w_2)$}
\end{tikzpicture}
\subcaption{Iteration $2$: $q^{*} = w_1$ and $g_2 = r(\I(w_1))$; the intervals $\I(A)$ and $\I(B')$ are now stabbed in their interiors. Extending $\protect\overrightarrow{g_2 F'}$ creates the witness $w_2 \in A'B'$, whose interval is grazed by $g_2$ at its left endpoint.}
\label{fig:algo-run-c}
\end{subfigure}
\hfill
\begin{subfigure}[t]{0.49\textwidth}
\centering
\begin{tikzpicture}[xscale=1.32, yscale=1.02]
  \SGPpoly
  \node[sgpwit] at (w1) {};
  \node[sgpwlbl, above] at (w1) {$w_1$};
  \node[sgpwit] at (w2) {};
  \node[sgpwlbl, above] at (w2) {$w_2$};
  \node[sgpqsel] at (w2) {};
  \draw[sgpray] (g3) -- (w2);   
  \draw[sgpray] (g3) -- (w3);   
  \node[sgpwit] at (w3) {};
  \node[sgpwlbl, left] at (w3) {$w_3$};
  \grdmark{-0.2}{$g_1$}{below left}
  \grdmark{0.4}{$g_2$}{below right}
  \grdmark{3.8}{$g_3$}{below left}
  \draw[sgpguide] (-0.2,-0.06) -- (-0.2,-2.16);
  \draw[sgpguide] (0.4,-0.06) -- (0.4,-2.16);
  \draw[sgpguide] (3.8,-0.06) -- (3.8,-2.16);
  \ivC{-0.45}{-0.8}{-0.2}{violet}{$\I(B)$}
  \ivC{-0.72}{0.2}{0.8}{teal!65!black}{$\I(A)$}
  \ivC{-0.99}{-0.2}{3.2}{orange!85!black}{$\I(B')$}
  \ivC{-1.26}{1.2}{4}{olive!85!black}{$\I(A')$}
  \ivC{-1.53}{-0.2}{0.4}{red!70!black}{$\I(w_1)$}
  \ivC{-1.80}{0.4}{3.8}{magenta!80!black}{$\I(w_2)$}
  \selmark{3.8}{-1.80}
  \ivC{-2.07}{0.2}{3.8}{brown!75!black}{$\I(w_3)$}
\end{tikzpicture}
\subcaption{Iteration $3$: $q^{*} = w_2$ and $g_3 = r(\I(w_2))$. Extending $\protect\overrightarrow{g_3 C}$ creates $w_3 \in FA$, but $\I(w_3)$ already contains $g_2$ in its interior; hence $\mathcal{U}_4 = \emptyset$ and the algorithm outputs $G = \{g_1, g_2, g_3\}$.}
\label{fig:algo-run-d}
\end{subfigure}
\caption{A complete run of Algorithm~\ref{alg:base-guard} on a ten-vertex $\wvp$ with $\lvert G_{\OPT}\rvert = 3$. The left chamber is the hourglass $P_\delta$ of \cref{prop:single-edge-unbounded} with $\delta = 3/10$, so no single guard sees all of $AB$; the right chamber has a wide window and is guarded by a single point. Visibility intervals are drawn below $\eb$: solid while uncovered, faded once covered by the current guard set. Dotted vertical lines mark guard positions, a black diamond marks $r(\I(q^{*}))$, and the circled point is the selected witness $q^{*}$. Expansion rays whose extensions immediately leave $\wv$ are omitted.}
\label{fig:algo-run}
\end{figure}
\FloatBarrier

\subsubsection{Initialization}
\[
  Q_0 := V(\wv) \quad \text{(all vertices of }\wv\text{)}, \qquad G_0 := \emptyset.
\]

\subsubsection{Step \texorpdfstring{$i \geq 1$}{i >= 1}: Guard Placement and Witness Expansion}

\paragraph{Guard selection.}
Define the set of \emph{uncovered} candidate intervals:
\[
  \mathcal{U}_i := \{ \I(q) : q \in Q_{i-1},\; \I(q) \cap G_{i-1} = \emptyset ~or ~\{g_{i-1}\} \}. 
\]
If $\mathcal{U}_i = \emptyset$, terminate; $G_{i-1}$ is the output guard set.

Otherwise, select:
\[
  q^* := \arg\min_{q \in \mathcal{U}_i} r(q)
  \qquad \text{(the candidate with the leftmost right endpoint)}
\]
and place a guard:
\[
  g_i := r(q^*), \qquad G_i := G_{i-1} \cup \{g_i\}.
\]

\paragraph{Witness expansion.}
Join $g_i$ with every reflex vertex $v_j$ of $\wv$ that is visible from $g_i$, and extend it towards the boundary of $\wv$. Let each such line hit $\bd(\wv)$ at a boundary point $b_j$. Define:
\[
  \Delta Q_i := \{ b_j : v_j \text{ reflex},\; v_j \text{ visible from } g_i \},
\]
\[
  Q_i := Q_{i-1} \cup \Delta Q_i.
\]
Compute $\I(b_j)$ for each new point and add them to the interval collection.

\subsubsection{Termination}

\paragraph{Decision variant.}
When a bound $k$ is prescribed, the algorithm runs for at most $k$ iterations, terminating with the guard set
\[
  G_{k} = \{g_1, g_2, \ldots, g_{k}\}.
\]

\paragraph{Optimization variant.}
When the algorithm is run without the $k$-step stopping condition, i.e.\ it continues until $\mathcal{U}_i = \emptyset$, termination in finitely many steps is guaranteed by the following lemma, whose proof uses the compactness of closed intervals on the base.

We now show that the Optimization Variant of the Algorithm \ref{alg:base-guard} indeed stops after finitely many steps.

\begin{lemma}[Finite Guardability of a $\wvp$]\label{lem:finite-guardable}
  Every $\wvp\ \wv$ with base edge $\eb$ can be guarded from $\eb$ by finitely many guards.
\end{lemma}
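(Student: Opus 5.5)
By \cref{thm:boundary-equivalence}, $\gs(\bd(\wv),\eb) = \gs(\wv,\eb)$, so it suffices to show that finitely many points of $\eb$ see all of $\bd(\wv)$. The boundary $\bd(\wv)$ is a finite union of closed segments, hence compact, so the Heine--Borel theorem reduces the lemma to a local statement. The local claim is: \emph{for every $p \in \bd(\wv)$ there is a relatively open neighbourhood $U_p$ of $p$ in $\bd(\wv)$ that is seen by at most three guards on $\eb$.} A finite subcover $U_{p_1},\dots,U_{p_m}$ then yields at most $3m$ guards that see $\bd(\wv)$.

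To prove the local claim, fix $p$ and let $\sigma_1,\sigma_2$ be its two incident boundary pieces: the two halves of the edge through $p$, or the two edges at $p$ if $p$ is a vertex. The first guard is any $g_0 \in \I(p)$, which is nonempty because $\wv$ is a $\wvp$ (\cref{prop:intervalvisibility}); it sees $p$ itself. For each side $\sigma_j$ I want one further guard $g_j \in \eb$ and some $\varepsilon > 0$ such that $g_j$ sees the sub-segment of $\sigma_j$ of length $\varepsilon$ starting at $p$.

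The main tool is an openness argument. Suppose $g \in \eb$ sees $p$, and the segment $\overline{gp}$ meets $\bd(\wv)$ only at $g$, at $p$, and possibly along $\eb$. Then $\overline{gp}$ lies at positive distance from every edge not incident to $p$ or $g$. Near $p$ the polygon is locally a wedge, so points of $\sigma_j$ close enough to $p$ are still seen from $g$, provided $\overline{gp}$ enters $p$ from the interior side of the wedge and not along $\sigma_j$.

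Choosing such a $g$ is a matter of avoiding finitely many bad positions. The positions $g \in \I(p)$ for which $\overline{gp}$ passes through a vertex of $\wv$, or runs along the line of $\sigma_j$, form a finite set: there is at most one per vertex and per edge line. So whenever $\I(p)$ is a non-degenerate interval, I can pick $g_j$ in its interior, away from these finitely many points, and apply the openness argument.

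The step I expect to be the main obstacle is the degenerate case: $\I(p)$ is a single point, or every admissible $\overline{gp}$ grazes a reflex vertex (typically $\LA(p)$ or $\RA(p)$), as happens for $p$ on a ray from a base endpoint through a reflex vertex. Here I would not insist that $g_j$ see $p$; $g_0$ already does that. I would only need $g_j$ to see a punctured one-sided neighbourhood $\sigma_j \setminus \{p\}$ near $p$.

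For this case, by \cref{def-interval} the endpoints $\ell(q), r(q)$ move along $\eb$ as $q$ slides along $\sigma_j$ away from $p$. Since the anchors $\LA(q), \RA(q)$ are constant for $q$ close to $p$ on one side, these endpoints are given by central projections through a fixed vertex and are therefore continuous (indeed monotone) in $q$. Thus $\I(q)$ for $q \in \sigma_j$ near $p$ either shrinks to the single point $\I(p)$ or has a common point. In the first case I would instead take $g_j$ as a point in the interior of $\I(q_0)$ for some nearby $q_0$ and show that the intervals $\I(q)$, for $q$ between $p$ and $q_0$, all contain $g_j$ by monotonicity of the endpoints. If that monotonicity fails, I would fall back on covering $\sigma_j$ near $p$ by a sequence of guards converging to $\I(p)$. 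That fallback is unsatisfactory because it could need infinitely many guards, which is exactly the configuration I need to rule out by showing that the anchors stabilise on a one-sided neighbourhood.
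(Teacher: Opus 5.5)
\paragraph{The gap: the degenerate case.}
Your reduction to $\bd(\wv)$ and the compactness scheme are sound: a guard for $p$, one guard for each punctured side, then Heine--Borel. In spirit this matches the paper's proof, which applies Heine--Borel edge by edge to open visible portions. The gap is the degenerate case you flag, and the remedy you sketch for it does not work.

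On a small one-sided neighbourhood of $p$ in $\sigma_j$ the anchors are constant. So $\I(q)=[f(q),h(q)]$, where, if both anchors exist, $f$ and $h$ are central projections through fixed reflex vertices $a$ and $b$. Suppose $f(q),h(q)\to c$ as $q\to p$, and $\sigma_j$ crosses the line $L$ through $a$ and $b$. Then $p$ and $c$ also lie on $L$, with $a,b$ on the segment $pc$. For $q\in\sigma_j\setminus\{p\}$ near $p$, the ray from $q$ through $a$ (or through $b$) crosses $L$ at that vertex and meets $\eb$ on the far side of $L$.

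\begin{itemize}
\item Hence $f(q)$ and $h(q)$ lie strictly on the same side of $c$, so $c\notin\I(q)$.
\item Since $\I(q)$ shrinks to $c$, each guard sees the points $q\neq p$ only in a set bounded away from $p$.
\item So no finite set of guards sees $\sigma_j$ near $p$.
\end{itemize}

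Taking $g_j$ inside some $\I(q_0)$ therefore cannot help. Showing that the anchors stabilise also gains nothing, because they are already constant here. What is needed is an argument that this configuration never arises, and your proposal invokes no hypothesis that could exclude it.

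\paragraph{General position is required.}
The configuration cannot be excluded without a general-position assumption, so the statement itself needs one; this is where the paper's proof invokes it. Take the polygon with vertices $(-1,0),(1,0),(1,0.9),(0,1),(0.2,3),(0,3),(0,2),(-1,1.64)$ in this order, with base from $(-1,0)$ to $(1,0)$.
\begin{itemize}
\item It is weakly visible from the base.
\item The top-edge point $(t,3)$, $0<t\le 0.2$, sees the base only through the slit between the reflex vertices $(0,1)$ and $(0,2)$. It sees exactly the points $(s,0)$ with $-2t\le s\le -t/2$.
\item A guard at $(s,0)$ therefore sees only the top-edge points with $|s|/2\le t\le 2|s|$, so infinitely many guards are needed.
\end{itemize}
Here $(0,1),(0,2),(0,3)$ are collinear vertices.

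\paragraph{How to close the gap.}
If no three vertices are collinear, the bad configuration can be ruled out.
\begin{itemize}
\item The segment from $q$ through the anchor farther from $p$ passes the nearer anchor on the side of $\sigma_j$. The segment through the nearer anchor passes the farther one on the other side.
\item So the two exterior wedges lie on opposite sides of $L$, and by general position neither has an edge on $L$.
\item Hence there are interior points just beyond the nearer anchor, on the side of its exterior wedge.
\item The shortest path to such a point, from the endpoint of $\eb$ on the side of $\sigma_j$, bends around the farther anchor one way and around the nearer anchor the other way.
\item This is impossible for a point that sees some point of $\eb$, because such a point has convex shortest paths from both endpoints of $\eb$.
\end{itemize}
With this configuration excluded, every sufficiently small one-sided neighbourhood of $p$ has a point of $\eb$ common to all its intervals $\I(q)$. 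Your compactness argument then finishes the proof.
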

\begin{proof}
It suffices to show that each edge $e$ of $\wv$ is individually guardable by finitely many guards on $\eb$: if every edge $e_i$ ($1\le i\le n$) admits a guard set of size $k_{e_i}<\infty$, then $n\cdot\max_i k_{e_i}$ guards on $\eb$ suffice globally. Fix $e = [u,v]$ and write $\eb = [a,b]$.

For each $x \in \eb$, let $\I(x) \cap e = [x_1,x_2]$ (possibly empty) denote the visibility interval of $x$ restricted to $e$, and define
\[
  V_x \;=\;
  \begin{cases}
    (x_1,\, x_2)   & \text{if } x_1 \neq u \text{ and } x_2 \neq v,\\[3pt]
    [x_1,\, x_2)   & \text{if } x_1 = u    \text{ and } x_2 \neq v,\\[3pt]
    (x_1,\, x_2]   & \text{if } x_1 \neq u \text{ and } x_2 = v,\\[3pt]
    [x_1,\, x_2]   & \text{if } x_1 = u    \text{ and } x_2 = v,
  \end{cases}
\]
setting $V_x = \emptyset$ when $\I(x)\cap e = \emptyset$. Each $V_x$ is open in the subspace topology \cite{munkres_topology_2018} on $e$: it equals $U\cap e$ for an open $U\subseteq\mathbb{R}$ obtained by extending $V_x$ slightly past $u$ or $v$ wherever an endpoint coincides with $u$ or $v$.

\medskip
\noindent\textbf{Covering.}\;
We show that for every $y \in e$, there exists $x \in \eb$ with $y \in V_x$.

For each $x \in \eb$ with $\I(x) \cap e \neq \emptyset$, write $\I(x) \cap e = [\ell(x),\, r(x)]$, where $\ell(x)$ and $r(x)$ are the left and right endpoints of the portion of $e$ visible from $x$. Unpacking the definition of $V_x$, the condition $y \in V_x$ holds if and only if
\[
  \ell(x) < y < r(x),
  \quad\text{or}\quad y = u = \ell(x),
  \quad\text{or}\quad y = v = r(x).
\]

\smallskip
\noindent\textit{Endpoints.}\;
For any $x_0 \in \eb$ seeing $u$: since $u$ is the left endpoint of $e$, we have $\ell(x_0) = u$, so $u \in [u, r(x_0)) \subseteq V_{x_0}$. Symmetrically, $v \in V_x$ for any $x$ seeing $v$.

\smallskip
\noindent\textit{Interior points.}\;
Fix $y \in \mathrm{int}(e)$ and let
\[
  S_y \;=\; \{x \in \eb : x \text{ sees } y\}.
\]
Since $\wv$ is a $\wvp$, $S_y \neq \emptyset$; $S_y = [a_y, b_y]$ is a closed subinterval of $\eb$.

We want some $x^* \in S_y$ satisfying $\ell(x^*) < y < r(x^*)$. The guards in $S_y$ that could fail this condition are those for which $y$ coincides exactly with the left or right endpoint of their visibility interval on $e$:
\[
  B_y \;=\; \{x \in S_y : y = \ell(x) \;\text{or}\; y = r(x)\}.
\]

\noindent\textit{$B_y$ is finite.}\;
Fix any $x \in B_y$ with $y = \ell(x) \neq u$, so that $y$ is the left endpoint of $\I(x) \cap e$ and $y \neq u$. This means the visibility of $e$ to the left of $y$ from $x$ is blocked by some vertex $p$ of $\wv$: that is, $x$, $p$, and $y$ are collinear. For a fixed vertex $p$ and fixed $y$, the line through $p$ and $y$ meets $\eb$ in at most one point; hence each vertex of $\wv$ contributes at most one guard to $B_y$. The same reasoning applies to the condition $y = r(x)$. Therefore $|B_y| \leq n$.

\noindent\textit{$S_y$ has positive length.}\;
Suppose for contradiction that $S_y = \{x^*\}$.
Then the visibility of $y$ both opens and closes at the same guard position $x^*$ on $\eb$. The left boundary of $S_y$ at $x^*$ (if $x^*$ is not the left endpoint $a$ of $\eb$) is caused by a vertex $q_L$ of $\wv$ with $x^*$, $q_L$, $y$ collinear. The right boundary (if $x^*$ is not the right endpoint $b$ of $\eb$) is caused by a vertex $q_R$ of $\wv$ with $x^*$, $q_R$, $y$ collinear.

If $q_L \neq q_R$, then $q_L,\, q_R,\, x^*,\, y$ are four collinear points, which is excluded by the general position assumption.

If $q_L = q_R = q$, then a single vertex $q$ simultaneously opens and closes visibility of $y$. As $x$ sweeps through $x^*$, the line through $x$ and $q$ can only transition the visibility of $y$ in one direction (either $y$ enters or leaves $\I(x)\cap e$), so a single vertex cannot cause both a left and a right boundary at the same $x^*$.

In either case, we reach a contradiction, so $b_y > a_y$ and $S_y$ has positive length.

\smallskip
\noindent\textit{Conclusion.}\;
Since $S_y = [a_y, b_y]$ is an uncountable set and $B_y$ is finite, we may choose $x^* \in S_y \setminus B_y$. Then $x^*$ sees $y$ (so $y \in [\ell(x^*), r(x^*)]$), and $y \neq \ell(x^*)$ and $y \neq r(x^*)$ (since $x^* \notin B_y$), giving $y \in (\ell(x^*), r(x^*)) \subseteq V_{x^*}$.

Hence $\{V_x\}_{x \in \eb}$ covers $e$.

\medskip
\noindent\textbf{Compactness.}\;
$e$ is a closed bounded interval, hence compact; by the Heine--Borel theorem~\cite{munkres_topology_2018}, the open cover $\{V_x\}_{x\in\eb}$ admits a finite subcover $V_{x_1},\ldots,V_{x_{m_e}}$ for some $x_1,\dots,x_{m_e}\in\eb$.

\medskip
\noindent\textbf{Guard set.}\;
For any $y\in e$, we have $y\in V_{x_j}$ for some $j$, hence $y\in\I(x_j)\cap e$ and $x_j$ sees $y$. Thus $\{x_1,\dots,x_{m_e}\}$ guards $e$, giving $k_e\le m_e<\infty$.
\end{proof}

\begin{corollary}\label{cor:opt-termination}
  Algorithm~\ref{alg:base-guard}, when run without the $k$-step stopping condition, terminates after at most $n \cdot k^{*}$ iterations and outputs a minimum-sized guard set $\gs(\wv,\eb)$, where $k^{*} = \max_{e} m_e$ is the maximum finite subcover size taken over all edges of $\wv$.
\end{corollary}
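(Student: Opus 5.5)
The corollary has two parts: termination within $n\cdot k^{*}$ iterations, and optimality of the output. I would first bound the optimum using \cref{lem:finite-guardable}. Its proof produces, for each edge $e$, a finite set $\{x_1,\dots,x_{m_e}\}\subseteq\eb$ that guards $e$. The union over all edges is a finite set $G^{\dagger}$ with $|G^{\dagger}|\le \sum_e m_e\le n\cdot k^{*}$. It covers $\bd(\wv)$, so by \cref{thm:boundary-equivalence} it guards all of $\wv$. Hence $\OPT\le n\cdot k^{*}$.

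Next I would show that the number of iterations is at most $\OPT$. The plan is the standard greedy-stabbing argument for interval graphs, run against the lower bound given by witnesses. Let $q^{*}_1,q^{*}_2,\dots$ be the candidates selected in iterations $1,2,\dots$. By construction, $\I(q^{*}_i)$ is not stabbed in its interior by any earlier guard. Since guards are placed at right endpoints chosen in nondecreasing order, $\I(q^{*}_i)$ meets $G_{i-1}$ at most in its left endpoint $g_{i-1}$.

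I then want to argue that the selected points $q^{*}_1,q^{*}_2,\dots$ are pairwise non-co-visible, i.e.\ that they form a witness set in the sense of \cref{def:WitnessSet}. Two selected intervals can share at most one point, the grazing endpoint $g_{i-1}=\ell(q^{*}_i)$. I would handle that case by perturbing the witness slightly along the boundary. Then \cref{prop:vis-equiv}, together with $\alpha(\mathcal{G}_Q)=\bar\chi(\mathcal{G}_Q)$, gives that the number of iterations is at most $\mu_{WS}\le\OPT\le n k^{*}$. This shows the algorithm cannot run past $n\cdot k^{*}$ steps.

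For optimality, I would show that on termination $G_{i-1}$ covers $\bd(\wv)$. Then \cref{thm:boundary-equivalence} makes $G_{i-1}$ a guard set whose size equals the size of a witness set, which forces it to be minimum. Proving the covering is the main obstacle. Suppose some boundary point $y$ is unseen. I would take its visibility interval $\I(y)$, which is nonempty because $\wv$ is a WV-polygon. By \cref{prop:intervalvisibility}, $\I(y)$ avoids $G_{i-1}$. Its endpoints are determined by the anchors $\LA(y)$ and $\RA(y)$, which are reflex vertices. I would then argue that the ray from the nearest guard through the anchor that bounds $\I(y)$ hits $\bd(\wv)$ at a point $b_j$ already in $Q$, with $\I(b_j)\subseteq$ the gap containing $\I(y)$. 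Such a $b_j$ would still lie in $\mathcal{U}$, contradicting $\mathcal{U}=\emptyset$.

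The delicate part of this step is ensuring that the witness-expansion rays actually capture the anchor that bounds the gap. I would attack it by induction on the gaps between consecutive guards. Within each gap, every combinatorial change of $\I(\cdot)$ along the boundary occurs at a ray through a reflex vertex, so some vertex or expanded ray endpoint lying in the same uncovered stretch must have its interval confined to that gap.
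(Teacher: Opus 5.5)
Your first two steps follow the paper's route. You get $\OPT\le n k^{*}$ from \cref{lem:finite-guardable} and \cref{thm:boundary-equivalence}. You bound the number of iterations by the witness set obtained from the almost-disjoint selected intervals, with the single grazing point removed by perturbation; this is the clone construction of \cref{lemma:opt}. The gap is in the completeness step, and it is not a matter of detail: the point $b_j$ you want does not exist. An expansion point shot from a guard $g$ through a reflex vertex that $g$ sees is itself seen by $g$, so $g\in\I(b_j)$. Its interval therefore only touches the gap at $g$ and is never inside it. Moreover, at termination such a $b_j$ is legitimately absent from $\mathcal U$, because an interval grazed at its left endpoint is kept only while the grazing guard is the most recent one. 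The points where $\I(\cdot)$ changes combinatorially along an edge lie on lines determined by pairs of vertices of $\wv$ (extensions of shortest-path-tree edges), not on rays from placed guards, so they are not in $Q$ either. In the hard case the open unseen stretch of an edge contains no candidate at all. The final state (every candidate covered, or grazed by an older guard) is thus consistent with an unseen stretch, so no contradiction with $\mathcal U=\emptyset$ is available, and your closing induction on gaps cannot produce one.

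What is missing is an argument from the history of the run, as in the paper's \cref{lemma:complete} with its point $p^{+}$. First show that guards are placed in strictly increasing order along $\eb$. Every expansion point created at step $i$ has $g_i$ in its interval, so every interval eligible at step $i+1$ has right endpoint strictly greater than $g_i$. Hence the guard placed immediately after $g_a$ is its right neighbour $g_{a+1}$; you also need this order for the non-consecutive disjointness in your iteration bound, where you only assert it. Next, show that the unseen stretch in the gap $(g_a,g_{a+1})$ has an endpoint $c$ with $\ell(c)=g_a$ and $r(c)<g_{a+1}$. This is the geometric heart, the paper's $r(p^{+})<g_{u^{*}}$; one route is the monotonicity of $\ell(\cdot)$ and $r(\cdot)$ along an edge. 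Such a $c$ is either a vertex or the expansion point created when $g_a$ was placed, so it was eligible at the very next iteration. Then $r(c)<g_{a+1}$ contradicts the choice of $g_{a+1}$ as the leftmost right endpoint. The contradiction comes from the greedy rule at that earlier iteration, not from $\mathcal U=\emptyset$ at the end.
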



\section{Correctness}\label{sec:algocorrectness}

\subsection{Part I: Optimality (Lower Bound Matching Upper Bound)}

\begin{lemma}\label{lemma:opt}
At termination after $k-$steps, $|G_{k}|$ equals the minimum guard number for $\wv$ with respect to $\eb$. 
\end{lemma}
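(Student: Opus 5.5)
The plan is a sandwich argument. We exhibit $k$ points $w_1,\dots,w_k$ (one selected candidate $q^*$ per iteration) whose intervals $\I(w_1),\dots,\I(w_k)$ are pairwise disjoint. By \cref{prop:vis-equiv}, no guard on $\eb$ sees two of them, so they form a witness set and $\grN(\wv,\eb)\ge k$. The upper bound $|G_k|=k$ is the feasibility statement: at termination $G_k$ guards $\wv$. That is the content of the companion completeness lemma, and here I would only sketch it via \cref{thm:boundary-equivalence}. Combining the two bounds gives equality, which mirrors $\alpha(\mathcal{G}_Q)=\bar\chi(\mathcal{G}_Q)$ for the interval graph on the final candidate set $Q_k$.

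\textbf{Step 1: monotone guards.} Let $w_i:=q^*$ be the candidate chosen in iteration $i$, so $g_i=r(w_i)$. First I would show $g_1\le g_2\le\dots\le g_k$. The membership rule for $\mathcal{U}_i$ requires $\I(q)\cap G_{i-1}$ to be empty or to be exactly the left endpoint $g_{i-1}$. Hence every interval in $\mathcal{U}_i$ avoids $[\,\cdot\,,g_{i-1})$ in the relevant sense and has right endpoint at least $g_{i-1}$. Note that intervals lying entirely to the left of $g_{i-1}$ would already have been in $\mathcal{U}_{i-1}$ with a smaller right endpoint, contradicting the choice of $w_{i-1}$, unless they were added only at step $i-1$. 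That newly-added case is the delicate one, and I address it in Step 3.

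\textbf{Step 2: disjointness.} For $j<i$, I would show $\I(w_i)\cap\I(w_j)=\emptyset$ with only endpoint contact allowed. Since $w_i\in\mathcal{U}_i$, the interval $\I(w_i)$ contains no $g_j$ except possibly $g_{i-1}$ as its left endpoint. Every point of $\I(w_j)$ lies left of $g_j=r(w_j)$. If the two intervals overlapped, then $\I(w_i)$ would either contain $g_j$ or lie within $[\ell(w_j),g_j)$ and hence end before $g_j$. The second case contradicts the greedy minimality of $r(w_j)$, provided $w_i$ was already a candidate at time $j$. Touching at the single point $g_{i-1}=\ell(w_i)$ is harmless, because the witnesses only need to be unseen by a common guard. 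Here I would argue that a guard at $g_{i-1}$ sees $w_i$ only degenerately, since $\ell(w_i)$ is determined by a ray through a reflex vertex. To get a genuine witness I would shift $w_i$ slightly along the boundary, or accept closed intervals and use the rule that guards grazing a left endpoint are excluded.

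\textbf{Step 3 (main obstacle).} The hard case is a candidate $b$ inserted in iteration $j$ by the witness-expansion step, with $r(b)<g_j$, so that its interval could lie left of the current guard. I would prove this cannot happen. The point $b$ lies on the ray from $g_j$ through a reflex vertex $v$ visible from $g_j$, so $g_j$ is an endpoint of $\I(b)$: the line $g_jv b$ is a supporting ray. By the anchor construction of \cref{def-interval}, the anchor is $v$ on one side, so $\ell(b)=g_j$ or $r(b)=g_j$. I expect to show it is $\ell(b)=g_j$, using the orientation of $v$ relative to $g_j$ and the fact that the portion of $\bd(\wv)$ beyond $v$ is seen only to the right of $g_j$. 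The other orientation gives $r(b)=g_j$, so $\I(b)$ contains $g_j$ and the candidate is covered and never selected. Thus new candidates either contain $g_j$ or start at $g_j$, which is exactly the second clause of $\mathcal{U}_i$, and the greedy invariant of Steps 1–2 survives. This case analysis on which side of the ray the hill lies is where I expect the real geometric work. With it in hand, the $w_i$ are pairwise non-co-visible, $k$ is a lower bound, and optimality follows.
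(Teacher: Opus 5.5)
Your overall route is the paper's. The witnesses are the selected candidates $q_i^*$. The eligibility rule for $\mathcal{U}_i$ plus the greedy choice makes $\I(q_j^*)$ and $\I(q_i^*)$ disjoint, except for one point of contact $g_{i-1}=r(q_{i-1}^*)=\ell(q_i^*)$ when $i=j+1$ and $q_i^*$ entered through the second clause. That contact is then broken by perturbing the witness. Your Steps 1 and 3 are more careful than the paper, which only asserts the monotonicity $r(q_j^*)<r(q_i^*)$. Your justification is the right one: an expansion point $b$ shot from $g_j$ through a reflex vertex has $g_j$ as an endpoint of $\I(b)$, so $b$ is either permanently ineligible or enters exactly through the second clause with $r(b)>g_j$. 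However, the part you flag as the main obstacle is the easy one. The real work is the contact case, and there your proposal has a genuine gap.

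\textbf{Contact is not harmless.} Intervals are closed and $\I(p)=\vis(p)\cap\eb$ exactly (\cref{prop:intervalvisibility}), so a guard at $g_{i-1}$ sees both $q_{i-1}^*$ and $q_i^*$. If every consecutive pair touches, the $k$ intervals can be pierced by $\lceil k/2\rceil$ points, so they certify nothing like $k$.

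\textbf{Excluding grazing guards is not an option.} A witness lower bound must hold against every guard set on $\eb$, and an optimal solution may put a guard exactly at $g_{i-1}$.

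\textbf{The shift needs two things you do not supply.} First, it must go to the side of $q_i^*$ that $\LA(q_i^*)$ shadows from $g_{i-1}$, so that the shifted point has $\ell>g_{i-1}$ strictly; this is the paper's $(\ast\ast)$, with gap $\varepsilon_i$. Second, when several consecutive selections are second-clause picks, $w_{i-1}$ is itself shifted. Its right endpoint is then no longer exactly $g_{i-1}$ and can overshoot $g_{i-1}+\varepsilon_i$, so the shifts must be coordinated. One clean way uses the fact that the set of mutually visible pairs is closed. This gives $\limsup r(p)\le r(q)$ and $\liminf \ell(p)\ge \ell(q)$ as $p\to q$ along $\bd(\wv)$. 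Fix the shifts in the order $i=k,k-1,\dots,1$, each small enough that $r(w_{i-1})<\ell(w_i)$ for the already fixed $w_i$, and that every positive gap between non-consecutive intervals survives. This is what the paper's clone construction and readjustment step (Steps 2--4) accomplish.

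\textbf{Upper bound.} As in the paper, the upper bound here is only deferred to \cref{lemma:complete}. \cref{thm:boundary-equivalence} by itself reduces guarding $\wv$ to guarding $\bd(\wv)$, but it does not show that $G_k$ covers $\bd(\wv)$.
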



\begin{proof}
\textbf{Lower bound.} We claim that the algorithm implicitly constructs a witness set $W$ of size $k$, where $q_i^*$ denotes the candidate point selected at step $i$, and $g_i = r(q_i^*)$ denotes the guard placed at step $i$.

\begin{definition}[Clone of a point]%
\cite{DBLP:journals/corr/abs-2511-10224, DBLP:journals/corr/abs-2605-01592}%
\label{def:clonept}
Let $\po$ be a polygon and let $p$ be a point on the boundary $\bd(\po)$. A \emph{clone point} of $p$, denoted $\clvs[]{p}$, is a point on $\bd(\po)$ that is \emph{infinitesimally} close to $p$.
\end{definition}

Recall the selection rule at step $i$: $q_i^*$ attains the leftmost right endpoint among the intervals in
\[
    \mathcal{U}_i
    \;=\;
    \bigl\{\,\I(q) : q \in Q_{i-1},\;\I(q) \cap G_{i-1} = \emptyset
    ~\text{or}~ \{g_{i-1}\}\,\bigr\}.
\]
So $q_i^*$ falls into one of two cases.
\begin{itemize}
    \item \textbf{Case 1.}\label{case1point} $\I(q_i^*) \cap G_{i-1} = \emptyset$.
    \item \textbf{Case 2.}\label{case2point} $\I(q_i^*) \cap G_{i-1} = \{g_{i-1}\}$, so $\ell(q_i^*) = g_{i-1} = r(q_{i-1}^*)$.
\end{itemize}
By definition of $\mathcal{U}_i$, Case 2 can only ever touch the \emph{most recent} guard $g_{i-1}$, never an earlier one.

We also use a monotonicity fact about the selection, already noted earlier: since the algorithm always takes the admissible interval with the leftmost right endpoint,
\[
    r(q_j^*) < r(q_i^*), \qquad \text{for all } j < i. \tag{$\ast$}
\]

\medskip
\noindent\textbf{Step 1: How the intervals $\I(q_i^*)$ sit relative to each other.}

\smallskip
\noindent\textbf{Claim.} \emph{Let $j<i$. Then either}
\begin{enumerate}
    \item[(a)] $\I(q_j^*) \cap \I(q_i^*) = \emptyset$, \emph{or}
    \item[(b)] $i = j+1$, $q_i^*$ \emph{is a Case 2 point, and} $\I(q_j^*) \cap \I(q_i^*) = \{g_j\}$.
\end{enumerate}

\smallskip
\noindent\textit{Proof of Claim.} Suppose $q_i^*$ is a Case 1 point. Then $g_j \notin \I(q_i^*)$ for every $j<i$. If we had $r(q_j^*) \ge \ell(q_i^*)$, then together with $(\ast)$ we would get $r(q_j^*) \in \I(q_i^*)$, a contradiction. So $r(q_j^*) < \ell(q_i^*)$, and $\I(q_j^*) \cap \I(q_i^*) = \emptyset$. This gives (a) for every $j<i$.

Suppose instead $q_i^*$ is a Case 2 point. Then $g_j \notin \I(q_i^*)$ for every $j < i-1$, and the same argument gives $r(q_j^*) < \ell(q_i^*)$, so (a) holds for these $j$. For $j = i-1$, we already have $\ell(q_i^*) = g_{i-1} = r(q_{i-1}^*)$, so $\I(q_{i-1}^*)$ ends exactly where $\I(q_i^*)$ begins: the two intervals share exactly the point $g_{i-1}$. This is (b). \hfill $\diamond$

So the intervals of $q_1^*,\ldots,q_k^*$ never overlap. The only place two of them can even touch is between consecutive points, and only when the latter one is a Case 2 point; there they meet at exactly one point, $g_{i-1}$.

\medskip
\noindent\textbf{Step 2: The freedom in choosing a clone.}

A clone $\clvs{p}$ is, by \cref{def:clonept}, any point close enough to $p$ on $\bd(\wv)$; the definition does not pin down exactly how close. We use this freedom throughout. Away from the finitely many vertices where the reflex vertex defining $\ell(\cdot)$ or $r(\cdot)$ changes, both maps vary continuously along $\bd(\wv)$. So if $r(q_j^*) < \ell(q_i^*)$ with some positive gap, then for $\clv[j]{q}$ close enough to $q_j^*$, we still have $r(\clv{q}) < \ell(q_i^*)$: as $\clv{q}$ walks from $q_j^*$ towards its candidate clone position, $r(\clv[j]{q})$ moves continuously away from $r(q_j^*)$, so we can always stop early enough, at some point $t^*$ on this stretch of $\bd(\wv)$, to keep $r(t^*)$ inside the gap $(r(q_j^*), \ell(q_i^*))$. The same holds symmetrically on the $\ell(\cdot)$ side. Below, each clone will need to satisfy only finitely many such requirements (at most $k-1$, one for each other witness), and each requirement is satisfied simply by choosing a clone sufficiently close to its source point. So we can always choose it close enough to satisfy all requirements together; we fix exactly how close only once all requirements on it are known.

\medskip
\noindent\textbf{Step 3: Constructing the witness $w_i$.}

If $q_i^*$ is a Case 1 point, set $w_i = q_i^*$.

If $q_i^*$ is a Case 2 point, set $w_i$ to be a clone $\clv[i]{q}$, chosen as follows. Since $g_{i-1} = \ell(q_i^*)$, the guard $g_{i-1}$ sees only a contiguous arc of $\bd(\wv)$ with $q_i^*$ at one end. So one side of $q_i^*$ is not visible from $g_{i-1}$, and we place $\clv[i]{q}$ there.

\begin{itemize}
    \item \textit{Sub-case 2a: $q_i^*$ is a vertex of $\wv$.}\label{ProofWitnessCase1} The endpoint $\ell(q_i^*)$ comes from joining $q_i^*$ to the reflex vertex $\LA(q_i^*)$ and extending to $\eb$. Placing $\clv[i]{q}$ on the side of $q_i^*$ away from this ray moves the point out of $g_{i-1}$'s visible arc (\cref{figcaseA1}).
    \item \textit{Sub-case 2b: $q_i^*$ lies in the interior of an edge $e=(u_1,u_2)$.}\label{ProofWitnessCase2} The guard $g_{i-1}$ sees only a contiguous portion of $e$, with $q_i^*$ at one endpoint of it. Placing $\clv[i]{q}$ on the other side of $q_i^*$ along $e$ again removes it from $g_{i-1}$'s visible portion (\cref{figcaseA2}).
\end{itemize}
In either sub-case, $\I(\clv[i]{q})$ is a contiguous interval on $\eb$ that no longer contains its former left endpoint $g_{i-1}$, so
\[
    \ell(\clv[i]{q}) > g_{i-1}. \tag{$\ast\ast$}
\]
Write $\varepsilon_i := \ell(\clv[i]{q}) - g_{i-1} > 0$ for this gap.

After $k$ steps we obtain $W = \{w_1,\ldots,w_k\}$, with $w_i \in \{q_i^*,\ \clv[i]{q}\}$.

\begin{figure}[ht!]
     \centering
     \begin{subfigure}[t]{0.45\textwidth}
         \centering
         \includegraphics[width=\textwidth]{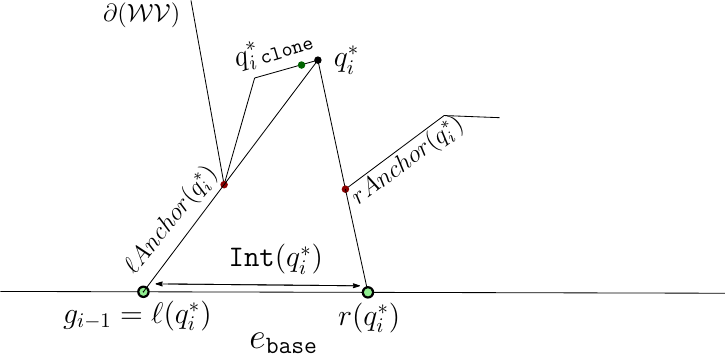}
         \subcaption{\textbf{Subcase} \hyperref[ProofWitnessCase1]{2a}: $q_i^*$ is a vertex of $\wv$.}
         \label{figcaseA1}
     \end{subfigure}
     \hfill
     \begin{subfigure}[t]{0.5\textwidth}
         \centering
         \includegraphics[width=\textwidth]{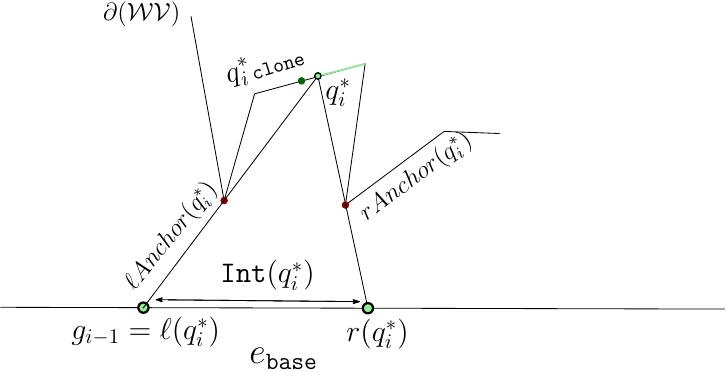}
         \subcaption{\textbf{Subcase} \hyperref[ProofWitnessCase2]{2b}: $q_i^*$ lies on the interior of some edge $e$.}
         \label{figcaseA2}
     \end{subfigure}
        \caption{Producing Witness candidates using Clone Points}
        \label{inducedvertex}
\end{figure}

\medskip
\noindent\textbf{Step 4: $W$ is pairwise separated.}

Fix $j<i$. By the Claim in Step 1, we only need to worry about the case where $\I(q_j^*)$ and $\I(q_i^*)$ touch, i.e. $j=i-1$ with $q_i^*$ a Case 2 point; everywhere else $\I(q_j^*)$ and $\I(q_i^*)$ have a positive gap, which survives however $w_i,w_j$ are chosen, by Step 2. We consider this in the four possible exhaustive cases.

\smallskip
\noindent\textit{(q, q).} $w_i = q_i^*$, $w_j = q_j^*$. Since $w_i = q_i^*$, the point $q_i^*$ is a Case 1 point, so by the Claim, $\I(q_j^*) \cap \I(q_i^*) = \emptyset$ for every $j<i$. Hence $\I(w_i) \cap \I(w_j) = \emptyset$ directly.

\smallskip
\noindent\textit{(q, clone).} $w_i = q_i^*$, $w_j = \clv[j]{q}$. Again $q_i^*$ is Case 1, so $\I(q_j^*) \cap \I(q_i^*) = \emptyset$ with a positive gap, by the Claim. By Step 2, we choose $\clv[j]{q}$ close enough to $q_j^*$ that this gap survives, giving $\I(w_j) \cap \I(w_i) = \emptyset$.

\smallskip
\noindent\textit{(clone, q).} $w_i = \clv[i]{q}$, $w_j = q_j^*$.

If $j < i-1$: by the Claim, $\I(q_j^*) \cap \I(q_i^*) = \emptyset$ with a positive gap, and by Step 2 we choose $\clv[i]{q}$ close enough to $q_i^*$, on top of satisfying $(\ast\ast)$, that this gap survives.

If $j = i-1$: since $w_j = q_j^*$ is the actual point, $r(w_j) = r(q_{i-1}^*) = g_{i-1}$ exactly. By $(\ast\ast)$, $\ell(w_i) = \ell(\clv[i]{q}) > g_{i-1} = r(w_j)$. So $\I(w_i) \cap \I(w_j) = \emptyset$ by construction, with no further adjustment needed. This is the situation in \cref{figcaseA1,figcaseA2}.

\smallskip
\noindent\textit{(clone, clone).} $w_i = \clv[i]{q}$, $w_j = \clv[j]{q}$.

If $j<i-1$: as above, $\I(q_j^*)\cap\I(q_i^*)=\emptyset$ with a positive gap, and by Step 2, we choose both clones close enough to their source points to preserve them.

If $j = i-1$: this is the case not yet handled by the construction in Step 3, since there $q_i^*$ was only compared against actual points, not against a clone of $q_{i-1}^*$. We resolve it now. By $(\ast\ast)$, placing $\clv[i]{q}$ opens a gap $\varepsilon_i = \ell(\clv[i]{q}) - g_{i-1} > 0$ above $g_{i-1}$. Since $\clv[i-1]{q}$ is a clone of $q_{i-1}^*$, we have $r(\clv[i-1]{q})$ moves to $r(q_{i-1}^*) = g_{i-1}$ as $\clv[i-1]{q}$ moves to $q_{i-1}^*$, by continuity of $r(\cdot)$ near $q_{i-1}^*$. So we may choose $\clv[i-1]{q}$ close enough to $q_{i-1}^*$ that
\[
    r(\clv[i-1]{q}) \;<\; g_{i-1} + \varepsilon_i \;=\; \ell(\clv[i]{q}).
\]
This gives $\I(\clv[i-1]{q}) \cap \I(\clv[i]{q}) = \emptyset$, as required.

It remains to check that shrinking $\clv[i-1]{q}$ this way does not undo any earlier requirement already placed on it, for instance that it avoids $g_{i-2}$ if $q_{i-1}^*$ was itself a Case 2 point. Every such earlier requirement was itself of the form "close enough to $q_{i-1}^*$", so moving $\clv[i-1]{q}$ even closer to $q_{i-1}^*$ might change the non-intersecting property with the earlier interval, $\I(q_{i-2}^*)$. In that case, we again readjust our choice of clone for the witness $w_{i-2}$, just in the same way we did for $w_{i-1}$. It may happen that for some particular $w_i$, we will have to readjust our clone point choices for every preceding $j$, where $j<i$. However, existentially we always have a choice for a witness point for all $i = 1, 2, \cdots, k$. Hence, we can always ensure to have $\I(w_j) \cap \I(w_i) = \emptyset$.

\medskip
\noindent\textbf{Conclusion.} These four cases cover every pair $w_i,w_j \in W$, and in each we found witnesses with $\I(w_i)\cap\I(w_j)=\emptyset$. So $W$ is a witness set of size $k$. Hence, any guard set for $\wv$ with respect to $\eb$ has size at least $k$, since the intervals of the elements of $W$ are pairwise disjoint, any valid guard set must contain at least one guard lying within each of these $k$ mutually disjoint intervals.  Therefore, $\mathrm{OPT} \geq k$.
\end{proof}


\subsection{Part II: Completeness (Every Boundary Point is Guarded)}

\begin{lemma}\label{lemma:complete}
At termination (upto $k$-steps or till $\mathcal{U}_i = \emptyset$), every point $p \in \bd(\wv)$ satisfies $\I(p) \cap G_{\OPT} \neq \emptyset$.
\end{lemma}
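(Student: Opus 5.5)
The plan is to argue by contradiction, using the fact that the visibility pattern of the output guard set $G$ along $\bd(\wv)$ can only change at finitely many \emph{critical points}, and that all of these critical points were inserted into the candidate set $Q$ by the witness-expansion step. Suppose some $p \in \bd(\wv)$ has $\I(p) \cap G = \emptyset$. By \cref{prop:vis-equiv} this means no guard of $G$ sees $p$. By \cref{thm:boundary-equivalence} it suffices to treat boundary points, and $Q_0 = V(\wv)$ already contains all vertices. So $p$ lies in the relative interior of some edge $e$, and $p$ is not a vertex.

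\textbf{Step 1 (critical points).} Fix a guard $g \in G$. The portion of $e$ visible from $g$ is a finite union of closed subsegments, by the hill argument of \cref{obs-hill}. Each endpoint of such a subsegment that is not a vertex of $\wv$ is obtained by shooting the ray from $g$ through a reflex vertex $v_j$ visible from $g$ until it meets $\bd(\wv)$. These are exactly the points $b_j \in \Delta Q_i$ added when $g = g_i$ was placed. Hence, if $a$ and $b$ are the two points of $Q_{\mathrm{final}}$ on $e$ adjacent to $p$, then every guard of $G$ either sees all of the open arc $(a,b)$ or none of it. Consequently no guard sees any point of $(a,b)$.

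\textbf{Step 2 (the endpoints are grazed).} The arc endpoints $a$ and $b$ belong to $Q$. Termination means $\mathcal{U} = \emptyset$, so each of $\I(a)$ and $\I(b)$ meets $G$. By Step 1, every guard in $\I(a)$ sees $a$ but not the adjacent open arc, so it only \emph{grazes} $a$. Using the continuity of $\ell(\cdot)$ and $r(\cdot)$ on $(a,b)$, which holds because no anchor changes there, together with closedness, I expect to show the following. There are consecutive guards $g_t < g_{t+1}$ (or a boundary end of $\eb$) such that $\I(x) \subseteq (g_t, g_{t+1})$ for $x \in (a,b)$. Moreover, the guards grazing $a$ and $b$ sit at endpoints of $\I(a)$ and $\I(b)$, equal to $g_t$ or $g_{t+1}$.

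\textbf{Step 3 (contradiction with the greedy rule).} This is the main obstacle, because the definition of $\mathcal{U}_i$ tolerates grazing only in the specific form $\ell(q) = g_{i-1}$ with the most recent guard. I would first handle grazing at a right endpoint. If $r(a) = g_{t+1}$ is the only contact of $\I(a)$ with $G$, consider step $t+1$. At that step $\I(a)$ was uncovered, since it meets $G_t$ at most in $g_t$ at its left end. Also $g_{t+1} = r(q^*_{t+1})$ is the leftmost admissible right endpoint. Then $\I(x)$ for $x \in (a,b)$ near $a$ has right endpoint strictly less than $g_{t+1}$ and lies in the admissible region. One must show that some candidate already in $Q_t$ realizes this smaller right endpoint, for instance $a$ itself or a ray point produced by $g_t$. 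That would contradict the minimality of $r(q^*_{t+1})$. Grazing at a left endpoint $\ell(a) = g_t$ is exactly Case~2 of $\mathcal{U}$, so $a$ would still have been admissible after $g_t$ was placed. Its right endpoint then forces a guard to lie in $(g_t, r(a)]$. This guard sees the arc unless it also grazes, which pushes us back to the right-endpoint case.

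If this bookkeeping fails, the fallback is to run the same argument with clones. The clone construction of \cref{lemma:opt} would turn the uncovered arc into an extra witness disjoint from the $k$ witnesses already produced, contradicting $|G| = k$. Either route gives $\I(p) \cap G \neq \emptyset$.
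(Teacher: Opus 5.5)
Your Step~1 is sound (the visible part of $e$ from any output guard has all its non-vertex endpoints among the expansion points, so no guard sees any point of the open arc $(a,b)$). The gap is Step~3, which is where the lemma is actually decided, and it is not proved. Your left-grazing case only yields a guard in $(g_t,r(a)]$, i.e.\ (with Step~2) $\I(a)=[g_t,g_{t+1}]$. You then send this to the right-grazing case. But that case was set up only under the hypothesis that $g_{t+1}$ is the \emph{sole} contact of $\I(a)$ with $G$, which now fails. Even there, the decisive claim (that some candidate already in $Q_t$ is admissible at step $t+1$ with right endpoint $<g_{t+1}$) is left as ``one must show''.

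As written, nothing in your proposal excludes $\I(a)=\I(b)=[g_t,g_{t+1}]$ with the intervals of the interior points of the arc strictly inside $(g_t,g_{t+1})$. In that configuration no candidate contradicts the greedy rule, so the argument loops. The clone fallback is circular. A $(k+1)$-st interval disjoint from the $k$ witness intervals merely restates that the output guards miss a point. It would contradict something only if you already knew $\OPT\le k$, and that is exactly what completeness is supposed to deliver.

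Three ingredients are missing.

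(i) A guard that sees both $a$ and $b$ sees all of $[a,b]$. The triangle with apex $g$ and base $[a,b]\subseteq e$ has its boundary in $\wv$, so it lies in $\wv$ by the argument in the proof of \cref{prop:intervalvisibility}; this is the paper's case $g_L=g_R$. Hence no guard lies in both $\I(a)$ and $\I(b)$. If both are contained in $[g_t,g_{t+1}]$, then both neighbouring guards exist, and one endpoint, say $a$, has $\I(a)\cap G=\{g_t\}$, i.e.\ $\ell(a)=g_t$ and $r(a)<g_{t+1}$.

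(ii) Every expansion point generated by $g_s$ has $g_s$ in its interval, because the segment from $g_s$ through $v_j$ to $b_j$ lies in $\wv$. This has two consequences. First, every interval in $\mathcal U_{i+1}$ has right endpoint $>g_i$, so guards are placed left to right and the positional $g_{t+1}$ is the one chosen at step $t+1$. Second, $a\in Q_t$, since $a$ is a vertex or was generated by $g_t$. Then (generically $r(a)>g_t$) $\I(a)\in\mathcal U_{t+1}$ by the Case~2 clause, and $r(a)<g_{t+1}=\min_{\mathcal U_{t+1}}r$, which is a contradiction. The paper reaches the same contradiction via the ray from the left guard through the left anchor, producing $p^+$.

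(iii) Step~2's passage to the endpoints is too quick. Closedness only shows that limits of points of $\I(x)$, as $x\to a$, lie in $\I(a)$, not the reverse. At a reflex vertex $a$ that sees points of $\eb$ on the exterior side of the line through $e$, the reverse inclusion fails. Then $a$ can be covered by a guard outside $[g_t,g_{t+1}]$ that sees no interior point of $e$. That case needs its own argument. The line through $e$ then meets $\eb$ at a point $\beta$ beyond $a$ which $a$, and hence every point of $e$, sees. The triangle with vertices $b$, $\beta$ and any $\gamma\in\I(b)$ lies in $\wv$ and contains the segment from $x$ to $\gamma$ for every $x\in(a,b)$. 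So $\I(b)\subseteq\I(x)$, and a guard in $\I(b)$ already sees $(a,b)$.
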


\begin{proof}
After running the above algorithm for $k$ steps, if $\mathcal{U}\neq\emptyset$, then there exists $p \in Q$ such that $\I(p)$ doesn't intersect with any guard from $G_{k}$, and hence we can add $p$ to the witness set $W$ and get another witness candidate. So the total number of witness points is at least $k+1$. But as $\grN(\wv,\eb) \geq \witN(\wv,\eb)$, it implies that the minimum number of guards needed to guard the polygon is at least $k+1$. Hence, the weak visibility polygon $\wv$ is not $k$-guardable.

Hence, assume, after k iterations, $\mathcal{U}=\emptyset$. We show that via our algorithm, all points of $\bd(\wv)$ have been guarded. To prove that, we argue by contradiction. Suppose there exists $p^* \in \bd(\wv)$ such that $\I(p^*) \cap G_{k} = \emptyset$, i.e., $p^*$ is unguarded at termination.
 
\medskip
\noindent\textbf{Case 1: $p^*$ is a vertex.}
Then $p^*$ would have been included in $Q$ in the very first step, so $\I(p^*)$ is present from the very first iteration. Suppose the algorithm runs for $\le k$ iterations before halting with $\mathcal{U}=\emptyset$. If $\I(p^*)$ were never covered, then $p^*$ would still be in $\mathcal{U}$ at termination, contradicting $\mathcal{U}=\emptyset$.
 
\medskip
\noindent\textbf{Case 2: $p^*$ is a non-vertex boundary point.}

Let $e = (u, v)$ be the edge of $\bd(\wv)$ containing $p^*$, with $u$ the left endpoint and $v$ the right endpoint along the fixed orientation of $\bd(\wv)$.  Let $Q\!\restriction_e$ denote the set of candidate points in $Q$ that lie on $e$.  This set is non-empty because $u,v\in Q$ from initialization. Order the points of $Q\!\restriction_e$ along $e$ from left to right.  Let $u^*$ be the point in $Q\!\restriction_e$ lying immediately to the left of $p^*$, and $v^*$ be the point in $Q\!\restriction_e$ lying immediately to the right of $p^*$ (with $u^*=u$ if no expansion point was added between $u$ and
$p^*$, and similarly $v^*=v$).  By construction, no point of $Q$ lies strictly between $u^*$ and $v^*$ on $e$.
 
Since $u^*, v^* \in Q$, their intervals $\I(u^*)$ and $\I(v^*)$ were placed into $\mathcal{U}$ at some iteration. The algorithm terminates only when $\mathcal{U} = \emptyset$, so both $u^*$ and $v^*$ are guarded at termination. Let $g_{u^*}$ and $g_{v^*}$ be guards in $G_{k}$ covering $u^*$ and $v^*$, respectively.
 
\begin{claim}\label{claim:consecutive}
There exist guards $g_L,g_R\in G_{k}$ such that $g_L\in\I(v^*)$, $g_R\in\I(u^*)$, and either they are same (i.e., $g_L = g_R$) or $g_L,g_R$ are \emph{consecutive} in $G_{k}$ along $\eb$, meaning no guard of $G_{k}$ lies strictly between $g_L$ and $g_R$.
\end{claim}
 
\begin{proof}[Proof of Claim~\ref{claim:consecutive}]
Define
\[
  g_L \;:=\; \max_x\bigl\{g\in G_{k} : g \in \I(v^*)\bigr\},
  \qquad
  g_R \;:=\; \min_x\bigl\{g\in G_{k} : g \in \I(u^*)\bigr\}.
\]
(These are well-defined since both $\I(u^*)$ and $\I(v^*)$ contain at least one guard). We can evaluate their positions by looking at two possibilities:

\begin{itemize}
    \item \textbf{Case 1: $g_L =_x g_R$} \\
    A single guard $g\in G_{k}$ lies in $\I(u^*)\cap\I(v^*)$. Since $u^*$ and $v^*$ are on the same edge $e$ with no reflex vertex between them, the strong visibility interval of the open sub-edge $(u^*,v^*)$ satisfies
    \[
      \SI(u^*v^*) \;\supseteq\; \I(u^*)\cap\I(v^*),
    \]
    and in particular $g\in\SI(u^*v^*)$.  Hence $g$ sees every point of $(u^*,v^*)$, including $p^*$, contradicting $\I(p^*)\cap G_{k}=\emptyset$.

    \item \textbf{Case 2: $g_L >_x g_R$} \\
    We claim that this case cannot occur. Suppose for contradiction $g_L > g_R$. If $g_L > g_R$, then $g_L\in\I(v^*)$ and $g_R\in\I(u^*)$ with $g_R<g_L$.  Then by definition, $g_R\in\I(u^*)$ means the line segment $g_R\,u^*$ is contained in $\wv$. Since $u^*\,v^*$ is part of the edge, $g_R$ also sees $v^*$ (otherwise if the line segment $g_R\,v^*$ is not inside $\wv$, then there has to be an obstruction, which will also obstruct the visibility of $v^*$ from $g_L$). Hence $g_R \in \I(v^*)$, a contradiction to the definition of $g_L$, as $g_L$ is the leftmost guard guarding $v^*$.

    \item \textbf{Case 3: $g_L <_x g_R$} \\
    Suppose some $g\in G_{k}$ satisfies $g_L < g < g_R$. By the greedy placement rule (guards are placed at right endpoints of uncovered intervals, processed by increasing right endpoint), $g$ was placed at some iteration to cover an interval $\I(q)$ for some $q\in Q$.  Since $g_L<g<g_R$, guard $g$ sees some candidate witness $q\in Q$.  This witness $q$ must lie on edge $e$ between $u^*$ and $v^*$: if $q$ lay outside $(u^*,v^*)$, then by the ordering of $Q\!\restriction_e$ either $q\leq u^*$ or $q\geq v^*$ on $e$, but $g\in(g_L,g_R)$ and the visibility interval of any point left of $u^*$ (resp.\ right of $v^*$) has right endpoint $\leq r(u^*)< g_R$ (resp.\ left endpoint $\geq \ell(v^*)>g_L$), so $g$ could only cover such a $q$ if $g\leq r(u^*) < g_R$ or $g\geq\ell(v^*) > g_L$---we can say, any point of $Q$ outside $(u^*,v^*)$ whose interval contains $g$ would be a closer neighbour of $p^*$ in $Q\!\restriction_e$ than either $u^*$ or $v^*$, contradicting the choice of $u^*$ and $v^*$ as immediate neighbours. 

    Therefore $q\in(u^*,v^*)\cap Q$, contradicting the assumption that $u^*$ and $v^*$ are immediate neighbours of $p^*$ in $Q$.  Hence, no such $g$ exists, and $g_L,g_R$ are consecutive (see \cref{fig:guarding-completeness}). \qedhere
\end{itemize}
\end{proof}

\begin{figure}[H]
    \centering
    \includegraphics[width=0.75\linewidth]{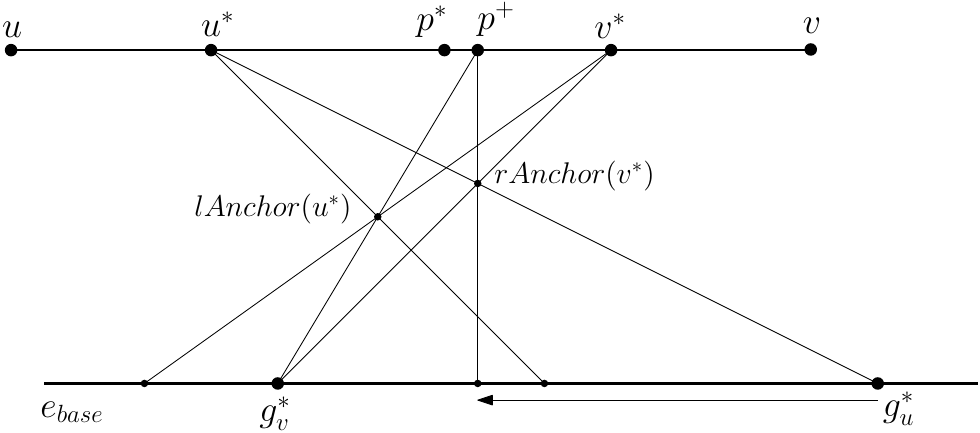}
    \caption{Illustration of the Proof of Case 2: guards $g_{u^*}^*$ and $g_{v^*}^*$ with anchor rays through $\LA(u^*)$ and $\RA(v^*)$ locating the witness point $p^+$.}
    \label{fig:guarding-completeness}
\end{figure}
 
Let $g_{v^*}$ and $g_{u^*}$ denote the two consecutive guards from Claim~\ref{claim:consecutive}, with $g_{v^*} < g_{u^*}$ on $\eb$ (the case $g_{u^*} = g_{v^*}$ was handled above) such that $g_{v^*}$ sees $v^*$ and $g_{u^*}$ sees $u^*$.
 
Now join $g_{v^*}$ with $\LA(u^*)$ of $u^*$, and let this extended line meet the edge $e_{u,v}$ at a point $p^+$ beyond $\LA(u^*)$. By the definition of the left anchor and the interval map, $p^+$ is the boundary point whose left interval endpoint $\ell(p^+)$ equals $g_{v^*}$; that is, $\ell(p^+) = g_{v^*}$
 
Since $p^+$ lies on $e_{u,v}$ strictly to the right of $u^*$ and to the left of $v^*$ (by the consecutiveness of $g_{u^*}$ and $g_{v^*}$ and the visibility geometry), and $g_{u^*}$ does \emph{not} see $p^+$ (as $g_{u^*}$ is to the right of $r(p^+)$ by the anchor ray construction), we have: $r(p^+) < g_{u^*}$
 
By the witness expansion step, $p^+$ is added to $Q$ when the guard $g_{v^*}$ is placed and the ray through $\LA(u^*)$ is shot. Hence $\I(p^+)$ enters $\mathcal{U}$ in the next iteration. The algorithm selects guards by their leftmost right endpoints, so when $\I(p^+)$ is uncovered, it places a guard at $r(p^+)$. But $r(p^+) < g_{u^*}$ means the algorithm would have placed $g_{u^*}$ at $r(p^+)$ rather than at its actual position---a contradiction with the selection rule. 
 
Hence, no such unguarded $p^*$ exists, and every non-vertex boundary point is covered at termination. 
\end{proof}


\section{Runtime Analysis of the Algorithm}\label{sec:algoruntime}

The algorithm outputs a minimum guard set $G \subseteq \eb$ and a \emph{candidate witness set} $\widehat{W} = \{\widehat{w}_1, \dots, \widehat{w}_k\}$. The points in $\widehat{W}$ are not themselves the witnesses: each $\widehat{w}_i$ is either a vertex of $\wv$ or a boundary intersection point produced by the ray-shooting step. We show that for each $\widehat{w}_i$ there exists a \emph{clone point} $w_i$ in its vicinity that serves as a valid witness, i.e.\ $w_i$ is visible from $g_i$ but not from any other guard in $G$. The existence of these clone points is established in \cref{lemma:opt}, ensuring that $W = \{w_1, \dots, w_k\}$ is a valid witness set of size $k = \OPT$.

\subsection{Runtime Analysis}

Let $n = |V(\wv)|$, let $\rho$ denote the number of reflex vertices of $\wv$, and recall
that the algorithm places $k = \OPT$ guards. Note that $\rho \le n$, and that $\rho$ may be
much smaller than $n$; we therefore keep $\rho$ as an explicit parameter. We show that the
algorithm runs in $\OO\bigl((n + k\rho)(\log n + \log k)\bigr)$ time. The two logarithmic
factors have distinct origins, and we keep them apart throughout: $\log n$ comes from
queries against the fixed polygon, and $\log k$ from operations on data structures whose
size grows with the number of guards placed. The parameter $\rho$ enters only through the
\emph{number} of such operations, never through the cost of an individual one.

\subsubsection{Preprocessing.}
We build the shortest-path tree (SPT) \cite{GUIBAS1989126} of $\wv$ rooted at the left
endpoint of $\eb$ and the SPT rooted at the right endpoint, each in $\OO(n)$ time. From
these two trees we compute, in $\OO(n)$ total time, the visibility interval
$\I(v) \subseteq \eb$ of every vertex $v \in V(\wv)$; the $\rho$ reflex vertices are
identified during the same pass. For the ray-shooting queries, we use the data structure of
Hershberger and Suri \cite{DBLP:journals/jal/HershbergerS95}, built once on $\wv$ in
$\OO(n)$ time and $\OO(n)$ space, answering each query in $\OO(\log n)$ time. We also
initialize two dynamic structures used by the main loop: a min-heap $H$ that stores
candidate intervals keyed by their right endpoints $r(\cdot)$, and a balanced search tree
$T$ that stores the guards placed so far as points of $\eb$. Loading the $n$ initial intervals into $H$ takes $\OO(n)$ time by bottom-up heap
construction; this batch bound applies only here, since during the main loop
insertions and extractions are forcibly interleaved and are charged as individual
heap operations. Preprocessing, therefore, costs $\OO(n)$.

\subsubsection{Total number of rays.}
It is tempting to bound the number of boundary points created over the whole execution by
$\rho$, on the grounds that each reflex vertex produces one such point. This is not
correct: a reflex vertex can be visible from several guards, and the loop shoots a fresh
ray through it once for each such guard, creating a distinct boundary point every time.
The correct bound comes from charging each ray to the pair (guard, reflex vertex) that
fires it.

\begin{lemma}\label{lem:raycount}
Let $r_i$ denote the number of reflex vertices of $\wv$ visible from the guard $g_i$. Then
\[
  \sum_{i=1}^{k} r_i \;\le\; k\rho .
\]
\end{lemma}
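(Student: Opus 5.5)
The bound follows by bounding each summand separately and then summing. For each guard $g_i$, the number $r_i$ counts reflex vertices of $\wv$ visible from $g_i$. Whatever those vertices are, they form a subset of the set of all reflex vertices of $\wv$, which has cardinality $\rho$. So $r_i \le \rho$ for every $i \in \{1,\dots,k\}$, and summing over the $k$ guards placed by Algorithm~\ref{alg:base-guard} gives $\sum_{i=1}^{k} r_i \le k\rho$.

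Before this, I would make precise what is being counted, so that the sum equals the number of rays the main loop fires. In the witness-expansion step at iteration $i$, the loop ranges over the reflex vertices $v_j$ visible from $g_i$ and shoots exactly one ray $\overrightarrow{g_i v_j}$ per such vertex. So iteration $i$ fires exactly $r_i$ rays, and each fired ray corresponds injectively to a pair $(g_i, v_j)$ with $v_j$ reflex and visible from $g_i$.

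The total number of rays over the execution is therefore the number of such pairs. This is at most the size of the set $\{g_1,\dots,g_k\} \times \{\text{reflex vertices}\}$, namely $k\rho$. This is the charging argument the preceding paragraph anticipates. It also explains why the naive bound $\rho$ fails: a reflex vertex may appear in up to $k$ pairs, one for each guard that sees it.

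I expect no real obstacle. The only points to check are the following.
\begin{itemize}
\item The loop runs exactly $k=\OPT$ iterations, each placing one guard; this is guaranteed by \cref{cor:opt-termination} together with \cref{lemma:opt}.
\item Distinct iterations use distinct guards. By the monotonicity $(\ast)$ in the proof of \cref{lemma:opt}, the positions $g_i$ are strictly increasing, so no pair is double-counted across iterations. The bound holds regardless, since we sum $r_i$ over iterations, not over distinct guards.
\end{itemize}
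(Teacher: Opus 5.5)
Your proof is correct and is essentially the paper's argument: the paper charges each ray to a (guard, reflex vertex) pair, so each of the $\rho$ reflex vertices is charged at most once per iteration, which is the same as your termwise bound $r_i \le \rho$ summed over the $k$ iterations. The side checks you list (exactly $\OPT$ iterations, strict monotonicity of the $g_i$) are not needed, because the inequality holds termwise for any number $k$ of iterations.
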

\begin{proof}
In iteration $i$, the loop shoots exactly one ray per reflex vertex visible from $g_i$, so
a reflex vertex is charged in iteration $i$ if and only if it is visible from $g_i$. Each
of the $\rho$ reflex vertices is therefore charged at most once in each of the $k$
iterations.
\end{proof}

By \cref{lem:raycount} the candidate set $Q$ grows to size at most $n + k\rho$, and this
is tight in the worst case. Consequently the heap $H$ holds at most $n + k\rho$ intervals at any time, so a
single heap operation costs $\OO\bigl(\log(n + k\rho)\bigr) = \OO(\log n + \log k)$, using $\rho \le n$ (The $\OO(n)$ bottom-up construction cannot be reused inside the loop: the intervals created in iteration $i$ do not exist until the extractions of that iteration have fixed $g_i$, so the operations cannot be batched). Any
implementation that re-sorts or linearly scans the entire candidate set in each iteration
would spend $\Omega(n + i\rho)$ time in iteration $i$, hence
$\Omega\bigl(kn + k^{2}\rho\bigr)$ overall; the heap avoids the quadratic dependence on
$k$, as we show next.

\subsubsection{Main loop.}
The loop runs for $k$ guard-placing iterations, followed by a final iteration that
certifies $\mathcal{U}_i = \emptyset$. We bound the two steps of an iteration separately
and then sum.

\emph{Selecting the next guard.} To choose $g_i$ we need the eligible candidate interval
with the smallest right endpoint, where $\I(q)$ is eligible when $\I(q)\cap G_{i-1}$ is
empty, or equal to $\{g_{i-1}\}$ with $\ell(q) = g_{i-1} < r(q)$. This is a minimum, not a
full order, so a heap suffices. We repeatedly extract the minimum of $H$ and test the
extracted interval for eligibility against $T$. Once $\I(q)$ contains a placed guard that
is not the most recent one, it stays ineligible in every later iteration, since guards are
only ever added and such a guard can never again be the most recent; we therefore discard
such an interval and extract again. The first eligible interval $\I(q^{*})$ we reach fixes
$g_i = r(\I(q^{*}))$, and we insert $g_i$ into $T$. An extraction costs
$\OO(\log n + \log k)$ as noted above; an eligibility test uses a constant number of rank
queries in $T$, which holds at most $k$ guards, and costs $\OO(\log k)$. The final
iteration certifies emptiness by draining the heap through the same
extract-and-test loop.

We do not bound the number of extractions per iteration, because a single guard may render
many intervals ineligible at once. Instead, we charge globally: every interval enters $H$
exactly once and leaves at most once, so the extractions and eligibility tests summed over
all iterations, including the final drain, number at most the insertions, namely
$|Q_0| + \sum_{i} r_i \le n + k\rho$ by \cref{lem:raycount}. Selection therefore costs
$\OO\bigl((n + k\rho)(\log n + \log k)\bigr)$ in total.

\emph{Ray shooting and expansion.} For each of the $r_i \le \rho$ reflex vertices visible
from $g_i$ we shoot a ray from $g_i$ through the vertex and intersect it with $\bd(\wv)$;
each query runs against the fixed Hershberger--Suri structure on the $n$-vertex polygon in
$\OO(\log n)$ time, so ray shooting costs $\OO(r_i\log n)$ for the iteration
\cite{DBLP:journals/jal/HershbergerS95}. We stress that this $\log n$ does not improve to
$\log\rho$ when $\rho$ is small: the parameter $\rho$ bounds how many queries are made,
while the cost of one query is governed by the size of the polygon. Each of the $r_i$ new
boundary points $b_j$ then receives its visibility interval $\I(b_j)$ in $\OO(1)$ time
from the two precomputed SPTs, and is inserted into $H$ in $\OO(\log n + \log k)$ time.
The iteration therefore costs $\OO\bigl(r_i(\log n + \log k)\bigr)$, and by
\cref{lem:raycount} this step costs
$\sum_{i} \OO\bigl(r_i(\log n + \log k)\bigr) = \OO\bigl(k\rho(\log n + \log k)\bigr)$
over the whole loop.

\subsubsection{Total runtime.}\label{subsection:runtime}
In total, $n + k\rho$ intervals are ever created. Each is touched by a constant number of
times: one interval computation in $\OO(1)$, one heap insertion, at most one extraction,
and one eligibility test, each in $\OO(\log n + \log k)$; the $k\rho$ expansion intervals
additionally cost one ray-shooting query each, in $\OO(\log n)$. Adding the $\OO(n)$
preprocessing, the algorithm runs in
\[
  \OO\bigl((n + k\rho)\,(\log n + \log k)\bigr)
  \;=\; \OO\bigl((n + \OPT\cdot\rho)\,(\log n + \log\OPT)\bigr)
\]
time. Two specializations are worth recording. When $\rho = \Theta(n)$ the bound reads
$\OO\bigl(\OPT\cdot n(\log n + \log\OPT)\bigr)$, matching the analysis without the
parameter $\rho$. When $\rho = \OO(1)$ the bound collapses to
$\OO\bigl((n + k)(\log n + \log k)\bigr)$; if in addition $k \le n$, this is
$\OO(n\log n)$, matching the $\Omega(n\log n)$ lower bound of
\cref{thm:gs-lower-bound}. Thus, the algorithm is optimal on polygons with
a constant number of reflex vertices, and the parameter $\rho$ isolates the expansion step as
the sole source of superlinear cost.

When $\rho = \OO(1)$ the bound collapses to $\OO\bigl((n + k)(\log n + \log k)\bigr)$,
which is near-linear in the input size plus the output size; note that $k$ is not
bounded by any function of $n$ even in this regime (\cref{prop:single-edge-unbounded}), so
the dependence on $k$ cannot be removed. If moreover $k \le n$, the bound is
$\OO(n\log n)$, matching the lower bound of \cref{thm:gs-lower-bound}.

\subsubsection{Model of computation.}
The bound above counts operations in the real-RAM model, in which each arithmetic operation on a real coordinate costs $\OO(1)$. Two independent sources of cost deserve separate mention. First, the factor $\log\OPT$ is combinatorial: it arises from the sizes of the heap $H$ and the tree $T$, which grow with the number of guards, and it is present already in the real-RAM model. Second, the coordinates of the guards and of the expansion points $b_j$ are produced by intersecting a line through a previously constructed point with an edge of $\bd(\wv)$; iterating this construction composes linear-fractional maps, so the algebraic degree and the bit-length of these coordinates can grow with the number of iterations. In a bit-complexity model, the running time incurs an additional factor that is polynomial in the coordinate bit length. We state the model explicitly because computing the exact optimum of art-gallery guarding is $\exists\mathbb{R}$-hard \cite{DBLP:journals/jacm/AbrahamsenAM22}: a hardness about the algebraic complexity of optimal coordinates, not about the number of guards (recall $\exists\mathbb{R}\subseteq\mathrm{PSPACE}$). The coordinate axis and the combinatorial $\log \OPT$ factor are therefore distinct, and neither subsumes the other.

\medskip

So far, we have obtained the following.
The \cref{alg:base-guard} produces a minimum guard set $G \subseteq \eb$ and a candidate witness set $\widehat{W} \subseteq \wv$ such that:
\begin{enumerate}
  \item By \cref{lemma:complete}, $|G| = |\widehat{W}| = \OPT$, the minimum number of guards on $\eb$ needed to guard $\bd(\wv)$ (and hence all of $\wv$, Theorem \ref{thm:boundary-equivalence}).
  \item For each candidate witness $\widehat{w}_i \in \widehat{W}$, there exists a clone point $w_i$ in the vicinity of $\widehat{w}_i$ that is a valid witness for guard $g_i$; the existence of these clone points is established in \cref{lemma:opt}.
  \item The resulting witness set $W = \{w_1, \dots, w_{\OPT}\}$ is a maximum independent set in the interval graph $\mathcal{G}_{Q_{\OPT}}$, and $G$ forms a minimum clique cover of $\mathcal{G}_{Q_{\OPT}}$, certifying optimality via guard--witness duality.
  \item The algorithm runs in $\OO\bigl((n + \OPT\cdot\rho)\,(\log n + \log\OPT)\bigr)$ time, where $n = |V(\wv)|$, and $\rho$ is the number of reflex vertices in $\wv$ (\ref{subsection:runtime}).
\end{enumerate}

Thus, we have proved the main result of our paper.

\MainResult*


\section{Guarding the vertices of a \texorpdfstring{$\wvp$}{wvp} from base}\label{sec:vertexguardingWV}

In this section, we look at the guarding problem for vertices in a $\wvp$, where the guards are located at its base edge. We claim that in this case, the optimal guard set can be computed in $\OO(n \log n)$ time. To prove the theorem, we will need a lemma, which is stated below.

\begin{lemma}\label{lemma:MinCliqueGuardSet}
    Let $\mathcal{I} = \{\I(v) : v \in V(\wv)\}$ be the set of intervals on $\eb$ defined for each vertex of $\wv$, and let $\mathcal{G}_Q = (V(\wv), E)$ be the interval graph induced by $\mathcal{I}$, where $\{u,v\} \in E$ if and only if $\I(u) \cap \I(v) \neq \emptyset$. Then the minimum number of guards on $\eb$ required to see all vertices of $V(\wv)$ equals the size of a minimum clique cover of $\mathcal{G}_Q$.
\end{lemma}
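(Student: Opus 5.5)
By \cref{prop:vis-equiv}, a point $g \in \eb$ sees a vertex $v$ exactly when $g \in \I(v)$. So guarding $V(\wv)$ from $\eb$ is the same as choosing a set $G \subseteq \eb$ of points that stabs every interval in $\mathcal{I}$. The plan is to prove two inequalities between the minimum stabbing number $\tau(\mathcal{I})$ and the minimum clique cover number $\bar{\chi}(\mathcal{G}_Q)$, using only the one-dimensional structure of intervals.

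\textbf{Guards give a clique cover ($\bar{\chi} \le \tau$).} Let $G = \{g_1,\dots,g_t\}$ be any guard set for the vertices. Assign each vertex $v$ to one guard $g_j \in G \cap \I(v)$; such a guard exists by \cref{prop:vis-equiv}. The vertices assigned to $g_j$ have intervals that all contain $g_j$, so they pairwise intersect and form a clique of $\mathcal{G}_Q$. Hence $\bar{\chi}(\mathcal{G}_Q) \le t$.

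\textbf{A clique cover gives guards ($\tau \le \bar{\chi}$).} Conversely, let $C$ be a clique of $\mathcal{G}_Q$. The intervals $\{\I(v) : v \in C\}$ are closed, bounded and pairwise intersecting, so by the Helly property in dimension one they share a common point. Concretely, $\max_{v\in C}\ell(v) \le \min_{v\in C} r(v)$: if $\ell(a) > r(b)$ for some $a,b \in C$, then $\I(a)\cap\I(b)=\emptyset$, which contradicts $C$ being a clique. Placing one guard at $\min_{v\in C} r(v)$ therefore sees every vertex of $C$. Doing this for each clique of a minimum clique cover gives a guard set of size $\bar{\chi}(\mathcal{G}_Q)$. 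The closedness of each $\I(v)$, established in \cref{prop:intervalvisibility}, is what makes the common point exist and lie in $\eb$.

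Combining the two inequalities gives equality. The proof has no real obstacle; the only step needing care is the Helly step, specifically checking that the intervals are closed and nonempty so the common point exists. Both facts come from \cref{prop:intervalvisibility}.
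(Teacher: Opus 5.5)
Your proof is correct and follows the paper's own argument. A guard set yields a clique cover of the same size by grouping the vertices each guard sees, and in the other direction the one-dimensional Helly property turns each clique of a cover into a single guard. You also assign each vertex to exactly one guard and place the guard at $\min_{v\in C} r(v)$, where the paper simply cites Helly; these are minor refinements of the same proof.
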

\begin{proof}
    We prove both directions of the equivalence.

    \medskip
    \noindent\textit{(Guard set $\Rightarrow$ Clique cover.)} Let $G = \{g_1, g_2, \ldots, g_k\} \subseteq \eb$ be a guard set for $V(\wv)$. For each guard $g_i$, define the set $C_i = \{v \in V(\wv) : g_i \in \I(v)\}$ of all vertices seen by $g_i$. We claim each $C_i$ is a clique in $\mathcal{G}_Q$. Indeed, for any two vertices $u, w \in C_i$, we have $g_i \in \I(u)$ and $g_i \in \I(w)$ by definition, and therefore $g_i \in \I(u) \cap \I(w)$, so $\I(u) \cap \I(w) \neq \emptyset$, meaning $\{u,w\} \in E$. 
    Since $G$ is a guard set for $V(\wv)$, every vertex $v \in V(\wv)$ is seen by some guard $g_i$, so $\bigcup_{i=1}^k C_i = V(\wv)$. Therefore $\{C_1, \ldots, C_k\}$ is a clique cover of $\mathcal{G}_Q$ of size $k$.

    \medskip
    \noindent\textit{(Clique cover $\Rightarrow$ Guard set.)} Let $\{C_1, C_2, \ldots, C_k\}$ be a clique cover of $\mathcal{G}_Q$. We claim that for each clique $C_i$, the intersection $\bigcap_{v \in C_i} \I(v)$ is non-empty. Since $\{C_i\}$ is a clique, every two intervals in $\{\I(v) : v \in C_i\}$ intersect. Intervals on a line satisfy the Helly property \cite{deberg2008computational, DanzerGruenbaumKlee1963}: a finite family of intervals on a line has a common point if and only if every two of them intersect. Therefore $\bigcap_{v \in C_i} \I(v) \neq \emptyset$, and we may place a guard $g_i \in \bigcap_{v \in C_i} \I(v)$. By construction, $g_i \in \I(v)$ for every $v \in C_i$, so $g_i$ sees all vertices of $C_i$. Since $\{C_1, \ldots, C_k\}$ covers $V(\wv)$, the set $\{g_1, \ldots, g_k\}$ is a valid guard set of size $k$.

    \medskip
    Together, the two directions show that any guard set of size $k$ yields a clique cover of size $k$, and any clique cover of size $k$ yields a guard set of size $k$. 
    Therefore, the minimum sizes of both objects coincide, establishing the equivalence.
\end{proof}

\begin{lemma}\label{lemma:vertex-guard-upper}
    For a $\wvp ~ \wv$, $\gs(V(\wv), \eb)$ can be solved in $\OO(n\log n)$ time.
\end{lemma}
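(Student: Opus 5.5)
The plan is to reduce the problem to computing a minimum clique cover of the interval graph $\mathcal{G}_Q$ from \cref{lemma:MinCliqueGuardSet}, and then solve that covering problem with the classical greedy stabbing algorithm for intervals.

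\emph{Step 1: compute all vertex intervals in $\OO(n)$ time.} Build the two shortest-path trees of $\wv$ rooted at the endpoints $u$ and $v$ of $\eb$ \cite{GUIBAS1989126}; each takes linear time. For every vertex $p$, read $\LA(p)$ and $\RA(p)$ from the trees: each is the parent of $p$ (or the first reflex vertex toward the root). Shoot the segment from $p$ through its anchor until it meets the supporting line of $\eb$. This gives $\ell(p)$ and $r(p)$ in $\OO(1)$ per vertex, with the conventions $\ell(p)=u$ and $r(p)=v$ when no anchor exists. By \cref{prop:intervalvisibility} and \cref{prop:vis-equiv}, these $n$ closed intervals $\I(p)$ exactly encode which base points see $p$.

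\emph{Step 2: greedy stabbing in $\OO(n\log n)$ time.} By \cref{lemma:MinCliqueGuardSet}, the minimum number of base guards seeing all vertices equals the minimum number of points of $\eb$ that stab every interval $\I(p)$. The Helly property makes cliques and common points interchangeable. Sort the intervals by right endpoint in $\OO(n\log n)$ time. Then scan them in order, keeping the last placed guard $g$. Whenever the current interval has $\ell(p)>g$, place a new guard at $r(p)$; otherwise skip the interval. The scan is linear.

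Optimality follows from the standard exchange argument. The intervals that trigger a new guard are pairwise disjoint, since each begins strictly after the previous guard, which is the right endpoint of the previous triggering interval, and those right endpoints are sorted. So these intervals form an independent set in $\mathcal{G}_Q$ whose size equals the number of guards placed. Hence the guard count is at most $\alpha$ and at least $\bar\chi$, and by perfectness these are equal, so the greedy output is a minimum guard set for $V(\wv)$.

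The step that needs the most care is Step 1: justifying that a single anchor ray per side yields the exact interval endpoint in constant time. This is precisely the content of \cref{prop:intervalvisibility} together with the anchor definition, so I will rely on it. The only extra detail is degenerate anchors, meaning collinear cases, which are excluded by general position. The $\log n$ factor comes only from sorting, and Step 1 is linear.
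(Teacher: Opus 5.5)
Your proposal is correct and follows the paper's proof: both compute every $\I(p)$ in $\OO(n)$ total time from the two shortest-path trees rooted at the endpoints of $\eb$, taking the anchor to be the parent in the tree. Both then reduce, via \cref{lemma:MinCliqueGuardSet}, to a minimum clique cover of the resulting interval graph. The only difference is that the paper cites a known $\OO(n\log n)$ clique-cover algorithm for interval graphs, whereas you spell out the sort-by-right-endpoint greedy stabbing and certify it with the pairwise-disjoint triggering intervals. For that certificate, weak duality $\alpha \le \bar{\chi}$ already suffices, so perfectness is not needed.
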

\begin{proof}
    We begin by computing the visibility region of each vertex $v$ of $\wv$ $(= \vis(v))$, and determining its intersection with the base edge $\eb$. Precisely, we find the interval $\I(v)$ for every vertex $v \in V(\wv)$.

    By \cref{prop:vis-equiv}, a guard $g \in \eb$ sees a vertex $v$ if and only if $g \in \I(v)$. Consequently, since guard locations are restricted to $\eb$, the problem of finding $\gs(V(\wv), \eb)$ reduces to a covering problem on these intervals. 
    Specifically, the intervals $\{\I(v) : v \in V(\wv)\}$ form an interval graph \cite{GOLUMBIC20041}, wherein for two vertices $u, v$ of $\wv$, the condition $\I(u) \cap \I(v) = \emptyset$ implies that two guards are required to guard $u$ and $v$, while $\I(u) \cap \I(v) \neq \emptyset$ implies that a single guard placed anywhere in $(\I(u) \cap \I(v)) \subseteq \eb$ suffices to guard both $u$ and $v$. 
    Therefore, finding $\gs(V(\wv), \eb)$ is equivalent to finding a minimal clique cover of this interval graph, due to \cref{lemma:MinCliqueGuardSet}. 

    To construct the interval graph, it suffices to compute, for each vertex $v$, the two anchor points $\RA(v)$ and $\LA(v)$, rather than its entire visibility region. These anchor points arise naturally during the computation of shortest paths from $v$ to the two endpoints of $\eb$. The interval $\I(v)$ is then obtained by extending the line segments $\overline{v\,\RA(v)}$ and $\overline{v\,\LA(v)}$ until they intersect $\eb$ (\cref{def-interval}).

    Since $\eb = (u, v)$, for each vertex $v_i$, the anchor points $\LA(v_i)$ and $\RA(v_i)$ are found by constructing shortest path trees rooted at $u$ and $v$, respectively. Following \cite{GUIBAS1989126}, both trees can be constructed in linear time. In each shortest path tree, the vertices of $\wv$ appear as leaf nodes, and the required anchor point for a vertex $v_i$ is precisely its predecessor in the corresponding shortest path tree.

    The construction of all intervals therefore takes $\OO(n)$ time, and the minimal clique cover of the resulting interval graph can be computed in $\OO(n \log n)$ time \cite{https://doi.org/10.1002/net.3230120410}. Combining these bounds, the total time required to solve $\gs(V(\wv), \eb)$ is $\OO(n \log n)$.
\end{proof}

\medskip

Having established an $\OO(n\log n)$ upper bound for $\gs(V(\wv), \eb)$, we now show this bound is tight. We do so by reducing the problem of computing the size of a minimum clique cover (MCC) of an interval graph to $\gs(V(\wv),\eb)$, together with the following known lower bound on that problem.

\begin{lemma}\label{lemma:mcc-lower-bound}
    Given $n$ intervals on the real line, computing the size of a minimum clique cover of the interval graph they induce requires $\Omega(n\log n)$ time in the algebraic decision-tree model. This follows from a standard reduction from the \textsc{Sorting} (equivalently, \textsc{Element Uniqueness}) problem, which itself requires $\Omega(n\log n)$ time in this model~\cite{DBLP:conf/stoc/Ben-Or83, DBLP:books/sp/PreparataS85}.
\end{lemma}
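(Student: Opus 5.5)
We reduce from \textsc{Element Uniqueness} in the algebraic decision-tree model. Ben-Or~\cite{DBLP:conf/stoc/Ben-Or83} shows that deciding whether $n$ reals $x_1,\dots,x_n$ are pairwise distinct requires $\Omega(n\log n)$ time in this model. The lower bound for minimum clique cover (MCC) on interval graphs will then follow from a reduction of essentially zero cost.

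\textbf{The reduction.} Given $x_1,\dots,x_n$, form the degenerate closed intervals $J_i=[x_i,x_i]$ for $i=1,\dots,n$. Let $\mathcal{G}$ be the interval graph they induce. By construction, $J_i\cap J_j\neq\emptyset$ holds if and only if $x_i=x_j$.

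\textbf{Structure of $\mathcal{G}$.} The graph $\mathcal{G}$ is a disjoint union of complete graphs, one for each distinct value among the $x_i$. Each such complete graph is a clique. No clique of $\mathcal{G}$ can contain two indices with different values, since those two vertices are not adjacent. Hence the minimum clique cover number of $\mathcal{G}$ equals the number $d$ of distinct values. Equivalently, we could argue this via the Helly property used in \cref{lemma:MinCliqueGuardSet}: a clique shares a common point, which is a single value.

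\textbf{Correctness and cost.} The inputs are pairwise distinct if and only if $d=n$. So any algebraic decision tree computing the MCC size in time $T(n)$ yields a decision tree for \textsc{Element Uniqueness} with $T(n)+\OO(n)$ operations. The $\OO(n)$ covers copying each $x_i$ into both endpoints of $J_i$, plus one final comparison of the output with $n$. These are all operations permitted in the model. Therefore $T(n)=\Omega(n\log n)$. If one insists on nondegenerate intervals, a second step replaces $J_i$ by $[2x_i,\,2x_i+1]$ after first sorting... but that would circularly use sorting. Instead, I would simply note that closed point-intervals are legitimate intervals, as the paper itself allows $\I(p)$ to degenerate.

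\textbf{Main obstacle.} The only delicate step is model compatibility. The reduction must be computable inside the algebraic decision-tree model without comparisons that already solve the problem. Translating from a numeric output to a yes/no answer must also cost only $\OO(1)$ tests. Both hold here, because the map $x_i\mapsto[x_i,x_i]$ is the identity on coordinates. Finally, I would remark that the same argument with \textsc{Sorting} replaced by \textsc{Element Uniqueness} is the standard route cited in~\cite{DBLP:books/sp/PreparataS85}.
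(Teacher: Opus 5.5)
Your proof is correct and is exactly the \textsc{Element Uniqueness} reduction that the paper only cites without spelling out: point intervals $[x_i,x_i]$ induce a disjoint union of $d$ cliques, so the clique-cover size equals $n$ iff the inputs are distinct, and relabeling the leaves costs nothing in the model. If you want nondegenerate intervals, use $[x_i,x_i+1]$ and apply Ben-Or directly: the inputs with clique-cover size $n$ form the set $\{x : |x_i-x_j|>1 \text{ for all } i\neq j\}$, which still has $n!$ connected components, so no preliminary sorting is needed (this matters for the spike construction in \cref{thm:gs-lower-bound}, since $l_i=r_i$ would force the two edges at $p_i$ to be collinear).
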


Now, we show that any algorithm that solves $\gs(V(\wv), \eb)$ requires $\Omega(n\log n)$ time in the worst case (\cref{lemma:mcc-lower-bound}). Combining this with the previous \cref{lemma:vertex-guard-upper}, $\gs(V(\wv), \eb)$ can be solved in $\Theta(n \log n)$ time.

\vsg*

\begin{proof}
    Suppose, for contradiction, that there exists an algorithm $\mathcal{A}$ that solves $\gs(V(\wv),\allowbreak\, \eb)$, for any $\wvp ~ \wv$ on $n$ vertices, in ${o}(n\log n)$ time. We use $\mathcal{A}$ to compute the size of a minimum clique cover of an arbitrary interval graph in $o(n\log n)$ time, contradicting \cref{lemma:mcc-lower-bound}.

    \medskip
    \noindent\textit{Reduction.} Let $I_1, \ldots, I_n$ be $n$ intervals on the real line, $I_j = [l_j, r_j]$, inducing an interval graph $\mathcal{G}_Q$. Let $x = \min_j l_j$, $y = \max_j r_j$, and set $a = x-1$ and $b = y+1$. Let $\eb$ be the segment joining $(a,0)$ and $(b,b-a)$; since this segment has slope $1$, every point on it can be written uniquely as $(s, s-a)$ for $s \in [a,b]$, which we call its \emph{coordinate}. As $[x,y] \subseteq [a,b]$, each $I_j$ corresponds to a well-defined sub-segment of $\eb$.

    Fix $n$ points $t_1 < \cdots < t_n$ in $(a,b)$, spaced well apart from one another (see \cref{fig:omega-nlogn}). We build a $\wvp ~ \wv$ with base edge $\eb$ whose opposite boundary chain consists of $n$ shallow spikes, one centered above each $t_i$. The apex of the $i$-th spike is a vertex $p_i$ whose two incident edges are oriented so that, extended as full lines, they meet $\eb$ exactly at coordinates $l_i$ and $r_i$. Since each spike can be made arbitrarily narrow and shallow, and the $t_i$ are well separated, the two slopes at $p_i$ can always be chosen to realize any prescribed pair $l_i, r_i \in [a,b]$, independently of every other spike and without one spike obstructing another's view of $\eb$. By the anchor-point characterization of $\I(\cdot)$ used in the proof above, this gives $\I(p_i) = [l_i, r_i] = I_i$ exactly. Each spike is specified in $\OO(1)$ time, so $\wv$ is built in $\OO(n)$ time, and the interval graph induced on $V(\wv) = \{p_1,\ldots,p_n\}$ by $\{\I(p_i)\}_{i=1}^n$ is precisely $\mathcal{G}_Q$.

    \begin{figure}[H]
        \centering
        \includegraphics[width=0.6\linewidth]{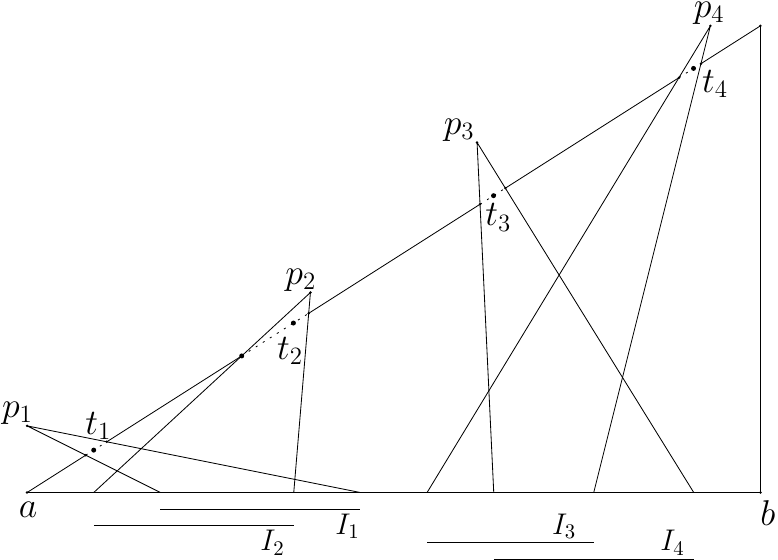}
        \caption{Illustrative figure. The spike at $p_i$, centered above $t_i$, has its edges extended to meet $\eb$ at coordinates $l_i, r_i$, so that $\I(p_i) = I_i$.}
        \label{fig:omega-nlogn}
    \end{figure}

    \medskip
    \noindent\textit{Contradiction.} Run $\mathcal{A}$ on $\wv$ to obtain a minimum guard set $G$ for $V(\wv)$ on $\eb$, in $o(n\log n)$ time. By \cref{lemma:MinCliqueGuardSet}, $|G|$ equals the minimum clique cover number of the interval graph induced by $\{\I(p_i)\}_{i=1}^n = \{I_i\}_{i=1}^n$, i.e.\ of $\mathcal{G}_Q$ itself and this size is obtained simply by counting the guards $\mathcal A$ returns, in $\OO(1)$ further time. So we compute the MCC size of $\mathcal{G}_Q$ in $\OO(n) + o(n\log n) = o(n\log n)$ time, contradicting \cref{lemma:mcc-lower-bound}. Hence no such $\mathcal{A}$ exists, and $\gs(V(\wv),\eb)$ requires $\Omega(n\log n)$ time.
\end{proof}


\section{Perfect Guarding of an Altitude Terrain at a Fixed Height} \label{sec:fixed-height}

Authors in \cite{DAESCU201922} place an optimal number of guards on a fixed altitude line above a monotone mountain in $\OO(n)$ time, and they prove that a monotone mountain is \emph{perfect}: its guard number equals its witness number. Certifying that perfectness, however, means exhibiting an actual witness set of matching size, and the previously best bound for producing one is $\OO(n^2\log n)$. We compute, at a fixed altitude line, an optimal guard set \emph{and} a matching optimal witness set together in $\OO(n)$ time, so that the output certifies its own optimality.

Throughout, $\mo$ is an $x$-monotone mountain, $L$ is a fixed altitude line at height $y$ lying above $\mo$, and $k$ is the minimum number of guards needed on $L$. For a terrain point $w$ that is visible from some point of $L$, we write $\I_y(w) = [\ell(w,y),\, r(w,y)]$ for its visibility interval on $L$; a guard that sees $w$ from the right end of that interval sits at $r(w,y)$.

\subsection{Two Passes of linear-time Algorithm of \texorpdfstring{\cite{DAESCU201922}}{}}
\label{subsec:setup}

We run the $\OO(n)$ time optimal-guard algorithm presented twice on $L$.

\smallskip
\noindent\textbf{Pass 1 (left to right).}
The algorithm scans from left to right and produces $k$ guard positions $g^+_1 < g^+_2 < \cdots < g^+_k$ on $L$ and $k$ \emph{right candidate witness points} $w^+_1 <_x w^+_2 <_x \cdots <_x w^+_k$ on $\mo$, in order of increasing $x$-coordinate. The guard $g^+_i$ is placed at $r(w^+_i, y)$, the right endpoint of the visibility interval of $w^+_i$ on $L$. The next witness point $w^+_{i+1}$ is the point on $\mo$ with the leftmost right endpoint that lies beyond $w^+_i$; it is either a vertex of $\mo$, or the point where the line through $g^+_i$ and a reflex vertex of $\mo$ meets $\partial\mo$ when extended (see~\cref{left-pass}).

\smallskip
\noindent\textbf{Pass 2 (right to left).}
The algorithm scans symmetrically from right to left, using left endpoints in place of right endpoints, and produces $k$ guard positions $g^-_1 < g^-_2 < \cdots < g^-_k$ on $L$ and $k$ \emph{left candidate witness points} $w^-_1 <_x w^-_2 <_x \cdots <_x w^-_k$ on $\mo$. Each guard $g^-_i = \ell(w^-_i, y)$ is placed at the left endpoint of the visibility interval of $w^-_i$ on $L$. The next witness $w^-_{i-1}$ is the point on $\mo$ with the rightmost left endpoint lying to the left of $w^-_i$; it is either a vertex of $\mo$, or the intersection of $\partial\mo$ with the line through $g^-_i$ and a reflex vertex of $\mo$~(see~\cref{right-pass}).

\begin{figure}[ht!]
     \centering
     \begin{subfigure}[b]{0.48\textwidth}
         \centering
         \includegraphics[width=\textwidth]{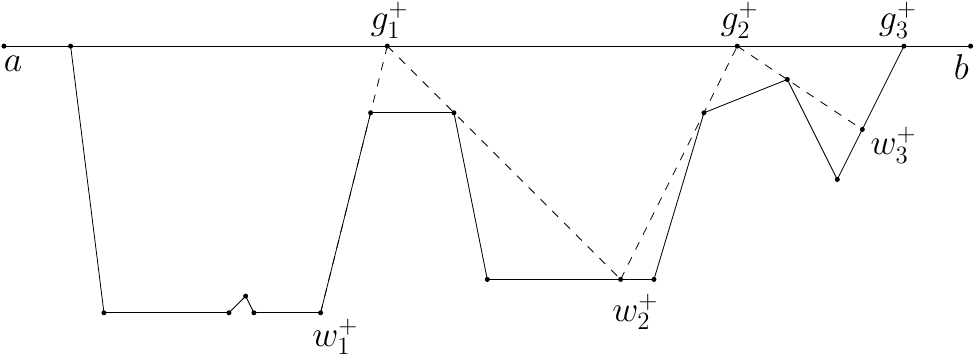}
         \subcaption{\textbf{Pass 1}: left to right}
         \label{left-pass}
     \end{subfigure}
     \hfill
     \begin{subfigure}[b]{0.48\textwidth}
         \centering
         \includegraphics[width=\textwidth]{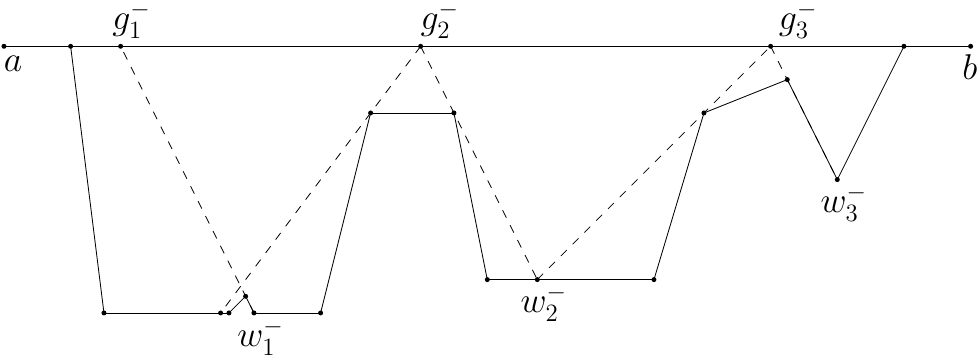}
         \subcaption{\textbf{Pass 2}: right to left}
         \label{right-pass}
     \end{subfigure}
        \caption{Two-way Linear Time Algorithm of \cite{DAESCU201922}. Pass 1 places guards $g_i^+$ at the right interval endpoints of witnesses $w_i^+$; Pass 2 symmetrically places guards $g_i^-$ at the left endpoints of witnesses $w_i^-$.} \label{fig-two-way}
\end{figure}

The two passes bracket every true witness between a left and a right candidate, in a strictly interleaved order (see \cref{fig:two-pass}).

\begin{lemma}[Interleaving of candidate witness points]\label{lem:interleave}
The $2k$ candidate witness points satisfy:
\[
  w^-_1 <_x\; w^+_1 <_x\; w^-_2 <_x\; w^+_2 <_x\; \cdots\; <_x\; w^-_k <_x\; w^+_k,
\]
and each true witness $w_i$ of $\mo$ on $L$ satisfies $w^-_i \le_x w_i \le_x w^+_i$.
\end{lemma}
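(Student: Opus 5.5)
The plan is to argue entirely in the interval picture on $L$, using the terrain analogue of \cref{prop:vis-equiv}: a guard at $g\in L$ sees $w\in\mo$ iff $g\in\I_y(w)$. In a monotone mountain $\ell(w,y)$ and $r(w,y)$ are nondecreasing in $x(w)$ along the terrain; this is the standard left--right ordering of visibility intervals on terrains, and I would verify it with the hill argument of \cref{obs-hill}. Pass~1 is then the usual greedy for stabbing intervals by leftmost right endpoint, and Pass~2 is its mirror image. Exactly as in Step~1 of \cref{lemma:opt}, both $\{\I_y(w^+_i)\}$ and $\{\I_y(w^-_i)\}$ are pairwise disjoint families, up to touching at a guard and clone adjustments. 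Both passes are optimal and so have exactly $k$ elements, and the indices can be matched.

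\textbf{Step 1 (guard interleaving).} I will first prove $g^-_i\le g^+_i$ for all $i$. The greedy Pass~1 solution is the ``rightmost-as-possible'' stabbing, in the sense that $g^+_i$ is the largest position for the $i$-th guard of any optimal left-to-right cover of the intervals lying to its left. Pass~2 is the ``leftmost-as-possible'' one. This is the standard exchange argument, done by induction on $i$: $g^+_1=\min_w r(w,y)$, and any cover must place its first guard at or before $g^+_1$, so $g^-_1\le g^+_1$. For the inductive step I apply the same reasoning to the intervals not seen by the first $i-1$ guards.

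\textbf{Step 2 (witness interleaving).} By construction $w^+_i$ attains the leftmost right endpoint among the points not yet seen by $g^+_1,\dots,g^+_{i-1}$, and $w^-_i$ attains the rightmost left endpoint among the points not seen by $g^-_{i+1},\dots,g^-_k$. I would show $w^-_i<_x w^+_i$ in two parts. First, $g^-_i=\ell(w^-_i,y)$. Second, $w^+_i$ is unseen by the guards up to $g^+_{i-1}$, and by Step~1 these lie at or to the left of the corresponding guards. Together with monotonicity of $\ell,r$ in $x$, this gives $x(w^-_i)\le x(w^+_i)$. Strictness then follows from general position, or from the observation that equality would force a single point whose interval is both the $i$-th minimal-right and the $i$-th maximal-left interval, which contradicts $k$ being the minimum when $k\ge2$; the case $k=1$ is trivial.

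For the cross inequality $w^+_i<_x w^-_{i+1}$, I use that $\I_y(w^+_i)$ and $\I_y(w^-_{i+1})$ are both stabbed in any optimal cover by distinct guards. Otherwise the $2$-interleaved disjoint chains would yield a cover with fewer than $k$ guards, or a witness set of size $k+1$, contradicting perfectness of $\mo$ \cite{DAESCU201922}. Disjointness plus monotonicity then orders them in $x$.

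\textbf{Step 3 (true witnesses).} Any maximum witness set $\{w_1<_x\cdots<_x w_k\}$ has pairwise disjoint intervals, hence the $k$ intervals are ordered on $L$. The $i$-th of them lies to the right of $i-1$ disjoint intervals, so its left endpoint is at least $g^-_i$; this gives $w_i\ge_x w^-_i$ via the extremality of Pass~2. Symmetrically, $r(w_i,y)\le g^+_i$ gives $w_i\le_x w^+_i$.

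I expect the main obstacle to be Step~2's cross inequality $w^+_i<_x w^-_{i+1}$. It needs a careful exchange argument combined with perfectness, and one must also handle the boundary cases where intervals only touch at a guard. I would first try proving it by contradiction: if $w^-_{i+1}\le_x w^+_i$, then the guards $g^-_{i+1},\dots,g^-_k$ together with $g^+_1,\dots,g^+_{i-1}$ already see $w^+_i$'s region, producing a cover of size $k-1$.
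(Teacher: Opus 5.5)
\paragraph{The monotonicity assumption is false.} Your argument breaks where it turns order on $L$ into order along $\partial\mo$. You call it standard that $\ell(w,y)$ and $r(w,y)$ are nondecreasing in $x(w)$, but this fails for monotone mountains, so the hill argument cannot establish it.

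Take a terrain with consecutive vertices $(0,5),(4,8),(5,0),(10,10),(20,0)$ and let $L$ be the line $y=20$. For $(4,8)$, the right endpoint is cut off by the ray grazing $(10,10)$ with slope $1/3$, which gives $r=40$. For $(5,0)$, the grazing ray has slope $2$, which gives $r=15$. A point high on a valley wall sees farther right than the valley floor beside it. Every $x$-comparison in Steps~2 and~3 goes through this claim, so none of them is justified as written.

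\paragraph{The inequalities point the wrong way.} Even on mountains where both endpoint maps are monotone, such as a single peak, the gap is more than a missing lemma. Your Step~1, $g^-_i\le h_i\le g^+_i$ for every optimal cover $h$, is correct. The problem is what you derive from it.

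In Step~3, a maximum witness set $z_1<_x\cdots<_x z_k$ has $k$ pairwise disjoint intervals, and any cover by $k$ guards must put exactly its $i$-th guard in $\I_y(z_i)$. Hence $\ell(z_i,y)\le g^-_i$ and $r(z_i,y)\ge g^+_i$, which is the reverse of what you assert. Moreover, $w^+_i$ is the first point beyond $w^+_{i-1}$ not seen by $g^+_{i-1}$ (the paper's own description), and $z_i$ is not seen by $g^+_{i-1}$. So the stays-ahead argument gives $w^+_i\le_x z_i$, and symmetrically $z_i\le_x w^-_i$.

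In Step~2, the relation $\ell(w^-_i,y)=g^-_i\le g^+_i=r(w^+_i,y)$ compares a left endpoint with a right endpoint, so it implies no $x$-order at all.

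\paragraph{The case $k=1$ is a counterexample.} You call $k=1$ trivial, but it already exhibits the problem. Take the triangle with corners $(0,0)$, $(10,0)$, apex $(5,5)$, and let $L$ be the line $y=10$. The minimum right endpoint, $10$, is attained exactly on the left slope. The maximum left endpoint, $0$, is attained exactly on the right slope. Therefore $w^+_1<_x w^-_1$, the opposite of the displayed chain.

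Under monotonicity, your machinery actually yields $w^+_i\le_x w^-_i<_x w^+_{i+1}$ and $w^+_i\le_x z_i\le_x w^-_i$. That is the statement with $+$ and $-$ exchanged. The paper's proof does not supply the direction you need either: it first argues that $w_i$ ``must lie at or beyond'' $w^+_i$, and then writes $w_i\le_x w^+_i$.
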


\begin{figure}[H]
    \centering
    \includegraphics[width=1.0\linewidth]{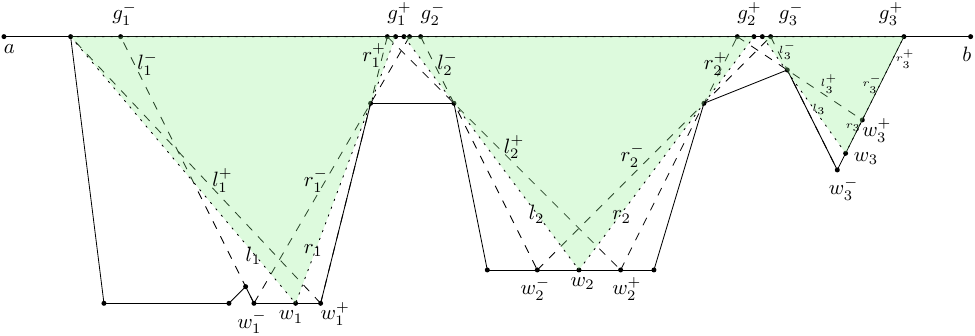}
    \caption{Interleaving of witness points and interval sandwich. Each true witness $w_i$ is bracketed by candidates $w_i^-, w_i^+$ from the two passes, with guards $g_i^-, g_i^+$ and rays $\ell_i^{\pm}, r_i^{\pm}$ sandwiching its visibility interval.}
    \label{fig:two-pass}
\end{figure}

\begin{proof}
\textit{Sandwich bound.} In Pass~1 the guard $g^+_{i-1}$ is placed at the right endpoint $r(w^+_{i-1}, y)$ of the current critical point, and $w^+_i$ is by definition the first terrain point beyond $w^+_{i-1}$ that $g^+_{i-1}$ fails to see. The true witness $w_i$ must lie at or beyond that point to avoid being covered by $g^+_{i-1}$; placing it any earlier would only shrink the region left for later witnesses without increasing their number. Hence $w_i \le_x w^+_i$, with equality when $w_i$ coincides with that point. Symmetrically, Pass~2 gives $w^-_i \le_x w_i$. Thus $w^-_i \le_x w_i \le_x w^+_i$ for every $i = 1, \dots, k$.

\textit{Interleaving.} Pass~1 partitions $\partial\mo$ into $k$ contiguous guard regions, with $w^+_i$ and $w^+_{i+1}$ marking the right boundaries of regions $i$ and $i+1$; these regions are disjoint and ordered left to right, so $w_i \le_x w^+_i <_x w^+_{i+1}$. It remains to place $w^-_{i+1}$ relative to $w^+_i$. Suppose toward a contradiction that $w^-_{i+1} \le_x w^+_i$. Since $w^-_{i+1} \le_x w_{i+1}$, the guard $g^-_{i+1} = \ell(w^-_{i+1}, y)$ sees at least as far left as $w^-_{i+1}$, and by monotonicity of visibility along the $x$-monotone terrain the guard $g^+_i$ covering the region up to $w^+_i$, together with $g^-_{i+1}$'s leftward reach, leaves no terrain strictly between them that requires a separate witness. This contradicts the existence of $k$ pairwise non-co-visible witnesses on $L$, which holds because $\mo$ is perfect \cite{DAESCU201922} and $k$ guards are optimal on $L$: $w_i$ and $w_{i+1}$ are two such witnesses, so no single guard sees both. Hence $w^+_i <_x w^-_{i+1}$, and combining $w^-_i \le_x w_i \le_x w^+_i <_x w^-_{i+1}$ for each $i = 1, \dots, k-1$ gives the full interleaving. \qedhere
\end{proof}

\subsection{Refining Candidates into True Witnesses} \label{subsec:refine}

Pass~1 fixes the guards, but its candidates $w^+_i$ are not themselves a witness set: consecutive ones may be co-visible. \textsc{RefineWitnesses} (\cref{alg:refine}) walks the sandwich of \cref{lem:interleave} once, nudging each candidate to a point that the previous guard can no longer see. It uses the following notion. For a terrain point $v \in \partial\mo$ and a guard $g \in L$ with $g \in \I_y(v)$, the \emph{shadow point} of $g$ past $v$ is
\[
  \shpt_y(g, v) \;\coloneqq\; \inf\{\, u \in \partial\mo : u >_x v,\ g \notin \I_y(u) \,\},
\]
the first terrain point beyond $v$ that $g$ fails to see on $L$.

\begin{algorithm}[H]
\caption{\textsc{RefineWitnesses}: true witnesses from candidates on a fixed line}
\label{alg:refine}
\DontPrintSemicolon
\SetKwInOut{Input}{Input}
\SetKwInOut{Output}{Output}
\SetKw{KwTo}{to}
\SetKwFunction{ShadowPoint}{shadowPoint}
\Input{Candidates $\mathcal C = \{w_j^-, w_j^+\}_{j=1}^{k}$ on $L$; height $y$.}
\Output{True witnesses $w_1, \dots, w_k$ on $\partial\mo$.}
\BlankLine
$w_1 \leftarrow w_1^+$;\quad $g_1 \leftarrow r(w_1, y)$\;
\For{$j \leftarrow 2$ \KwTo $k$}{
    $p_{w_j} \leftarrow \ShadowPoint_y(g_{j-1}, w_{j-1})$
        \tcp*{\textcolor{blue}{amortized $\OO(1)$, \cref{lem:refine-correct}}}
    \eIf{$\ell(w_j^+, y) >_x g_{j-1}$}{
        $w_j \leftarrow w_j^+$ \tcp*{\textcolor{blue}{easy case}}
    }{
        $w_j \leftarrow$ midpoint of $[w_j^-,\, p_{w_j}]$ on $\partial\mo$
            \tcp*{\textcolor{blue}{well defined by \cref{lem:sandwich-exist}}}
    }
    $g_j \leftarrow r(w_j, y)$\;
}
\KwRet{$\{w_1,\dots,w_k\}$}\;
\end{algorithm}

\begin{lemma}[Sandwich existence] \label{lem:sandwich-exist}
Suppose $L$ carries $k$ guards, and $k$ is the minimum number of guards required on $L$. Then \textsc{RefineWitnesses} is well defined at every step: whenever the otherwise-branch fires at index $j$,
\[
  w_j^- \;<_x\; p_{w_j} \;<_x\; w_j^+,
\]
so the midpoint defining $w_j$ exists and lies strictly between $w_j^-$ and $w_j^+$. Consequently $w_j^- \le_x w_j \le_x w_j^+$ for every $j = 1, \dots, k$.
\end{lemma}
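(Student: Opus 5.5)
I will argue by induction on $j$, carrying the invariant that the already-built witnesses $w_1,\dots,w_{j-1}$ satisfy $w_i^-\le_x w_i\le_x w_i^+$ and $g_{i}=r(w_i,y)$. The base case is immediate, since $w_1=w_1^+$. At step $j$ I will establish the two strict inequalities $p_{w_j}<_x w_j^+$ and $w_j^-<_x p_{w_j}$ separately. Once both hold, the midpoint of $[w_j^-,p_{w_j}]$ lies strictly inside $(w_j^-,w_j^+)$, and the invariant propagates to $j$.

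\textbf{Upper inequality.} The otherwise-branch fires exactly when $\ell(w_j^+,y)\le g_{j-1}$. Intervals on $L$ are monotone along the $x$-monotone terrain. Also, $g_{j-1}=r(w_{j-1},y)\le r(w_{j-1}^+,y)=g_{j-1}^+$, using the invariant $w_{j-1}\le_x w_{j-1}^+$ and monotonicity of $r(\cdot,y)$ along $\partial\mo$. In Pass~1, $w_j^+$ is by construction the first point beyond $w_{j-1}^+$ not seen by $g_{j-1}^+$, that is, $w_j^+=\shpt_y(g_{j-1}^+,w_{j-1}^+)$, with $\ell(w_j^+,y)=g^+_{j-1}$ in the grazing case. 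A guard placed further left cannot see further right along the terrain. This monotonicity of the shadow point in $g$ is the key structural fact to check from the visibility geometry of a mountain. It gives $p_{w_j}=\shpt_y(g_{j-1},w_{j-1})\le_x w_j^+$. Strictness should follow because in the otherwise-branch either $g_{j-1}<g_{j-1}^+$, or $w_j^+$ is a grazing point ray-shot through a reflex vertex, so its clone on the far side is already unseen. I expect to handle the grazing case via the clone argument of \cref{lemma:opt}, Step~3.

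\textbf{Lower inequality.} This is the main obstacle. I will use \cref{lem:interleave}, which gives $w_{j-1}\le_x w_{j-1}^+<_x w_j^-$, together with optimality of $k$. Suppose $p_{w_j}\le_x w_j^-$. Then $g_{j-1}$ sees everything from $w_{j-1}$ up to (but not including) $p_{w_j}$, and Pass~2's guard $g_j^-=\ell(w_j^-,y)$ sees $w_j^-$ and, by monotonicity, everything to its right up to $w_{j+1}^-$. I plan to show that the guards $g_1,\dots,g_{j-1},g_j^-,\dots,g_k^-$ then leave no gap, while the point $p_{w_j}$ itself, if it equals $w_j^-$, is already seen by $g_j^-$. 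From these I will extract a guard set of size $k-1$, or alternatively show that $w_j^-$ and $w_{j-1}$ would be co-visible. Either outcome contradicts the perfectness of $\mo$ \cite{DAESCU201922} and the minimality of $k$, using that the $k$ Pass-2 guards are forced to sit at the leftmost admissible positions. The delicate part is making the counting exact, namely that the first $j-1$ refined guards together with the last $k-j+1$ Pass-2 guards cover $\partial\mo$. I would first try to prove this by a standard exchange argument for greedy interval stabbing: the greedy left-to-right prefix and the greedy right-to-left suffix meet without gap precisely when one fewer guard would suffice.

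Finally, since $w_j$ lies in $(w_j^-,p_{w_j})$ and $p_{w_j}\le_x w_j^+$, we obtain $w_j^-\le_x w_j\le_x w_j^+$. In the easy branch $w_j=w_j^+$, and the bracket holds by \cref{lem:interleave}. This closes the induction.
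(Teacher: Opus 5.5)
\textbf{Verdict: genuine gap.} Your outline has the same skeleton as the paper's proof: induction carrying $g_{j-1}\le_x g_{j-1}^+$, the upper bound from monotonicity of $\shpt_y$ in its guard argument, and the lower bound by contradiction with the minimality of $k$. The gap is the lower inequality $w_j^-<_x p_{w_j}$, and your sketch cannot be completed.

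First, the count is off. The guards $g_1,\dots,g_{j-1},g_j^-,\dots,g_k^-$ number $(j-1)+(k-j+1)=k$. So even a gap-free cover contradicts nothing, and your plan never says which guard could be dropped.

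More seriously, the prefix/suffix exchange argument you have in mind, run correctly with $k-1$ guards, proves the opposite inequality.
\begin{itemize}
\item Pass~1 guarantees that every terrain point $u$ with $\ell(u,y)\le g_{j-1}^+$ is seen by one of $g_1^+,\dots,g_{j-1}^+$. An unseen such $u$ was still a candidate when $g_{j-1}^+$ was chosen, so $r(u,y)\ge g_{j-1}^+$ and hence $g_{j-1}^+\in\I_y(u)$.
\item Pass~2 guarantees that every point not seen by $g_{j+1}^-,\dots,g_k^-$ has $\ell(u,y)\le g_j^-$.
\item So if $g_j^-\le g_{j-1}^+$, the $k-1$ guards $g_1^+,\dots,g_{j-1}^+,g_{j+1}^-,\dots,g_k^-$ would guard $\mo$. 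Minimality of $k$ therefore forces $g_{j-1}^+<g_j^-$.
\item Consequently $g_{j-1}\le_x g_{j-1}^+<g_j^-=\ell(w_j^-,y)$, so $g_{j-1}$ does not see $w_j^-$.
\item Your invariant and \cref{lem:interleave} give $w_{j-1}\le_x w_{j-1}^+<_x w_j^-$. So $w_j^-$ belongs to the set whose infimum defines $p_{w_j}$, and $p_{w_j}\le_x w_j^-$.
\end{itemize}
At $j=2$ this needs nothing beyond $g_1=g_1^+$ and $w_1=w_1^+$. Thus, within the framework you are using, the supposition you set out to refute always holds, and the strict lower inequality cannot be proved. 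Note also that your argument never invokes the branch condition $\ell(w_j^+,y)\le g_{j-1}$, so it could not have been specific to the otherwise-branch anyway. The paper's own argument for this step does not supply the missing idea either. From $p_{w_j}<_x w_j^-$ it infers that $g_{j-1}$ sees every point up to $w_j^-$, which is the negation of that hypothesis.

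Strictness of the upper inequality is also not established by what you wrote.
\begin{itemize}
\item Monotonicity of the shadow point in the guard is non-strict, so the case $g_{j-1}<g_{j-1}^+$ yields only $p_{w_j}\le_x w_j^+$.
\item In the grazing alternative, unseen clones just beyond $w_j^+$ show only that the infimum $p_{w_j}$ is at most $w_j^+$. That is exactly the situation in which $p_{w_j}=w_j^+$ is possible. To get $p_{w_j}<_x w_j^+$ you must exhibit a point strictly between $w_{j-1}$ and $w_j^+$ that $g_{j-1}$ fails to see.
\item The clone construction in Step~3 of \cref{lemma:opt} only moves a selected witness off a grazed interval endpoint. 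It says nothing about where $p_{w_j}$ lies.
\end{itemize}
A smaller point: Pass~1 gives only $\shpt_y(g_{j-1}^+,w_{j-1}^+)\le_x w_j^+$, not the equality you assert, since $w_j^+$ may be a terrain vertex overriding the shadow point. This does not affect the non-strict bound.
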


\begin{proof}
We induct on $j$, maintaining the invariant $(\star)$: $w_{j-1}^- \le_x w_{j-1} \le_x w_{j-1}^+$ and $g_{j-1} \le_x g_{j-1}^+$, where $g_{j-1}^+ = r(w_{j-1}^+, y)$. The base case $j = 1$ holds with equality, since $w_1 = w_1^+$.

\smallskip
\noindent\textit{Upper bound $p_{w_j} \le_x w_j^+$.}
The shadow-point map is non-decreasing in its guard argument: on an $x$-monotone terrain, moving a guard rightward along $L$ can only move the first point it fails to see further right, since it cannot lose visibility of a point without first losing visibility of every terrain point nearer to its own witness on the same side, the mechanism underlying the recurrence in \cite{DAESCU201922}. By the inductive hypothesis $g_{j-1} \le_x g_{j-1}^+$,
\[
  p_{w_j} = \shpt_y(g_{j-1}, w_{j-1})
    \;\le_x\; \shpt_y(g_{j-1}^+, w_{j-1}^+)
    \;\le_x\; w_j^+.
\]
The last inequality holds because $w_j^+$ is, by construction of Pass~1, the leftmost point whose right endpoint could be adopted as the next witness; it dominates the shadow point of $g_{j-1}^+$ whether $w_j^+$ arises as that shadow point or as a terrain vertex overriding it.

\smallskip
\noindent\textit{Lower bound $p_{w_j} \ge_x w_j^-$.}
Suppose toward a contradiction that $p_{w_j} <_x w_j^-$. Then $g_{j-1}$ sees every terrain point strictly between $w_{j-1}$ and $w_j^-$, including points arbitrarily close to $w_j^-$. But $w_j^-$ is precisely the point whose appearance as a separate witness in Pass~2 certifies that no single guard covering the region up to $w_{j-1}$'s neighborhood can also reach $w_j^-$. If $g_{j-1}$ already reaches past $w_j^-$, then $g_1, \dots, g_{j-1}$ together with a continuation of Pass~1 to the right of $g_{j-1}$ cover all of $\partial\mo$ using at most $k-1$ guards on $L$, contradicting the minimality of $k$. Hence $p_{w_j} \ge_x w_j^-$.

\smallskip
\noindent\textit{Strictness.} 
If $p_{w_j} = w_j^+$, the easy-case test $\ell(w_j^+, y) >_x g_{j-1}$ would already have fired, so reaching the otherwise-branch forces $p_{w_j} <_x w_j^+$. Equality $p_{w_j} = w_j^-$ would require the shadow point of $g_{j-1}$, the point where the visibility ray from $g_{j-1}$ grazing the relevant reflex vertex meets $\partial\mo$, to coincide with $w_j^-$, which is determined by a generically different visibility ray or a distinct terrain vertex; the standing general-position assumption, that no two of the $\OO(k^2)$ relevant ray/terrain incidences coincide, excludes this.

Thus $w_j^- <_x p_{w_j} <_x w_j^+$ whenever the otherwise-branch fires, so the midpoint is well defined and satisfies $w_j^- <_x w_j <_x w_j^+$; together with the easy case ($w_j = w_j^+$), this restores $(\star)$ at index $j$.
\end{proof}

\begin{lemma}[Correctness, validity, and cost of \textsc{RefineWitnesses}] \label{lem:refine-correct}
On the $2k$ candidates produced by the two passes on a line $L$ carrying $k$ guards, with $k$ optimal on $L$, \textsc{RefineWitnesses} returns $k$ points $w_1, \dots, w_k$ on $\partial\mo$ with $w_j^- \le_x w_j \le_x w_j^+$ for every $j$, whose visibility intervals on $L$ are pairwise disjoint; hence $\{w_1, \dots, w_k\}$ is a witness set of size $k$ on $L$. The procedure runs in $\OO(n)$ time, with each call to $\shpt$ costing amortized $\OO(1)$.
\end{lemma}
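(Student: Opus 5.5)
The proof has three parts, in this order: the sandwich bound, pairwise disjointness of the intervals $\I_y(w_j)$ on $L$, and the linear running time.

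\textbf{Sandwich bound.} The inequalities $w_j^- \le_x w_j \le_x w_j^+$ follow directly from \cref{lem:sandwich-exist}. In the easy branch $w_j = w_j^+$. In the otherwise-branch $w_j$ is the midpoint of $[w_j^-, p_{w_j}]$, and $p_{w_j} <_x w_j^+$ there. Along the way I will record the invariant $g_j = r(w_j,y) \le_x g_j^+$. This comes from monotonicity of $r(\cdot,y)$ along the $x$-monotone terrain restricted to the sandwich, where the anchors do not change.

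\textbf{Pairwise disjointness.} I first reduce to consecutive pairs. For $j<i$, the ordering gives $w_j \le_x w_j^+ <_x w_{i-1}^- \le_x w_{i-1}$ by \cref{lem:interleave}. Monotonicity of both endpoints $\ell(\cdot,y)$ and $r(\cdot,y)$ in $x$ along $\partial\mo$ is the property of monotone mountains that makes visibility intervals on $L$ ``sorted''. It gives $r(w_j,y) \le r(w_{i-1},y) = g_{i-1}$. So it suffices to show $\ell(w_i,y) > g_{i-1}$ for every $i$, and then the intervals are strictly ordered left to right. For consecutive pairs there are two cases.
\begin{itemize}
\item In the easy branch this is exactly the test $\ell(w_j^+,y) >_x g_{j-1}$.
\item In the otherwise-branch, $w_j$ lies strictly beyond $p_{w_j}$? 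No: it lies strictly before $p_{w_j}$. So disjointness must instead come from $w_j$ lying beyond the shadow of $g_{j-1}$.
\end{itemize}
I would therefore re-examine the definition: the intended choice is that $w_j$ lies in $(p_{w_j}, w_j^+]$, that is, past the first point $g_{j-1}$ fails to see. In that case $g_{j-1}\notin \I_y(w_j)$, and since $w_j >_x w_{j-1}$, monotonicity forces $\ell(w_j,y) > g_{j-1}$. I expect this interpretation to be the main obstacle. If the midpoint of $[w_j^-,p_{w_j}]$ is literal, I would need to show that $g_{j-1}$ loses sight of $w_j$ anyway. This would follow because $w_j^- $ is itself outside $g_{j-1}$'s reach by the Pass~2 optimality argument in \cref{lem:sandwich-exist}, and visibility from a fixed guard along the terrain beyond its witness is an initial segment. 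I would establish that ``initial segment'' fact first, since both readings rely on it.

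\textbf{Running time.} I charge shadow-point computations to a single left-to-right pointer walking the terrain edges and reflex vertices. Each call $\shpt_y(g_{j-1},w_{j-1})$ starts from the edge containing $w_{j-1}$ and advances monotonically, because $w_{j-1} \le_x w_j$. The grazing reflex vertex is found by the same convex-hull-type scan Pass~1 uses in \cite{DAESCU201922}, which also only moves forward. Hence the total work is $\OO(n+k)$. Since $k \le n$ for monotone mountains (one guard per edge suffices), the total is $\OO(n)$, and each call is amortized $\OO(1)$.
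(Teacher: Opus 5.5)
\textbf{The disjointness step is not closed.} You have found the right step: in the otherwise-branch, the midpoint of $[w_j^-,p_{w_j}]$ lies strictly \emph{before} $p_{w_j}$. The paper's own proof asserts the opposite at exactly this point (``the midpoint of $[w_j^-, p_{w_j}]$ lies at or beyond $p_{w_j}$''), so it cannot be followed there.

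Your fallback for the literal rule, however, contradicts itself.
\begin{itemize}
\item Because $\shpt_y(g_{j-1},w_{j-1})$ is an infimum, $g_{j-1}$ sees every terrain point strictly between $w_{j-1}$ and $p_{w_j}$.
\item In this branch, \cref{lem:interleave} and the strict bound $w_j^-<_x p_{w_j}$ of \cref{lem:sandwich-exist} give $w_{j-1}\le_x w_{j-1}^+<_x w_j^-<_x w_j<_x p_{w_j}$.
\item So $g_{j-1}$ sees $w_j$, and $g_{j-1}=r(w_{j-1},y)\in\I_y(w_{j-1})\cap\I_y(w_j)$.
\end{itemize}
Your claim that $w_j^-$ is outside $g_{j-1}$'s reach is exactly the negation of the lower bound $p_{w_j}\ge_x w_j^-$ proved in \cref{lem:sandwich-exist}. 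Hence, with \textsc{RefineWitnesses} as written, the output violates the conclusion whenever the otherwise-branch fires. No argument can prove the lemma for the literal midpoint rule.

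\textbf{Your alternative reading does not rescue it.} Choosing $w_j$ in $(p_{w_j},w_j^+]$ proves a statement about a different procedure, and even there the decisive step is missing.
\begin{itemize}
\item The ``initial segment'' fact you plan to prove first is false for terrains. A guard on $L$ can lose a valley behind a peak and then see the terrain again beyond it. So $w_j\in(p_{w_j},w_j^+]$ does not give $g_{j-1}\notin\I_y(w_j)$. You would have to pick $w_j$ in the unseen set immediately past $p_{w_j}$, in the spirit of the clone points of \cref{lemma:opt}.
\item Even then, $g_{j-1}\notin\I_y(w_j)$ does not yield $\ell(w_j,y)>_x g_{j-1}$. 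A point can be hidden from $g_{j-1}$ by terrain on its \emph{right}, in which case $r(w_j,y)<_x g_{j-1}$ and its interval can still overlap $\I_y(w_{j-1})$.
\item Your step ``$w_j>_x w_{j-1}$, so monotonicity forces $\ell(w_j,y)>g_{j-1}$'' needs $r(\cdot,y)$ to be monotone along the terrain. That fails in general: a peak with an unobstructed view has a larger right endpoint than a valley just to its right that is blocked by the next rise.
\end{itemize}
You do not need that monotonicity to reduce to consecutive pairs, since $r(w_{i-2},y)<\ell(w_{i-1},y)\le r(w_{i-1},y)<\ell(w_i,y)$ already chains. What you do need, and do not supply, is a geometric reason why the chosen $w_j$ is hidden from $g_{j-1}$ by terrain on its \emph{left}, while still staying inside the sandwich.

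Your running-time argument agrees with the paper's two-pointer charging.
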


\begin{proof}
The sandwich bounds are \cref{lem:sandwich-exist}. For disjointness, each $w_j$ is placed at or beyond $\shpt_y(g_{j-1}, w_{j-1})$: in the easy case since $\ell(w_j^+, y) >_x g_{j-1}$ directly, and in the midpoint case since the midpoint of $[w_j^-, p_{w_j}]$ lies at or beyond $p_{w_j}$, which $g_{j-1}$ fails to see. In either case $\ell(w_j, y) >_x g_{j-1} = r(w_{j-1}, y)$, so the consecutive intervals $\I_y(w_{j-1})$ and $\I_y(w_j)$ are disjoint. Since $w_1 <_x \cdots <_x w_k$ and visibility intervals vary monotonically along the $x$-monotone terrain, disjointness of every consecutive pair yields pairwise disjointness of all $k$ intervals.

For the running time, each call $\shpt_y(g_{j-1}, w_{j-1})$ walks outward along $\partial\mo$ from $w_{j-1}$, advancing a scan pointer until visibility from $g_{j-1}$ fails, exactly the mechanism presented in \cite{DAESCU201922} for locating a shadow point. By $x$-monotonicity and the strict order $w_1 <_x \cdots <_x w_k$, this pointer only advances and never backtracks across the $k$ calls, so every vertex of $\partial\mo$ is charged to at most one shadow-point computation, giving $\OO(n)$ total by the two-pointer argument, that is, amortized $\OO(1)$ per candidate.
\end{proof}




Combining the two passes with the refinement gives this result.
Running the two passes of the linear time algorithm presented in \cite{DAESCU201922} followed by \textsc{RefineWitnesses} produces the guard set $\{g_1^+, \dots, g_k^+\}$ and witness set $\{w_1, \dots, w_k\}$ as follows:
\begin{enumerate}
  \item The Pass-1 guards $\{g_1^+, \dots, g_k^+\}$ form an optimal guard set of size $k$ \cite{DAESCU201922}.
  \item By \cref{lem:refine-correct}, \textsc{RefineWitnesses} returns a witness set $\{w_1, \dots, w_k\}$ of size $k$ on $L$, which lower-bounds the guard number.
  \item Hence, the guard number equals the witness number, both equal to $k$, so the pair certifies optimality (perfectness) of the guard set.
  \item The two passes cost $\OO(n)$ and \textsc{RefineWitnesses} costs $\OO(n)$ (\cref{lem:refine-correct}), for $\OO(n)$ total, improving the $\OO(n^2\log n)$ bound of \cite{DAESCU201922} for producing a certifying witness set.
\end{enumerate}
Thus, we have proved the main result of this section.
\fhpg*


\section{Guarding a \texorpdfstring{$k$}{k}-guardable Monotone Mountain from the Lowest Horizontal Line} \label{sec:corollarykguard}

We now let the altitude line move. Given a $k$-guardable $x$-monotone mountain $\mo$ and an integer $1 \le m \le k$, we compute the minimum height $y^\ast$ at which $m$ guards on the altitude line $L_{y^\ast}$ suffice to guard all of $\mo$, together with the guard positions and $m$ witnesses certifying the count. The procedure runs in $\OO(nk + k^2\log k)$ time and is elementary, improving the $\OO(k^2\lambda_{k-1}(n)\log n)$ bound of Kang, Kim, and Ahn \cite{DBLP:conf/iwoca/KangKA25}, where $\lambda_{k-1}(n)$ is a near-linear Davenport--Schinzel length; our sweep uses no such sequences.

The engine is the fixed-height routine of \cref{sec:fixed-height}. We package the two passes and the refinement as a single subroutine.

\smallskip
\noindent\textbf{\textsc{DaescuTwoPass}$(\mo, L_y)$.}
Run both passes of Daescu's algorithm on $L_y$ to obtain the guards $G_i = \{g_1^+, \dots, g_i^+\}$ and the candidates $\mathcal C_i = \{w_j^-, w_j^+\}_{j=1}^i$, then call $\textsc{RefineWitnesses}(\mathcal C_i, y)$ to obtain the true witnesses $W_i$. Return $(G_i, \mathcal C_i, W_i)$. By \cref{thm:fixed-height}, this costs $\OO(n)$ and, when $i$ is the optimum on $L_y$, yields $i$ optimal guards and a matching size-$i$ witness set.

As the line rises, the visibility interval of a fixed terrain point can only widen, so disjoint intervals eventually overlap, and the guard count drops. To locate the exact heights at which this happens, we track each candidate's interval endpoints as functions of $y$ along fixed lines.

\subsection{Visibility Lines and the Interval Sandwich} \label{subsec:sandwich}

For each of the $2k$ candidate witness points, we construct two lines, giving $4k$ lines in total. By $x$-monotonicity of $\mo$, no two vertices share an $x$-coordinate; in particular, no witness point and its reflex anchor are vertically aligned, so each of the $4k$ lines is non-vertical.

\begin{definition}[Candidate visibility lines] \label{def:candidate-lines}
For $i = 1, \ldots, k$, let $\LA^-_i$ and $\RA^-_i$ (respectively $\LA^+_i$ and $\RA^+_i$) denote the left and right anchors of $w^-_i$ (respectively $w^+_i$), the first reflex vertices on the shortest paths from that witness toward the left and right base endpoints. Define four lines per index $i$:
\begin{alignat*}{2}
  \ell^-_i &: \text{ line through } w^-_i \text{ and } \LA^-_i,
  &\qquad
  r^-_i    &: \text{ line through } w^-_i \text{ and } \RA^-_i, \\
  \ell^+_i &: \text{ line through } w^+_i \text{ and } \LA^+_i,
  &\qquad
  r^+_i    &: \text{ line through } w^+_i \text{ and } \RA^+_i.
\end{alignat*}
At height $y$, write $\ell^-_i(y)$, $r^-_i(y)$, $\ell^+_i(y)$, $r^+_i(y)$ for the $x$-coordinates of their intersections with $L_y$.
\end{definition}

\noindent
Let $\ell_i(y)$ and $r_i(y)$ denote the left and right endpoints of the true visibility interval $\I_y(w_i)$ on $L_y$. The four lines of index $i$ bracket these true endpoints at every height.

\begin{observation}[Interval sandwich] \label{obs:sandwich}
For each $i \in \{1, \ldots, k\}$ and every height $y \ge y_{\min}$,
\[
  \ell^+_i(y) \;\le\; \ell_i(y) \;\le\; \ell^-_i(y)
  \qquad\text{and}\qquad
  r^+_i(y) \;\le\; r_i(y) \;\le\; r^-_i(y).
\]
\end{observation}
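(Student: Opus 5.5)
The plan is to reduce the sandwich to a \emph{non-crossing} property of anchor rays above the terrain, and then read it off at each height $y$. By the terrain analogue of \cref{prop:intervalvisibility}, for every terrain point $w$ and every height $y \ge y_{\min}$, the left endpoint $\ell(w,y)$ is the $x$-coordinate where the ray from $w$ through $\LA(w)$ meets $L_y$, and symmetrically for $r(w,y)$. Inside a monotone mountain the anchors of a point do not depend on $y$, since they are defined by shortest paths to the base endpoints. So $\ell^{\pm}_i(y)$, $r^{\pm}_i(y)$, $\ell_i(y)$ and $r_i(y)$ are all evaluations of fixed non-vertical lines at height $y$. From \cref{lem:interleave} and \cref{lem:sandwich-exist} I will use only the bracketing $w^-_i \le_x w_i \le_x w^+_i$ on $\partial\mo$.

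\textbf{Step 1 (two-point comparison).} I would prove the following. Let $a \le_x b$ be terrain points, and let $\rho_a$, $\rho_b$ be their left-anchor rays, each truncated at the point where it leaves the terrain's visibility cone. Then above the terrain, $\rho_b$ never lies strictly to the right of $\rho_a$ in the direction the sandwich demands. That is, at every height above both anchors, the $x$-order of $\rho_a$ and $\rho_b$ is fixed.

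The argument is by contradiction. Suppose the two rays cross at a point $z$ above $\mo$. Both segments from $z$ back to $a$ and $b$ are tangent to the terrain at the respective anchors. The triangle formed by $z$, $a$ and $b$ then has its terrain side containing a reflex vertex that pokes through one of the two tangent segments. This contradicts that the anchor is the \emph{first} reflex vertex on the shortest path, in the same hill argument as \cref{obs-hill}. The same argument applies verbatim to right-anchor rays.

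\textbf{Step 2 (apply the comparison twice).} Instantiate Step 1 with $(a,b)=(w^-_i,w_i)$ and with $(a,b)=(w_i,w^+_i)$, for both the left and the right anchors. This gives the two chains of inequalities at every $y \ge y_{\min}$. The threshold $y_{\min}$ is exactly the height from which all three points see $L_y$, so all six lines are evaluated where they are genuine interval endpoints.

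The degenerate cases are when $w_i$ coincides with a candidate, or shares its anchor with it. In that case the two lines pass through a common anchor and differ only in their terrain point. The order of the lines then follows from the order of the points along the terrain edge, read through the fixed pivot.

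\textbf{Main obstacle.} The hardest step is Step 1 when the anchors of $a$ and $b$ differ, i.e.\ when a reflex vertex lies on the terrain between $a$ and $b$. Then the two rays pivot around different vertices, and crossing is not excluded by a single-pivot monotonicity argument. My first attempt would be to split the terrain arc $[a,b]$ at the intermediate reflex vertices and chain the single-pivot comparisons. This uses the fact that, as a point slides along one edge, its anchor ray rotates monotonically about a fixed vertex, and that at a switch of anchor the two consecutive rays coincide. That gives a continuous, monotone endpoint function along $[a,b]$, and the sandwich follows. I would also double-check the \emph{direction} of monotonicity (both endpoints moving the same way as stated), since the orientation of the inequalities hinges on it.
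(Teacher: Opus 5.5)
\textbf{Gap: Step~1 is false, and the chaining fix is false too.} Step~2 uses nothing beyond $w^-_i \le_x w_i \le_x w^+_i$. So your argument would show that $w \mapsto \ell(w,y)$ (and likewise $r(w,y)$) is non-increasing along all of $\partial\mo$. That fails whenever an anchoring reflex vertex lies between the two points.

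For a counterexample, take the terrain with vertices $(-2,0)$, $(-1,11)$, $(0,10)$, $(1,10)$, $(1.01,0)$, $(1.09,0)$, $(1.1,10)$, $(2,0)$, and a long altitude segment $L_{20}$.
\begin{itemize}
\item For $a=(0,10)$ the left anchor is $(-1,11)$, and the anchor line meets $L_{20}$ at $x=-10$.
\item For $b=(1.05,0)$, at the bottom of the narrow well, the left anchor is $(1,10)$, and the anchor line meets $L_{20}$ at $x=0.95$.
\end{itemize}
So $a<_x b$ but $\ell(a,20)<\ell(b,20)$. The left-anchor line of $b$ lies strictly to the \emph{right} of that of $a$ at every height above both anchors, which is exactly what Step~1 says cannot happen. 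Non-crossing above the anchors (two lines meet at most once) only fixes \emph{some} order, not the one the sandwich needs.

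Your chaining idea breaks at the same place. Consecutive anchor rays do coincide when the anchor switches in the interior of an edge. But when the point passes over a reflex vertex $t$, its left anchor becomes $t$ itself, and $\ell(\cdot,y)$ jumps to the right. In the example it jumps from $-19$ at $(1,10)$ to $0.99$ just past it on the steep wall. So the endpoint function is neither continuous nor monotone across reflex vertices.

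\textbf{What is missing.} You need the ingredient the paper's proof rests on: $w^-_i$, $w_i$, $w^+_i$ share a common left anchor and a common right anchor. Then each family of three lines passes through one pivot, lying above the three points and below $L_y$. The order of the lines on $L_y$ is the reverse of the points' order, which gives both chains of inequalities.

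Your ``degenerate case'' is therefore not a side case; it is the whole argument. What actually has to be proved is that no reflex vertex acting as an anchor lies between $w^-_i$ and $w^+_i$. This depends on how the two passes generate the candidates, and the $x$-bracketing alone cannot provide it. The paper asserts this shared-anchor property for $i\notin\{1,k\}$ without deriving it, so you would have to supply that derivation, and handle $i\in\{1,k\}$, yourself.

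In the pivot step, do not rely on the points lying on a common edge. The right justification is that each point sees the pivot and $\mo$ is $x$-monotone. Hence the angular order of the points around the pivot agrees with their $x$-order, even when they lie on different edges.
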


\begin{proof}
By \cref{lem:interleave} $w^-_i <_x w_i <_x w^+_i$, and for each $i \notin \{1, k\}$ the three witnesses $w^-_i$, $w_i$, $w^+_i$ share a common right anchor and a common left anchor. Denote the common right anchor by $\RA$. Since the three witnesses are ordered strictly left to right, and all three lie below $\RA$, which lies below $L_y$. Because the three rays from $w^-_i$, $w_i$, $w^+_i$ pass through the common point $\RA$, their order along $L_y$ reverses that at the witness level; hence $r^+_i(y) \le r_i(y) \le r^-_i(y)$. A symmetric argument through the shared left anchor $\LA$ gives $\ell^+_i(y) \le \ell_i(y) \le \ell^-_i(y)$.
\end{proof}

\subsection{Critical Heights: Separation and Merger} \label{subsec:critical-heights}

For consecutive witnesses $w_i$ and $w_{i+1}$, two critical heights bracket the altitude range in which their intervals pass from disjoint to overlapping.

\begin{definition}[Separation height $H_1(i,j)$ and merger height $H_2(i,j)$, $i<j$] \label{def:H1H2}
\[
  H_1(i, j) \;\coloneqq\; y\text{-coordinate of }\; r^-_i \cap \ell^+_{j},
  \qquad
  H_2(i, j) \;\coloneqq\; y\text{-coordinate of }\; r^+_i \cap \ell^-_{j}.
\]
\end{definition}

\begin{figure}[H]
    \centering
    \includegraphics[width=0.9\linewidth]{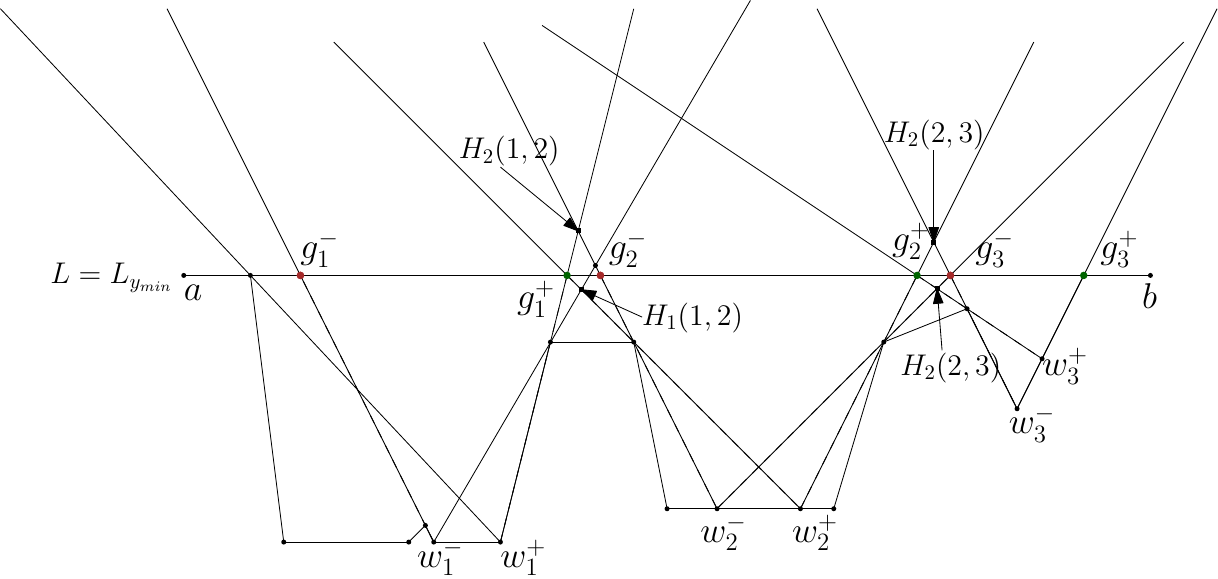}
    \caption{\cref{obs:sep-merge} illustration. The sweep line $L = L_y$ crosses the bracketing lines $\ell^{\pm}_i, r^{\pm}_i$, whose pairwise intersections mark the separation and merger heights $H_1(i,j)$ and $H_2(i,j)$.}
    \label{fig:perfect-pass}
\end{figure}

\begin{observation}[Separation and merger (\cref{fig:perfect-pass})] \label{obs:sep-merge}
For each $i, j \in \{1, \ldots, k-1\}$ with $i < j$:
\begin{enumerate}
  \item[\textup{(a)}] $H_1(i, j) \le H_2(i, j)$.
  \item[\textup{(b)}] For every $y < H_1(i, j)$, the intervals $\I_y(w_i)$ and $\I_y(w_j)$ on $L_y$ are disjoint, so two separate guards are required for $w_i$ and $w_j$.
  \item[\textup{(c)}] For every $y \ge H_2(i, j)$, the guard $g^+_i$ tracked along $r^+_i$ lies within $\I_y(w_j)$, so a single guard sees both $w_i$ and $w_j$.
\end{enumerate}
\end{observation}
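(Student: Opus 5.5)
The plan is to treat all four bracketing lines as affine functions of the height $y$ and compare them along the sweep line $L_y$. By \cref{def:candidate-lines} each of $r^\pm_i$ and $\ell^\pm_j$ is non-vertical, so its $x$-coordinate on $L_y$ is affine in $y$.

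The geometric fact driving everything is monotonicity in $y$. Each $r$-line passes through a witness below its right anchor and continues upward to the right. Each $\ell$-line rises to the left. So $r^\pm_i(y)$ is increasing in $y$, $\ell^\pm_j(y)$ is decreasing, and hence $r^\pm_i(y)-\ell^\mp_j(y)$ is strictly increasing. I would record this first, as a direct consequence of the anchor being above the witness and below $L_y$. This is the same configuration used in the proof of \cref{obs:sandwich}.

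For (a), I would combine this monotonicity with \cref{obs:sandwich}. That observation gives $r^+_i(y)\le r^-_i(y)$ and $\ell^+_j(y)\le \ell^-_j(y)$ at every height. Define $D_1(y)=r^-_i(y)-\ell^+_j(y)$ and $D_2(y)=r^+_i(y)-\ell^-_j(y)$. Then $D_2\le D_1$ pointwise, both are increasing affine functions, and their zeros are exactly $H_1(i,j)$ and $H_2(i,j)$. Since $D_2\le D_1$, $D_1$ reaches zero no later than $D_2$, which gives $H_1\le H_2$.

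For (b), I would take $y<H_1(i,j)$, so $D_1(y)<0$. By \cref{obs:sandwich}, $r_i(y)\le r^-_i(y)<\ell^+_j(y)\le \ell_j(y)$. Since $w_i<_x w_j$, the whole interval $\I_y(w_i)$ lies strictly left of $\I_y(w_j)$, so the intervals are disjoint. \Cref{prop:vis-equiv}, read on $L_y$, then shows that no guard sees both points.

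For (c), I would take $y\ge H_2(i,j)$, so $D_2(y)\ge 0$. The tracked guard $g^+_i=r^+_i(y)$ then satisfies $g^+_i\ge \ell^-_j(y)\ge \ell_j(y)$. I also need $g^+_i\le r_j(y)$. For this I would use $r^+_i(y)\le r^+_j(y)\le r_j(y)$, taking the first inequality from the interleaving of \cref{lem:interleave} and the ordering of rays through right anchors. Hence $g^+_i\in \I_y(w_j)$. Together with $g^+_i\in\I_y(w_i)$, which holds since $r^+_i(y)\le r_i(y)$, a single guard sees both.

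The main obstacle is the sign and monotonicity bookkeeping. One must justify that the lines really move in opposite directions as $y$ grows. In (c), the bound $r^+_i(y)\le r_j(y)$ needs a comparison across different indices, while \cref{obs:sandwich} only compares within one index. I would handle this by noting that $w^+_i<_x w^-_j\le_x w_j$ and that their right-anchor rays cannot cross below $L_y$ without contradicting $x$-monotonicity of $\mo$. If that fails for a pair with different anchors, I would restrict the comparison to the portion above both anchors.
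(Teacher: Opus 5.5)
\textbf{Parts (a) and (b).} The paper states this observation without proof, so your argument has to stand on its own. Parts (a) and (b) do. The bracket lines are affine in $y$, with the $r$-lines moving right and the $\ell$-lines moving left as $y$ grows, and together with \cref{obs:sandwich} this gives exactly $D_2\le D_1$ and $r_i(y)\le r^-_i(y)<\ell^+_j(y)\le\ell_j(y)$. One detail to add in (a): \cref{obs:sandwich} is only asserted for $y\ge y_{\min}$. So before evaluating $D_1(H_2)\ge D_2(H_2)=0$, note that $H_2(i,j)>y_0$. This follows from $D_2(y_0)\le r_i(y_0)-\ell_j(y_0)<0$, since the witness intervals are disjoint at the baseline.

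\textbf{The gap in (c).} The step $r^+_i(y)\le r^+_j(y)\le r_j(y)$ is not justified, and it can fail. Nothing available compares bracket lines of different indices, and the ray-ordering principle you invoke points the wrong way. Rays through a common anchor reverse their order above it; this is how \cref{obs:sandwich} gets $r^+_i\le r_i\le r^-_i$ from $w^-_i<_x w_i<_x w^+_i$. So if $w^+_i<_x w_j$ shared a right anchor, you would get $r_j(y)\le r^+_i(y)$. With distinct anchors, $r^+_i$ and $r_j$ are two non-parallel lines. If $r_j$ is the steeper one, $r^+_i(y)-r_j(y)$ is affine in $y$ with positive slope, so it becomes positive above their crossing. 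That crossing can lie above both anchors and above $L_{y_0}$, so your fallback of restricting to that portion does not help.

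A concrete configuration:
\begin{itemize}
\item Let $w^+_i=(0,0)$ have right anchor $(1,1)$, so $r^+_i(y)=y$.
\item Further right, let a valley point $w_j=(5,0)$ have left anchor $(4,2)$ and right anchor $(6,3)$, so $\I_y(w_j)=[5-y/2,\,5+y/3]$. All terrain to the right of $w^+_i$ stays below the line $y=x$.
\item At $y_0=3.2$ the two intervals are disjoint.
\item The tracked guard $r^+_i(y)=y$ enters $\I_y(w_j)$ at $y=10/3$ but leaves it again at $y=15/2$.
\end{itemize}
For larger $y$, both your inequality and the conclusion of (c), read literally for all $y\ge H_2(i,j)$, fail in this configuration.

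\textbf{What does hold.} \cref{lem:single-step} only invokes (c) at $y=H_2(a,a+1)$, and (c) is true on a bounded range with no cross-index comparison. For $y\ge H_2(i,j)$ you already have $\ell_j(y)\le\ell^-_j(y)\le r^+_i(y)$. For the upper bound, use where the lines meet $L_y$. Each bracket line crosses $L_y$ beyond its anchor, on the side away from its witness. Hence $\ell^-_j(y)<x(\LA^-_j)<x(w^-_j)$ and $r_j(y)>x(w_j)$. At $y=H_2(i,j)$, using $w^-_j\le_x w_j$, this gives $g^+_i=r^+_i(y)=\ell^-_j(y)<x(w^-_j)\le x(w_j)<r_j(y)$, so $g^+_i\in\I_y(w_j)$. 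The same argument covers every $y\ge H_2(i,j)$ with $r^+_i(y)\le x(w_j)$, but not beyond that in general.
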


\subsection{Iterative Height Reduction} \label{subsec:iterative}

\begin{algorithm}[ht]
\caption{Iterative minimum-height multi-guard placement on a $k$-guardable monotone mountain}
\label{alg:iterative-height}
\DontPrintSemicolon
\SetKwInOut{Input}{Input}
\SetKwInOut{Output}{Output}
\SetKwFunction{DaescuTwoPass}{DaescuTwoPass}
\SetKwFunction{BracketLines}{BracketLines}
\Input{A $k$-guardable $x$-monotone mountain $\mo$; an integer $1 \le m \le k$.}
\Output{Height $y^*$, guards $G^* = \{g_1^+, \dots, g_m^+\} \subseteq L_{y^*}$
        guarding $\mo$, and $m$ witnesses $W^* \subseteq \partial\mo$.}
\BlankLine
\SetKwProg{Fn}{Function}{:}{}
\Fn{\DaescuTwoPass{$\mo$, $L_y$}}{
    Run both passes of Daescu's algorithm on $L_y$ to get $G_i \leftarrow \{g_1^+,\dots,g_i^+\}$ and $\mathcal C_i \leftarrow \{w_j^-,w_j^+\}_{j=1}^i$\;
    $W_i \leftarrow \textsc{RefineWitnesses}(\mathcal C_i, y)$\;
    \KwRet{$(G_i, \mathcal C_i, W_i)$}\;
}
\BlankLine
\tcp{--- Initialization ---}
$y_0 \leftarrow y_{\min}$\;
$(G_k, \mathcal C_k, W_k) \leftarrow \DaescuTwoPass(\mo, L_{y_0})$\;
$i \leftarrow k$\;
\BlankLine
\tcp{--- Main loop ---}
\While{$i > m$}{
    \tcp{Build the $4i$ candidate visibility lines from $\mathcal C_i$}
    $\{\ell^-_j, r^-_j, \ell^+_j, r^+_j\}_{j=1}^{i} \leftarrow \BracketLines(\mathcal C_i)$\;
    \tcp{Locate the smallest consecutive merger height}
    $y_1 \leftarrow \displaystyle\min_{1 \le j \le i-1} H_2(j, j+1)$
    \tcp{Recompute exactly on the new baseline}
    $y_0 \leftarrow y_1$\;
    $(G_{i-1}, \mathcal C_{i-1}, W_{i-1}) \leftarrow \DaescuTwoPass(\mo, L_{y_0})$
        \tcp*{\textcolor{blue}{$\OO(n)$; $i-1$ guards by \cref{lem:single-step}(a)}}
    $i \leftarrow i - 1$\;
}
\BlankLine
$y^* \leftarrow y_0$;\quad $G^* \leftarrow G_m$;\quad $W^* \leftarrow W_m$\;
\KwRet{$(y^*, G^*, W^*)$}
    \tcp*{\textcolor{blue}{certifiably minimal by \cref{lem:single-step}(b)}}
\end{algorithm}

We sweep upward by a sequence of exact recomputations. At each stage, we hold two pieces of data on the current baseline: the $i$ Pass-1 guards $g^+_1, \dots, g^+_i$, and the $2i$ candidate witness points $\{w^-_j, w^+_j\}_{j=1}^i$ on $\partial\mo$ that the two passes jointly produce (\cref{subsec:setup}). \cref{alg:iterative-height} drives the sweep, calling \textsc{DaescuTwoPass} once per height.

The reported guards are exactly the Pass-1 positions $g_1^+, \dots, g_i^+$; the auxiliary guards computed inside \textsc{RefineWitnesses} only locate the witnesses and are discarded. Given $i$ guards, $2i$ candidates, and their $4i$ bracket lines on $L_{y_0}$, the loop computes the smallest consecutive merger height $y_1 = \min_{1 \le j \le i-1} H_2(j, j+1)$ in $\OO(i\log i)$ time by a single kinetic-heap scan over the two sorted line families $\{r^+_j\}$ and $\{\ell^-_j\}$, each already ordered by slope since the guards are listed in $x$-order. It then re-invokes \textsc{DaescuTwoPass} at $L_{y_1}$ in $\OO(n)$ time, obtaining fresh guards, candidates, and witnesses recomputed exactly at the new height rather than inferred from the old ones. It sets $y_0 \coloneqq y_1$, $i \coloneqq i-1$, and repeats, stopping after $k - m$ steps when $i = m$, at which point it returns $(y^\ast, G^\ast, W^\ast) \coloneqq (y_0, G_m, W_m)$.

The refinement remains valid not only at each recomputation height but also throughout the open range below the next merger, which certifies the minimality of $y^\ast$.

\begin{lemma}[Height-parametrized construction] \label{lem:height-param}
Suppose $L_{y_0}$ carries $i$ guards, $i$ is the minimum number of guards required on $L_{y_0}$, and $\{\ell_j^-, r_j^-, \ell_j^+, r_j^+\}_{j=1}^i$ are the $4i$ bracket lines built from the $2i$ candidates at $y_0$. Then for every height
\[
  y \;\in\; \Big[\, y_0,\ \min_{1 \le j \le i-1} H_2(j, j+1) \Big),
\]
running \textsc{RefineWitnesses} against the same $2i$ candidates and bracket lines is well defined at every step and produces $i$ points $w_1(y), \dots, w_i(y)$ on $\partial\mo$ whose true visibility intervals on $L_y$ are pairwise disjoint.
\end{lemma}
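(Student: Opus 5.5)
The plan is to transport the inductive argument of \cref{lem:sandwich-exist} and the disjointness argument of \cref{lem:refine-correct} from the single height $y_0$ to every height $y$ in the half-open range $[y_0, y^\dagger)$, where $y^\dagger := \min_{1 \le j \le i-1} H_2(j,j+1)$. The key point is that the data used by \textsc{RefineWitnesses} do not change combinatorially on this range. These data are the ordered candidates $w_j^\pm$, their anchors, and hence the bracket lines. What changes is only the position of the sweep line. So I would fix $y$ in the range and rerun the induction on $j$ with the same invariant $(\star)$: namely $w_{j-1}^- \le_x w_{j-1}(y) \le_x w_{j-1}^+$ and $g_{j-1}(y) \le_x r^+_{j-1}(y)$. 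In the latter, $g_{j-1}(y) = r(w_{j-1}(y),y)$ and the Pass-1 guard is replaced by its tracked position $r^+_{j-1}(y)$ on the line $r^+_{j-1}$.

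First, I would show that \cref{obs:sandwich} applies at every such $y$. The anchors of $w_j^\pm$ and of any point between them are determined by the terrain and not by $y$. Consequently the rays through the shared anchors keep their reversed order on $L_y$, which gives
\[
\ell^+_j(y) \le \ell_j(y) \le \ell^-_j(y), \qquad r^+_j(y) \le r_j(y) \le r^-_j(y)
\]
for every point $w_j(y)$ placed in $[w_j^-, w_j^+]$.

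Second, I would redo the two bounds of \cref{lem:sandwich-exist}. The upper bound $p_{w_j}(y) \le_x w_j^+$ follows as before from monotonicity of the shadow point in the guard argument together with $g_{j-1}(y) \le_x r^+_{j-1}(y)$. The lower bound is where the height restriction enters. Since $y < H_2(j-1,j)$, by \cref{obs:sep-merge}(c) read in the contrapositive direction along the lines, the tracked guard on $r^+_{j-1}$ still lies strictly left of $\ell^-_j(y)$, i.e.\ $r^+_{j-1}(y) < \ell^-_j(y)$. These are two lines that cross exactly at height $H_2(j-1,j)$, and at $y_0$ they are in this order by \cref{lem:interleave}, so by continuity they stay in this order until they cross. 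It follows that no guard at or left of $r^+_{j-1}(y)$ sees $w_j^-$, hence the shadow point of $g_{j-1}(y)$ lies at or beyond $w_j^-$. Strictness is handled by the same general-position argument as in \cref{lem:sandwich-exist}. The midpoint branch is therefore well defined, and $(\star)$ propagates to index $j$.

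Third, disjointness follows verbatim from \cref{lem:refine-correct}. Each $w_j(y)$ lies at or beyond a point that $g_{j-1}(y) = r(w_{j-1}(y),y)$ fails to see, so $\ell(w_j(y),y) > r(w_{j-1}(y),y)$. Monotone ordering of the intervals along the terrain then upgrades consecutive disjointness to pairwise disjointness.

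The main obstacle is the first step. I must justify that the anchors, and hence the bracket lines of the candidates computed at $y_0$, remain the correct anchors for all points in each sandwich throughout the range, so that \cref{obs:sandwich} holds uniformly. I would try to argue this directly: the anchors are defined by shortest paths to the base endpoints. In the mountain these are determined by the terrain's upper hull structure between the point and the line, which is independent of the height. The remaining care is at the boundary indices $j=1$ and $j=i$, where anchors may be base endpoints; there I would treat the missing line as the corresponding end of $L_y$.
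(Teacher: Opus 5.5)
Your outline follows the paper's own proof almost step for step: the same invariant, the upper bound via monotonicity of the shadow point, the lower bound from $r^+_{j-1}$ and $\ell^-_j$ not having crossed below $H_2(j-1,j)$, and disjointness copied from \cref{lem:refine-correct}. The chain does not close, however, and the failures are shared with that proof.

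First, the lower-bound inference runs the wrong way. What $(\star)$ and $y<H_2(j-1,j)$ give you is $g_{j-1}(y)\le_x r^+_{j-1}(y)<_x \ell^-_j(y)=\ell(w_j^-,y)$, i.e.\ $g_{j-1}(y)$ does \emph{not} see $w_j^-$. By definition, $\shpt_y(g,v)$ is the infimum of the points beyond $v$ that $g$ fails to see. Since $w_j^->_x w_{j-1}(y)$ by $(\star)$ and \cref{lem:interleave}, an unseen $w_j^-$ forces $p_{w_j}(y)\le_x w_j^-$, the reverse of what you need. Take $j=2$, where $w_1(y)=w_1^+$ and $g_1(y)=r^+_1(y)$ hold exactly. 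There the sandwich $w_2^-<_x p_{w_2}(y)$ is false at every $y<H_2(1,2)$ where the otherwise-branch fires, since your own general-position step excludes equality.

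Second, the component $g_{j-1}(y)\le_x r^+_{j-1}(y)$ of $(\star)$ is never propagated, and your first step refutes it. The interval sandwich you derive gives $r^+_j(y)\le r_j(y)=g_j(y)$, generically strict in the midpoint case $w_j(y)<_x w_j^+$, because central projection through a shared anchor reverses order. The paper's appeal to ``$r(\cdot,y)$ is monotone along the terrain'' clashes with \cref{obs:sandwich} in the same way. Also, the step you single out as the main obstacle, height-independence of the anchors, is the easy part. What is assumed without argument is that every point of $[w_j^-,w_j^+]$ has the same two anchors.

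More seriously, the midpoint rule cannot give disjointness in exactly the regime you are trying to establish. Suppose the otherwise-branch fires and $w_j^-<_x p_{w_j}(y)$, as you and \cref{lem:sandwich-exist} assert. Then the midpoint $w_j(y)$ of $[w_j^-,p_{w_j}(y)]$ satisfies $w_{j-1}(y)<_x w_j(y)<_x p_{w_j}(y)$. By the definition of the shadow point, $g_{j-1}(y)$ sees every terrain point strictly between $w_{j-1}(y)$ and $p_{w_j}(y)$. Hence $g_{j-1}(y)=r(w_{j-1}(y),y)$ lies in both $\I_y(w_{j-1}(y))$ and $\I_y(w_j(y))$, so consecutive intervals meet.

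Your claim that each $w_j(y)$ ``lies at or beyond a point that $g_{j-1}(y)$ fails to see'' is false for this midpoint. It would not suffice anyway: $p_{w_j}(y)$ itself is seen (visibility regions are closed), and a guard's view of a terrain can resume after a shadow. What the transitivity step actually needs is $\ell(w_j(y),y)>_x g_{j-1}(y)$.

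The paper's proof makes the same two moves (``forcing $p_{w_j}(y)\ge_x w_j^-$'' and ``the midpoint of $[w_j^-,p_{w_j}(y)]$ lies at or beyond $p_{w_j}(y)$''), so consulting it will not close the gap. A correct argument needs three changes. It needs a different placement rule, for instance a point just past $p_{w_j}(y)$ whose interval lies strictly to the right of $g_{j-1}(y)$, in the spirit of the clone points of \cref{lemma:opt}. It needs an invariant consistent with \cref{obs:sandwich}. And both bounds must be re-derived with the inequality directions that the definition of $\shpt$ actually dictates.
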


\begin{proof}
Write $g_j(y) \coloneqq r(w_j(y), y)$ and $g_j^+(y) \coloneqq r(w_j^+, y)$.

\smallskip
\noindent\textit{Upper bound: $p_{w_j}(y) \le_x w_j^+$, for every $j$ and $y \ge y_0$.}
By induction as in \cref{lem:sandwich-exist}: $g_1(y) = g_1^+(y)$, and if $g_{j-1}(y) \le_x g_{j-1}^+(y)$, then since $w_{j-1}(y) \le_x w_{j-1}^+$ and $r(\cdot, y)$ is monotone along the terrain, $g_{j-1}(y) = r(w_{j-1}(y), y) \le_x r(w_{j-1}^+, y) = g_{j-1}^+(y)$. Monotonicity of the shadow-point map in its guard argument then gives $p_{w_j}(y) \le_x \shpt_y(g_{j-1}^+(y), w_{j-1}^+) \le_x w_j^+$. This half never uses the height.

\smallskip
\noindent\textit{Lower bound $p_{w_j}(y) \ge_x w_j^-$, for $y < H_2(j-1, j)$.}
By \cref{def:H1H2}, $H_2(j-1, j)$ is the height at which $r_{j-1}^+$ crosses $\ell_j^-$. For $y < H_2(j-1, j)$ these lines have not yet crossed, so $g_{j-1}^+(y) = r_{j-1}^+(y) <_x \ell_j^-(y) \le_x w_j^-$. With $g_{j-1}(y) \le_x g_{j-1}^+(y)$ this gives $g_{j-1}(y) <_x w_j^-$, so $g_{j-1}(y)$ has not yet reached $w_j^-$, forcing $p_{w_j}(y) \ge_x w_j^-$.

\smallskip
\noindent\textit{Applying the hypothesis on $y$.}
Since $y < \min_{1 \le \ell \le i-1} H_2(\ell, \ell+1) \le H_2(j-1, j)$ for every $j = 2, \dots, i$, the lower bound holds at every step; strictness follows as in \cref{lem:sandwich-exist}. Hence $w_j^- <_x p_{w_j}(y) <_x w_j^+$ whenever the otherwise-branch fires, so every midpoint at height $y$ is well defined and the routine produces $i$ points $w_1(y), \dots, w_i(y)$.

\smallskip
\noindent\textit{Pairwise disjointness on $L_y$.}
Each $w_j(y)$ is placed at or beyond $\shpt_y(g_{j-1}(y), w_{j-1}(y))$: in the easy case because $\ell_j^+(y) >_x g_{j-1}(y)$ directly, and in the midpoint case because the midpoint of $[w_j^-, p_{w_j}(y)]$ lies at or beyond $p_{w_j}(y)$, which $g_{j-1}(y)$ fails to see. In either case, $\ell(w_j(y), y) >_x g_{j-1}(y) = r(w_{j-1}(y), y)$, so the consecutive intervals $\I_y(w_{j-1}(y))$ and $\I_y(w_j(y))$ are disjoint. Since $w_1(y) <_x \cdots <_x w_i(y)$ and visibility intervals vary monotonically along the terrain, disjointness of every consecutive pair yields pairwise disjointness of all $i$ intervals.
\end{proof}

One rise of the baseline to the least merger height drops the guard count by exactly one, and no smaller height achieves the drop.

\begin{lemma}[Single-step reduction] \label{lem:single-step}
Suppose $L_{y_0}$ carries $i$ guards and $2i$ candidate witnesses from two passes of Daescu's algorithm, with $i$ the minimum number of guards required on $L_{y_0}$. Let $y_1 = \min_{1 \le j \le i-1} H_2(j, j+1)$, and let $(a, a+1)$ be the minimizing pair. Then:
\begin{enumerate}
  \item[\textup{(a)}] $i-1$ is the optimum number of guards required on $L_{y_1}$.
  \item[\textup{(b)}] For every $y \in [y_0,\, H_2(a, a+1))$, at least $i$ guards are required on $L_y$.
\end{enumerate}
\end{lemma}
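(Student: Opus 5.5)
Part (b) is essentially already in hand, so I would prove it first and then use it as the lower-bound half of (a). For (b), fix any $y \in [y_0, H_2(a,a+1))$. Since $H_2(a,a+1) = \min_{1\le j\le i-1} H_2(j,j+1)$, this $y$ lies in the range covered by \cref{lem:height-param}. Running \textsc{RefineWitnesses} on the same $2i$ candidates and bracket lines therefore yields $i$ points $w_1(y),\dots,w_i(y)$ whose visibility intervals on $L_y$ are pairwise disjoint. By \cref{prop:vis-equiv}, transported to the altitude line, no guard on $L_y$ sees two of them, so they form a witness set of size $i$. Every guard set on $L_y$ thus has size at least $i$, which is (b).

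For (a) I need two inequalities at $y_1 = H_2(a,a+1)$.

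\emph{Upper bound: $i-1$ guards suffice.} By \cref{obs:sep-merge}(c), at $y \ge H_2(a,a+1)$ the point $g_a^+(y_1) = r_a^+(y_1)$ lies in $\I_{y_1}(w_{a+1})$, so one guard sees both $w_a$ and $w_{a+1}$. I would build an explicit set of $i-1$ guards on $L_{y_1}$ in three parts.
\begin{itemize}
\item For $j<a$, keep the guards $r(w_j^+,y_1)$.
\item Place one guard at $r_a^+(y_1)$ to serve the merged region $a,a+1$.
\item Continue with the Pass-1 greedy rule from there for indices $j > a+1$.
\end{itemize}
Two facts keep the count at $i-1$. First, guard positions and right endpoints only move outward as $y$ increases, since visibility intervals widen with height. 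Second, the Pass-1 greedy on $L_{y_1}$ is optimal \cite{DAESCU201922}. Together these let a comparison with the $i$ Pass-1 regions at $y_0$ show that each later region on $L_{y_1}$ still needs at most one guard, and one region is saved at $(a,a+1)$. Hence the optimum at $y_1$ is at most $i-1$.

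\emph{Lower bound: at least $i-1$ guards are needed.} Delete index $a+1$ and run the argument of \cref{lem:height-param} on the remaining $i-1$ candidate pairs. Its lower-bound step for the pair $(a, a+2)$ needs $y < H_2(a,a+2)$. I would obtain this from $H_2(a,a+1) < H_2(a,a+2)$: the line $\ell_{a+2}^-$ lies to the right of $\ell_{a+1}^-$ at every height in the relevant range, by the interleaving of \cref{lem:interleave} together with \cref{obs:sandwich}. Every other consecutive pair satisfies $y_1 \le H_2(j,j+1)$ with strictness by general position. Hence the $i-1$ refined witnesses have disjoint intervals at $y_1$.

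The main obstacle is the upper-bound step: showing that raising the line to exactly $y_1$ merges precisely one pair and does not cause other previously needed regions to grow and cost an extra guard. I would handle it with the monotonicity of the shadow-point map in its guard argument, as used in \cref{lem:sandwich-exist}. Shifting guards rightward only pushes shadow points rightward, so greedy regions to the right of $a+1$ do not fragment. If that comparison turns out to be delicate, the fallback is to invoke the perfectness of monotone mountains \cite{DAESCU201922} at height $y_1$ and count witnesses instead of guards.
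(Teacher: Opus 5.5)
Your part (b) is the paper's argument. For (a) you take a different route from the paper, which argues that the Pass~1 greedy at $y_1$ undergoes exactly one merger. You instead exhibit $i-1$ guards for the upper bound, and for the lower bound you obtain $i-1$ refined witnesses by deleting index $a+1$ and rerunning \cref{lem:height-param}.

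\textbf{Lower bound.} This plan is more concrete than the paper's ``one merger removes one region''. However, its key inequality $H_2(a,a+1)<H_2(a,a+2)$ does not follow from what you cite. \cref{lem:interleave} orders terrain points, and \cref{obs:sandwich} only compares lines carrying the same index. Neither relates $\ell^-_{a+1}$ and $\ell^-_{a+2}$ above $y_0$, and left endpoints are not monotone along a terrain in general. The inequality is nevertheless true, by minimality of $y_1$ at the \emph{other} neighbouring pair $(a+1,a+2)$. For $y_0\le y<H_2(a+1,a+2)$,
\[\ell^-_{a+1}(y)\le\ell^-_{a+1}(y_0)=g^-_{a+1}\le g^+_{a+1}=r^+_{a+1}(y_0)\le r^+_{a+1}(y)<\ell^-_{a+2}(y).\]
This uses two facts: intervals only widen as the line rises, and $g^-_j\le g^+_j$ for the two greedy passes (every optimal piercing has its $j$-th point in $[g^-_j,g^+_j]$). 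Since $y_1<H_2(a+1,a+2)$ by strict minimality, you get $r^+_a(y_1)=\ell^-_{a+1}(y_1)<\ell^-_{a+2}(y_1)$, which is exactly $y_1<H_2(a,a+2)$.

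\textbf{Upper bound: the genuine gap.} This is the step you flag yourself. \cref{obs:sep-merge}(c) only puts the single point $r^+_a(y_1)$ into the intervals of two particular points. It says nothing about the remaining terrain points of regions $a$ and $a+1$, whose endpoint lines have their own slopes and are not among the $4i$ bracket lines.

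The two facts you lean on (widening, and optimality of the greedy at $y_1$) cannot bridge this. Both hold for any family of closed intervals whose endpoints drift outward linearly in $y$, and there the claim fails. Take three points $A,C,B$, with all unspecified ends fixed:
\begin{enumerate}
\item[(i)] $\I_{y_0}(A)=[0,1]$, right end moving at speed $1$;
\item[(ii)] $\I_{y_0}(C)=[\tfrac12,\tfrac32]$, right end moving at speed $\tfrac1{100}$;
\item[(iii)] $\I_{y_0}(B)=[3,4]$, left end moving at speed $1$.
\end{enumerate}
At $y_0$ the optimum is $2$, Pass~1 picks $w^+_1=A$ and Pass~2 picks $w^-_2=B$, so $y_1=H_2(1,2)=y_0+1$. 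Yet $\I_{y_1}(C)=[\tfrac12,1.51]$ and $\I_{y_1}(B)=[2,4]$ are still disjoint, so two guards are still needed. Your single guard at $r^+_1(y_1)=2$ misses $C$.

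Whatever excludes this must be terrain-specific, and your sketch does not supply it. You invoke shadow-point monotonicity only for regions to the right of $a+1$. The perfectness fallback merely restates the goal as the nonexistence of $i$ pairwise non-co-visible points on $L_{y_1}$, which is a statement about all $i$-tuples of terrain points, not the $2i$ candidates.

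Separately, keeping the guards for $j<a$ at $r(w^+_j,y_1)$, on the tilted lines $r^+_j$, is not justified by widening; the vertical lifts of $g^+_j$ would be. Note also that the paper's proof asserts this same merger in one sentence (``a single guard covers both regions certified separately by $w_a, w_{a+1}$''), so it cannot be imported from there.
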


\begin{proof}
\textit{(a)} Let $i_1 \coloneqq |W_{i-1}|$ be the number of true witnesses returned by \textsc{DaescuTwoPass} on $L_{y_1}$. On any line, Daescu's algorithm returns a guard set whose size equals the maximum number of pairwise non-co-visible points of $\mo$ on that line; since the visibility graph of a monotone mountain restricted to a horizontal line is an interval graph, hence perfect, this equals the optimum guard number there. So $i_1$ is the optimum on $L_{y_1}$, and it suffices to show $i_1 = i-1$.

\emph{$i_1 \le i-1$:} As the base line rises from $L_{y_0}$ to $L_{y_1}$, the visibility interval of every fixed point of $\mo$ can only widen, so $i_1 \le i$. Suppose toward a contradiction $i_1 = i$. By \cref{obs:sep-merge}(c) at $y = y_1 = H_2(a, a+1)$, the point $g^+_a = r^+_a(y_1)$ lies in $\I_{y_1}(w_a) \cap \I_{y_1}(w_{a+1})$, so a single guard covers both regions certified separately by $w_a, w_{a+1}$ at $y_0$. Since Pass~1 advances to a new witness only after the current guard's interval has expired, no new witness opens where $w_{a+1}$ stood; the two regions merge. By minimality of $y_1$ and general position, every other consecutive pair has $H_2(j, j+1) > y_1$ strictly, so no other pair merges at $y_1$. Hence, exactly one merger occurs, and the greedy pass returns at most $i-1$ witnesses, contradicting $i_1 = i$. Thus $i_1 \le i-1$.

\emph{$i_1 \ge i-1$:} A single merger removes exactly one region from the greedy partition. Exactly one merger, $(a, a+1)$, occurs at $y_1$, so starting from $i$ regions at $y_0$ one merger leaves $i-1$; hence $i_1 \ge i-1$. Combining, $i_1 = i-1$, proving (a).

\medskip
\textit{(b)} Fix $y \in [y_0, H_2(a, a+1))$. Since $(a, a+1) = \arg\min_{1 \le j \le i-1} H_2(j, j+1)$, we have $H_2(a, a+1) = \min_{1 \le j \le i-1} H_2(j, j+1)$, so $y$ satisfies the hypothesis of \cref{lem:height-param}. Running the refinement at this height produces $i$ points $w_1(y), \dots, w_i(y)$ with pairwise disjoint true intervals on $L_y$, an independent set of size $i$ in the interval graph of $\mo$ restricted to $L_y$. That graph is perfect, so its minimum clique cover, the minimum guard number on $L_y$, is at least $i$. As $y$ was arbitrary in the range, at least $i$ guards are required throughout; combined with part (a), exactly $i-1$ suffice at $y = H_2(a, a+1)$ itself, identifying it as the precise height at which the guard requirement drops from $i$ to $i-1$.
\end{proof}

\subsection{Correctness and Complexity} \label{subsec:correctness}

Algorithm~\ref{alg:iterative-height} computes $y^\ast$, $G^\ast$, and $W^\ast$ as follows.

\textit{Correctness.}
\begin{enumerate}
  \item By induction on the number of completed steps $t = 0, 1, \dots, k-m$, after $t$ steps the base line $L_{y_0}$ carries exactly $i = k-t$ Pass-1 guards and $i$ true witnesses from $\textsc{DaescuTwoPass}(\mo, L_{y_0})$, with $i$ optimal on $L_{y_0}$.
  \item The base case $t=0$ holds by construction: as $\mo$ is $k$-guardable, $\textsc{DaescuTwoPass}(\mo, L_{y_{\min}})$ returns $k$ Pass-1 guards, and by perfectness of the interval graph this equals the true optimum on $L_{y_{\min}}$.
  \item For the inductive step, \cref{lem:single-step}(a) gives that a fresh call to \textsc{DaescuTwoPass} at $y_1 = \min_{1 \le j \le i-1} H_2(j,j+1)$ has optimum $i-1$ on $L_{y_1}$, restoring the hypothesis at $t+1$.
  \item After $t = k-m$ steps, $i=m$, so $G^\ast = G_m$ guards $\mo$ with $|G^\ast|=m$, and by \cref{lem:refine-correct} the witnesses $W^\ast = W_m$ are $m$ well-defined points on $\partial\mo$ with $w_j^- \le_x w_j \le_x w_j^+$, from the same final call.
\end{enumerate}

\textit{Minimality of $y^\ast$.}
\begin{enumerate}
  \item Let $(a,a+1)$ minimize $H_2(j,j+1)$ at the last step, so $y^\ast = H_2(a,a+1)$.
  \item By \cref{lem:single-step}(b) applied at that step, for every $y \in [y_0', y^\ast)$ (with $y_0'$ the baseline held before the last reduction), at least $m+1$ guards are required on $L_y$; the height-parametrized witnesses of \cref{lem:height-param} furnish an explicit independent set of size $m+1$ for each such $y$.
  \item Hence $m$ guards do not suffice strictly below $y^\ast$ in that range, and $y^\ast$ is the least height at which $m$ guards suffice.
\end{enumerate}

\textit{Complexity.}
\begin{enumerate}
  \item Step $i$ (for $i = k, k-1, \dots, m+1$) invokes \textsc{DaescuTwoPass} once, costing $\OO(n)$ for the two passes plus $\OO(n)$ for \textsc{RefineWitnesses} (\cref{lem:refine-correct}).
  \item Each step also costs $\OO(i\log i)$ for constructing the $4i$ bracket lines and locating $y_1 = \min_{1 \le j \le i-1} H_2(j,j+1)$ via the kinetic scan.
  \item Summing over the $k-m$ steps,
  \[
    \sum_{i=m+1}^{k} \big( \OO(n) + \OO(i\log i) \big)
      \;=\; O\big((k-m)\,n\big) + O\!\Big( \sum_{i=1}^{k} i\log i \Big)
      \;=\; \OO(nk + k^2\log k).
  \]
  \item This improves the $\OO(k^2\lambda_{k-1}(n)\log n)$ bound of \cite{DBLP:conf/iwoca/KangKA25} and uses no Davenport--Schinzel sequences.
\end{enumerate}
Thus, we have proved the main result of this section.
\mhg*


\section{Conclusion}

The {\sc Art Gallery Problem}, in full generality, is one of the most stubborn problems in computational geometry: it is $\exists\mathbb{R}$-complete, resistant to exact algorithms, and acutely sensitive to the coordinates of the polygon's vertices. We asked a simple question: what happens when guards cannot roam freely, but are anchored to a single edge, a shoreline, a rail, a fixed base? The answer turned out to be clean.

The first insight is that this restriction does more than simplify the problem; it changes the problem's structure. For a weak visibility polygon $\wv$ with base $\eb$, guarding the boundary $\bd(\wv)$ is the same as guarding the entire polygon. This is not obvious, since the interior of a $\wvp$ can hide deep pockets and intricate shadow regions. A simple triangle argument settles it: every interior point lies behind some already-covered boundary point, so the base guard watching that boundary point sees the interior point too. With this reduction in hand, a continuous two-dimensional coverage problem collapses to a one-dimensional boundary-covering task.

The second insight is geometric. Every boundary point $p$ carries a natural \emph{visibility interval} $\I(p)$ on $\eb$, fixed by just two reflex vertices: the first obstacles on the shortest paths from $p$ toward the two endpoints of the base. A guard on $\eb$ sees $p$ if and only if it lies in $\I(p)$, which turns a two-dimensional visibility question into a one-dimensional one: rather than reason about line-of-sight geometry in the plane, we pierce intervals on a line. These intervals form a \emph{visibility interval graph}, and interval graphs are perfect, a classical fact with a strong consequence: the largest witness set and the smallest clique cover have equal size. The guard lower bound, from witnesses, and the guard upper bound, from an actual guard set, are thus achieved together, and the problem becomes self-certifying.

For vertex guards, this gives a complete answer: an $\OO(n\log n)$ algorithm built from the linear-time shortest-path trees by the authors in \cite{GUIBAS1989126} and a minimum clique cover of the resulting interval graph, matched by an $\Omega(n\log n)$ lower bound via a reduction from Sorting. Vertex guarding is therefore settled at $\Theta(n\log n)$. For full polygon guarding, the story is richer because boundary points other than vertices can spawn witnesses that were invisible at the start. The Witness-Guard Algorithm meets this by expanding its candidate set as it runs: each time a guard is placed, rays through the reflex vertices it sees are shot to the opposite boundary, generating new candidate witnesses. Its greedy rule, pierce the uncovered interval with the leftmost right endpoint, is optimal by the perfectness of interval graphs, and the expansion rule leaves no shadow region undetected. Two theorems close the loop: Optimality shows the algorithm never places a superfluous guard, and Completeness shows it never misses a corner. Together they give an exact $\OO\bigl((n + \OPT\cdot\rho)\,(\log n + \log\OPT)\bigr)$ algorithm for a problem that is $\exists\mathbb{R}$-complete in the unconstrained setting.

As a bonus, the same machinery resolves an open question of \cite{DAESCU201922} about monotone mountains: given a $k$-guardable mountain and a target $m \le k$, find the lowest altitude at which $m$ guards on a horizontal line suffice. Kang, Kim, and Ahn \cite{DBLP:conf/iwoca/KangKA25} answered it by tracking how the visibility intervals move as the altitude rises, which led them to Davenport--Schinzel sequences and an $\OO(k^2\lambda_{k-1}(n)\log n)$ bound. We take a different route. Two passes of Daescu's algorithm bracket each true witness between a left and a right candidate, producing $4k$ lines whose crossings are exactly the heights at which one guard's coverage absorbs another's. An upward sweep locates these merger heights, and a single exact recomputation at each of the $k-m$ drops yields the guards together with their certifying witnesses in $\OO(nk + k^2\log k)$ time, with no Davenport--Schinzel machinery.

\medskip
The picture is not yet complete. Two directions seem especially worth pursuing.
\begin{itemize}
    
    \item The most immediate concern is the runtime. Can full polygon guarding be solved in $\OO(\OPT + n\log n)$ time, matching the vertex-guard bound? The present bottleneck is the per-iteration cost of sorting candidate intervals and expanding the witness set, together $\OO(n\log n)$ per guard placed. An incremental structure that updates this information as guards and witnesses arrive, rather than rebuilding it, could close the gap.

    \item The second is structural, and concerns where guards may stand. What changes if guards may lie anywhere on the boundary $\bd(\wv)$, not only on $\eb$? The interval representation breaks down because a guard on the upper chain has no clean interval on the base, and entirely new geometric ideas seem necessary.

\end{itemize}

\paragraph*{Declaration on the use of AI.}
The authors used Claude (Anthropic) to assist with drafting and revising portions of the manuscript text. All mathematical results, proofs, and technical content are the authors' own.

\printbibliography

\end{document}